\documentclass[a4paper]{article}

\usepackage{a4wide}
\usepackage[T1]{fontenc}
\usepackage{amsmath,amssymb,amsthm}
\usepackage{graphicx}
\usepackage{enumitem}
\usepackage{tcolorbox}
\usepackage{hyperref}

\usepackage{mathrsfs}
\usepackage{cmll}
\usepackage{etex}
\usepackage{amsmath} 
\usepackage{amssymb}
\usepackage{amsfonts}
\usepackage{mathtools} 
\usepackage{stmaryrd}
\SetSymbolFont{stmry}{bold}{U}{stmry}{b}{n}
\usepackage{mathpartir}
\usepackage{xargs}
\makeatletter
\define@key {mprset}{style}[1]{\def\TirNameStyle{#1}}
\makeatother
\mprset{style={\scriptsize\scshape}}
\usepackage{etoolbox}
\usepackage{tikz,tikz-cd}
\usetikzlibrary{arrows,arrows.meta,positioning}
\usepackage{url}
\usepackage{subcaption}
\usepackage{graphicx}
\usepackage{caption}
\usepackage[export]{adjustbox}

\usepackage{thmtools}

\usepackage{multirow} 
\usepackage{makecell} 
\newcounter{number}
\newenvironment{varitemize}
{
	\begin{list}{\labelitemii}
		{\setlength{\itemsep}{0.0mm}
			\setlength{\topsep}{0.0mm}
			\setlength{\parindent}{0.0mm}
			\setlength{\parskip}{0.0mm}
			\setlength{\parsep}{0.0mm}
			\addtolength{\leftmargin}{-\labelsep}}}
	{
	\end{list}
}

\usepackage[draft]{optional}
\usepackage{comment}
\usepackage[colorinlistoftodos,prependcaption,textsize=tiny]{todonotes}
\usepackage{xcolor}
\usepackage[all]{xy}
\usepackage{braket}
\usetikzlibrary{fit}

\usepackage{graphicx}
\graphicspath{  }

\makeatletter
\newcommand{\oset}[3][0ex]{%
	\mathrel{\mathop{#3}\limits^{
			\vbox to#1{\kern-2\ex@
				\hbox{$\scriptstyle#2$}\vss}}}}
\makeatother

\newcommand{\lis}[1]{\langle #1 \rangle} 

\newcommand{\den}[1]{\llbracket #1 \rrbracket} 

\newcommand{\eval}{\mathbf{eval}} 

\newcommand{\dlambda}{\boldsymbol{\lambda}}

\newcommand{\id}{\mathrm{id}} 

\newcommand{\ISet}{\mathbf{ISet}}
\renewcommand{\Set}{\mathbf{Set}}

\newcommand{\Store}{\mathrm{St}}

\newcommand{\supp}[1]{\mathbf{supp}{(#1)}}

\newcommand{\BB}{\mathbb{B}}
\newcommand{\Str}{\mathbb{S}}

\newcommand{\UU}{\mathbb{U}}

\newcommand{\N}{\mathbb{N}}

\newcommand{\lamBLL}{{\lambda}\mathbf{BLL}}

\newcommand{\Val}{\mathcal{V}}

\newcommand{\letin}[2]{\mathtt{let}\,#1\,\mathtt{in}\, #2}
\newcommand{\ifthen}[3]{\mathtt{if}\,#1\,\mathtt{then}\, #2\,\mathtt{else}\; #3}
\newcommand{\loopp}[3]{\mathtt{loop}\,#1\,#2\,\mathtt{times \, from}\, #3}
\newcommand{\setr}[2]{\mathtt{set}\,#1\, #2}
\newcommand{\return}{\mathtt{return}\,}
\newcommand{\der}{\mathtt{der}}
\newcommand{\get}{\mathtt{get}\,}

\newcommand{\true}{\mathbf{t}}
\newcommand{\false}{\mathbf{f}}

\newcommand{\typeof}[1]{\mathsf{typeof}(#1)}
\newcommand{\Fun}{\mathcal{F}}
\newcommand{\algof}{\mathsf{alg}}
\newcommand{\random}{\mathtt{random}}
\newcommand{\equals}{\mathit{equal}}
\newcommand{\xor}{\mathtt{xor}}

\newcommand{\flipcoin}{\mathit{flipcoin}}

\newcommand{\subst}[1]{#1}

\newcommand{\vdashc}[1][]{\vdash^{#1}_{c}}
\newcommand{\vdashv}[1][]{\vdash^{#1}_{v}}

\newcommand*{\Comp}[4]{\Lambda_{#1}(#2;#3; #4)}
\newcommand*{\CompC}[3]{\Lambda_{#1}(#2;#3)}
\newcommand*{\Values}[4]{\V_{#1}(#2;#3; #4)}
\newcommand*{\ValuesC}[3]{\V_{#1}(#2;#3)}

\newcommand*{\CompO}[3]{\Lambda_o(#1 #2; #3)}

\newcommand*{\ValuesO}[3]{\V_o(#1 #2; #3)}
	
\newcommandx*{\smstep}[3][1={}, 2={},3={}]{\mathbf{sm}^{#1;#2}_{#3}}

\newcommandx*{\itsmstep}[3][1={}, 2={},3={}]{\mathbf{ism}^{#1;#2}_{#3}}

\newcommand{\cost}{\mathbf{cost}}

\newcommand{\BPP}{\ensuremath{\mathsf{BPP}}}
\newcommand{\ZPP}{\ensuremath{\mathsf{ZPP}}}

\newcommand{\D}{\mathbf{D}} 
\newcommand{\Dist}{\mathscr{D}} 
\newcommand{\T}{\mathbf{T}}

\newcommand{\V}{\mathcal{V}}

\newcommand{\ie}{\emph{i.e.}\,}

\newcommand{\bindsymbol}{\scalebox{0.5}[1]{$>\!>=$}}
\newcommand{\bind}{\mathrel{\bindsymbol}}

\usetikzlibrary{decorations.markings}
\tikzset{module/.style={%
		decoration={markings,%
			mark= at position 0.5 with {%
				\node[] (tempnode) {};
				\draw[-,shorten <=.1em,shorten >=.1em,inner sep=1pt]
				(tempnode.north) -- (tempnode.south);
			}
		},
		postaction={decorate}
	}
}

\def\typstr#1{\mathbb{S}[#1]}
\def\typbool{\mathbb{B}}

\def\Gen{\mathit{Gen}}
\def\Enc{\mathit{Enc}}

\def\typmsgprivk{\typstr{i}}

\def\typkeyprivk{\typstr{i}}
\def\typunit{\mathbb{U}}
\def\lollipop{\multimap}
\def\tensor{\otimes}

\def\unusedvar{-}

\def\emptycontextCPA{\varnothing}

\def\FV#1{\mathit{FV}(#1)}

\newcommand{\giulia}[1]{}
\opt{draft}{
   \renewcommand{\giulia}[1]{\todo[color=green!30]{#1 -- G} \PackageWarning{TODO}{#1 -- G}}
}

\newcommand{\ugo}[1]{}
\opt{draft}{
   \renewcommand{\ugo}[1]{\todo[color=blue!40]{#1} \PackageWarning{TODO}{#1}}
}

\def\polyInferRule#1{\mathit{poly}\left(\raisebox{-0.7em}{\ensuremath{#1}}\right)}
\def\polyStarInferRule#1{\mathit{\underline{poly}}\left(\raisebox{-0.7em}{\ensuremath{#1}}\right)}

\def\poly#1{\mathit{poly}(#1)}
\def\polyStar#1{\mathit{\underline{poly}}(#1)}

\def\sizeM#1{\mathbf{size}(#1)}
\def\sizeV#1{\mathbf{size}(#1)}

\def\sizeZ#1{\mathbf{size}(#1)}

\def\sizeG#1{\mathbf{size}(#1)}
\def\sizeP#1{\mathbf{size}(#1)}

\def\enc#1{\ulcorner #1\urcorner}

\def\tapeAlph{\Psi}
\def\blank{\square}

\def\ConfSet{\underline{\mathit{Conf}}} 
\def\typTape{\mathit{Tape}}

\def\typConf{\mathit{Conf}}
\def\typState{\mathit{State}}
\def\typSymb{\mathit{Symbol}}
\def\typDir{\mathit{Direction}}

\def\transitionTerm#1{\mathit{Transition}_{#1}}
\def\writeSymbTerm{\mathit{WriteSymb}}
\def\moveLTerm{\mathit{MoveLeft}}
\def\moveRTerm{\mathit{MoveRight}}
\def\evalTerm{\mathit{Eval}}
\def\moveHeadTerm{\mathit{MoveHead}}
\def\stepPTMTerm{\mathit{Step}}
\def\encodeInitialStateTerm{\mathit{EncodeInitialState}}
\def\fillBlankStringTerm{\mathit{FillBlankString}}
\def\fillBlankTapeTerm{\mathit{FillBlankTape}}
\def\encodeInputStringTerm{\mathit{EncodeInput}}
\def\initInputTapeTerm{\mathit{InitInputTape}}
\def\initTerm{\mathit{Init}}
\def\decodeOutputStringTerm{\mathit{DecodeOutput}}
\def\finalTerm{\mathit{Final}}

\def\polyInferRule#1{\mathit{poly}\left(\raisebox{-0.7em}{\ensuremath{#1}}\right)}

\def\poly#1{\mathit{poly}(#1)}

\newenvironment{sproof}{%
	\begin{proof}%
	}{%
	\end{proof}%
}

\usepackage{amsmath}
\usepackage{amsthm}
\usepackage{amssymb}

\theoremstyle{plain}
\newtheorem{theorem}{Theorem}[section]
\newtheorem{lemma}[theorem]{Lemma}
\newtheorem{corollary}[theorem]{Corollary}
\newtheorem{proposition}[theorem]{Proposition}
\theoremstyle{definition}
\newtheorem{definition}[theorem]{Definition}

\title{(Implicitly) Characterizing Probabilistic Polynomial Time
	as Used in Cryptography\thanks{%
		The first author is partially supported by the ANR Project HOPR
		(ANR-24-CE48-5521-01).}}

\author{%
	Ugo Dal Lago\footnote{University of Bologna, Italy, \texttt{ugo.dallago@unibo.it}}\\
	\and
	Giulia Giusti\footnote{Independent Researcher, Italy, \texttt{giustigiulia97@gmail.com}}}

\date{}

\begin{document}
	
\maketitle

\newcommand{\LambdaBLL}{\ensuremath{\lambda\textbf{BLL}}}

\begin{abstract} 
In this paper, we show how a recently introduced $\lambda$-calculus for 
randomized computation, called $\LambdaBLL$, can accurately capture the 
notions of first- and second-order polytime computations used in cryptographic 
definitions and proofs. These results, spelled out in the style of so-called \emph{implicit 
computational complexity}, imply that $\LambdaBLL$ is sufficiently powerful to express 
experiments and adversaries as used in cryptography, and that reductions, when 
formulated within the calculus, are guaranteed to be complexity-preserving. Remarkably, 
while polynomial-time \emph{soundness} holds for the full calculus, polynomial-time 
\emph{completeness} already follows in presence of a minimal finite set of basic functions. 
\end{abstract}

\section{Introduction}
\begin{sloppypar}
	Implicit computational complexity (ICC) aims to characterize complexity classes 
	using logical systems and programming languages, thereby avoiding explicit 
	reference to machines and refraining from specifying resource bounds 
	directly~\cite{BellantoniCook92,LeivantMarion93,Hofmann1999,DalLago2011}. 
	These two aspects, central in the standard definition of most complexity classes,
	make such definitions less suitable for program verification. The 
	counterparts to complexity classes provided by ICC have in fact motivated the 
	development of several verification methodologies for reasoning about program 
	complexity, such as amortized analysis~\cite{HoffmannHofmann2010,HAH2012,HLOMZ2022} 
	and restricted forms of path orders~\cite{BonfanteMarionMoyen2011,AvanziniMoser2013}. Naturally, 
	verifying that programs satisfy complexity bounds is a difficult but relevant 
	problem, which becomes particularly critical in those
	contexts where resources are scarce, as is the case in quantum 
	computing~\cite{ColledanDalLago2025}.
\end{sloppypar}

Another domain where complexity plays a fundamental role is the \emph{computational 
model} to cryptography~\cite{GoldwasserMicali,katz2020introduction}. 
In this setting, 
security is spelled out as a form of (approximate) 
resistance to attacks performed by randomized, \emph{complexity-bounded} adversaries. 
Proofs are typically established via reductions, since unconditional results 
are rare. It is therefore essential that reductions preserve complexity: given 
an adversary breaking the claim, one must construct an adversary against the 
assumption \emph{and} prove that the complexity of the latter is \emph{not} substantially 
higher than the complexity of the former. Calculi and tools designed to model 
such proofs must, accordingly, incorporate mechanisms to reason about 
computational complexity. Most tools and languages designed to support the construction of formal 
cryptographic proofs, however, do not address complexity constraints
explicitly~\cite{BGHB2011,BLS2020}, typically assuming 
that the underlying reduction is complexity-preserving, although some attempts 
in this direction have recently appeared in the literature~\cite{BBGKS2023}. Noticeably, the 
issue can be circumvented by appealing to computationally complete
symbolic models~\cite{BC2012}, in which the adversary's 
complexity is, in some sense, taken care of by the underlying logic.

It is natural to ask whether ICC techniques can be applied 
to the characterization of probabilistic polynomial time as used in 
cryptography, thereby ensuring that all programs, particularly those 
representing adversaries and the reductions between them, \emph{are} efficient or 
\emph{preserve} efficiency. The literature offers a number of attempts in this 
direction, to which we shall return in Section \ref{sect:conclusion}. In none of these cases, 
however, has a precise correspondence with polynomial time \emph{as used by 
	cryptographers} been established.

But in what sense does polynomial time as used by cryptographers differ from 
probabilistic polynomial time as traditionally defined in computational 
complexity theory, e.g., via the classes \BPP\ or \ZPP? The main difference lies 
in \emph{what} the complexity class contains, and in how the \emph{input size} is defined. 
For standard complexity classes, which are classes of functions on binary strings,
the aforementioned parameter is simply the number of bits required to represent the input. 
In the cryptographic setting, however, one rather considers the 
so-called \emph{security parameter}: a publicly known natural number that measures 
key length, and with respect to which all exchanged data must have polynomially 
bounded size. This distinction is only mildly visible at first order, but 
becomes substantially more pronounced in the presence of higher-order 
functions, which are pervasive in cryptography, as we will discuss in Section 
\ref{sect:soundcomplete} below.

The goal of this work is precisely to investigate the expressive power of one of 
the calculi supporting cryptographic proofs, namely 
$\lamBLL$~\cite{aplas}. More specifically, we show that this calculus captures 
\emph{random functions} computable in probabilistic polynomial time which, arguably, 
corresponds to what cryptographers implicitly rely on in definitions and 
proofs. This correspondence holds not only at first order but also at second 
order. In the first-order case, the correspondence applies to the calculus as 
introduced in \cite{aplas}, and it is established in Section 
\ref{sect:soundcomplete}. For the second-order setting, some small modifications to 
the formulation are required, and these refinements are discussed in Section 
\ref{sect:secondorder}. The notion of polynomial time that we adopt is 
introduced in Section \ref{sect:soundcomplete}, while the calculus and its 
semantics are presented in Section \ref{sect:calculus}.  
In Section \ref{sec:cryptography}, we briefly explain, without going into technical 
details, how $\lamBLL$ can be helpful when carrying out cryptographic proofs. 
Before delving into the technical details of the calculus, this manuscript 
devotes a few introductory paragraphs in the next section to the concept of efficiency 
typically used in cryptography, motivating the need for introducing new complexity classes.
\section{About Complexity and Cryptography}\label{sect:birdseye}
In this section, we address the following question: \emph{what do the efficient 
algorithms studied in cryptography actually compute?} One could of course give 
the obvious answer: such algorithms compute functions over the set of binary 
strings, i.e., over $\{0,1\}^*$. But is this an accurate description? If we 
examine the structure of cryptographic primitives (e.g., pseudorandom 
generators~\cite{BlumMicali1984}) and of the distinguishers that act upon them, we realize that
what they compute is \emph{a family} of functions $\{f_n\}_{n \in \mathbb{N}}$, where
\(f_n : \{0,1\}^{p(n)} \rightarrow \{0,1\}^{q(n)}\),
and $p$ and $q$ are polynomials. The entire family is required to be 
computable either by a \emph{single} algorithm running in time polynomial in $n$, or by 
a family of Boolean circuits of polynomial size, again in $n$. When the 
algorithm or circuit has access to a source of randomness, one may assume, following the 
definition of \textsf{BPP}, that the error is bounded, or
that every $f_n$ outputs not a \emph{single} element of 
$\{0,1\}^{q(n)}$ but rather a \emph{probability distribution} over that same set, i.e.,
\begin{equation}\label{equ:firstorder}
	f_n : \{0,1\}^{p(n)} \rightarrow \mathbf{D}(\{0,1\}^{q(n)}),
\end{equation}
where $\mathbf{D}(X)$ is the set of distributions with support $X$.
It is clear that this picture does not deviate significantly from the 
mainstream of complexity theory; nevertheless, it already illustrates why most 
ICC characterizations of polynomial time are not directly applicable in this 
setting. The point is precisely that the functions involved are, in the 
terminology of \cite{SvEK2009}, \emph{shapely}.

The first contribution of this paper consists, precisely, in showing that first-order typed terms 
of $\lamBLL$, as introduced in \cite{aplas}, compute all and only 
those (families of) functions of the form (\ref{equ:firstorder}) that are computable 
by a probabilistic Turing machine (PTM in the following) running in time polynomial in $n$. This is 
proved by showing, on the one hand, that terms of the calculus can be 
evaluated in polynomial time, and on the other hand, that every probabilistic 
Turing machine running in polynomial time can be simulated in $\lamBLL$. 
The interest of the completeness result lies in the fact that, while $\lamBLL$ in principle
provides function symbols for \emph{any} function computable in polynomial time, 
the encoding of PTMs requires only a handful such functions, e.g. 
circular shifts, bit reads and bit writes.

But what if we consider \emph{second-order}  
adversaries, such as distinguishers for pseudorandom functions~\cite{GMM1986} or adversaries 
against message authentication codes~\cite{BCK96}? In both cases, such adversaries are 
efficient algorithms with access to an \emph{oracle} computing a first-order function, 
and they return a string or even a boolean value. We could therefore think that 
such adversaries compute functions of the form
\(f: (\{0,1\}^*\rightarrow\{0,1\}^*)\rightarrow \{0,1\}^*\).
Computational complexity offers various definitions of efficiency for \emph{second-order} 
functions~\cite{Mehlhorn1976,KapronCook1996}. 
For instance, it would be natural to require that such a function be 
computed by a Basic Feasible Functional~\cite{KapronCook1996}, or by a higher-order adaptation thereof. 
Again, however, this does not exactly correspond to what cryptographers actually assume. 
In the cases mentioned above, in fact, cryptographers assume that the adversary 
computes a family of functions $f_n$, where each of these functions now has the following 
form:
\begin{equation}\label{equ:secondorder}
	f_n : (\{0,1\}^{p(n)} \rightarrow \mathbf{D}(\{0,1\}^{q(n)}))\rightarrow \mathbf{D}(\{0,1\}^{r(n)}),
\end{equation}
Such a family of functions, indexed by the security parameter $n$, is efficiently computable 
when there exists a probabilistic \emph{oracle} Turing machine running in time polynomial in $n$ which, 
given as input the security parameter represented in unary form, interacts with the oracle 
computing an arbitrary function $g_n$ in the manner prescribed by the polynomials $p$ 
and $q$, and then returns the string $f_n(g_n)$. In Section \ref{sect:secondorder}  
we will describe in what sense $\lamBLL$ also characterizes these functions through second-order terms.

From a technical perspective, then, the contribution of this paper consists of soundness 
and completeness results for $\lamBLL$ with respect to novel classes of polynomial-time random functions
we introduce below and which capture the notion of efficiency
arising in cryptographic practice. Why are these results interesting, and why are they useful when
considering $\lamBLL$ as a formal framework for cryptographic proofs?

\begin{sloppypar}
First, the aforementioned results guarantee that, when $\lamBLL$ is used to express cryptanalytic attacks, 
\emph{any} attack described by a $\lamBLL$ term is guaranteed to run in polynomial time. There is more:
completeness ensures that the language is expressive enough to represent \emph{every} such 
polynomial-time attack.  There is, however, an indirect but ultimately more significant reason. 
Security proofs for cryptographic primitives are typically carried out by reduction: one constructs 
a hypothetical attack against an assumed-secure primitive from an equally hypothetical attack 
against the primitive whose security is to be established. A crucial requirement is that the reduction 
has to be \emph{complexity preserving}. Once again, both soundness and completeness play a fundamental role 
in establishing this property, as we will explain in more detail in Section~\ref{sec:cryptography} below.
\end{sloppypar}
\section{The Syntax and Operational Semantics of $\lamBLL$}\label{sect:calculus} 
The language $\lamBLL$ was introduced in \cite{aplas}, drawing inspiration from Bounded 
Linear Logic~\cite{girard_bounded_1992}, graded $\lambda$-calculi~\cite{reed_distance_2010} 
and Levy's CBPV~\cite{levy2012call,ehrhard2019probabilistic}. We briefly recall its key features here. 

\subparagraph*{Types}
At the type level, $\lamBLL$ adopts a graded linear typing discipline, whereas
grades are given by polynomials and are used to control the computational power of terms, see Figure \ref{fig:syntax}, left. 
These polynomials are formed from positive natural numbers in $\N_{\ge 1}$, addition, and multiplication, 
and also involve a distinguished variable $i$ representing the \emph{security parameter}, enabling 
reasoning over families of types and terms indexed by $i$, as in $\mathbf{BLL}$ \cite{girard_bounded_1992}.  In the following, $\N_{\ge 1}[i]$
stands for the set of all such polynomials. 
\begin{figure}[h]
	\centering 
	\scriptsize
	\begin{tcolorbox}[boxrule=0.3pt, colback=white]
	\begin{align*} 
		&\hspace*{-2mm}\textbf{Ground types}  && G ::= \UU \mid \BB \mid \Str[p] &\textbf{Ground values} \;& W ::= \star \mid \true \mid \false \mid s  \\
		& \hspace*{-2mm} \textbf{Positive types} \; && P ::= G \mid P \otimes P \mid \oc_p A &\textbf{Positive values} \; & Z ::= x \mid W\mid \lis{Z,Z} \mid \oc M \\
		&\hspace*{-2mm} \textbf{Types} \; && A ::= P \mid P \multimap A &\textbf{Values}  \;  &\Val \ni V ::= Z \mid \lambda x . M\\
		& &&	&\textbf{Computations}   \; & \Lambda \ni M ::=  \return V \\
		&\hspace*{-2mm}  \textbf{Variable contexts} && \Gamma ::= \varnothing \mid x :P, \Gamma & & 
		\hspace*{-8mm}\mid  \der (Z) \mid MZ\mid \letin{x=N}{M}
		     \\
		&\hspace*{-2mm}\textbf{Reference contexts} && \Theta ::= \varnothing \mid r : G , \Theta & & \hspace*{-8mm} \mid  f_{p}(Z_1, \dots, Z_m)  \mid 	\setr{r}{Z}  \\  
		& && & &  \hspace*{-8mm} \mid \ifthen{Z}{M}{N} \mid	 \get r \\  
		& \hspace*{-2mm} \textbf{Polynomials} \; && p ::= 1 \mid i \mid p + p \mid p\times p 
		& &  \hspace*{-8mm} \mid \letin{\lis{x,y}=Z}{M}   \\
		& && & & \hspace*{-8mm} \mid  \loopp{V}{p}{M}  \\
	\end{align*}
	\end{tcolorbox}
	\caption{Syntax of $\lamBLL$}
	\label{fig:syntax}
\end{figure}

The base types include the unit type $\UU$, booleans $\BB$, and binary strings 
$\Str[p]$ whose length is determined by a polynomial~$p\in\N_{\ge 1}[i]$. The system distinguishes between positive
types and general types, \emph{à la} CBPV~\cite{levy2012call,ehrhard2019probabilistic}, to restrict 
function arguments to be of \emph{positive} type. In addition to a tensor product and a linear function space, 
the system features a graded modality allowing the construction of a positive type $!_p A$ from an arbitrary type 
$A$. This modality makes it possible, in particular, to construct higher-order types. Observe that polynomials 
over $\mathbb{N}_{\ge 1}[i]$ may occur in types. It is therefore natural to consider the operation of substituting 
every occurrence of the variable $i$ in a type $A$ with a polynomial $p$. Since there is only one variable around, 
we write the resulting type as $Ap$, rather than using the more explicit notation $A[i \leftarrow p]$.

\subparagraph*{Terms}
At the term level, instead, $\lamBLL$ extends a linear $\lambda$-calculus in CBPV style~\cite{ehrhard2019probabilistic}
with two key features, essential for modelling 
cryptographic primitives: \emph{function symbols} representing probabilistic polynomial-time computations over 
strings, and \emph{global references}, which enable the representation of shared mutable state across different 
components of an experiment. The grammar of ground and positive values, together with that of
computations, is in Figure \ref{fig:syntax}, right.

But where do the function symbols come from? 
Formally, we  fix a set of function symbols $\Fun$, each function symbol $f$ in $\Fun$ coming with:
\begin{varitemize}
	\item[$\bullet$] a \emph{type}, denoted $\typeof{f}$, of the form $G_1 \times \dots \times G_m \to G$ where $G_1, \dots, G_m$ and $G$ are ground types;
	\item[$\bullet$] for every polynomial $p$ in  $\N_{\geq 1}[i]$, a \emph{term constructor} $f_{p}$ of arity $m$.
\end{varitemize} 
For example, we have the nullary function symbol $\random$, with $\typeof{\random} : \Str[i]$, and the binary function symbol $\xor$, with $\typeof{\xor} : \Str[i] \times \Str[i] \to \Str[i]$. A precise semantic interpretation of these symbols will be given below. Intuitively, however, we can already say that
$\random$ generates a random binary string when evaluated, whereas $\xor$ computes the bitwise exclusive-or of its two argument strings.

The terms $\setr{r}{V}$ and $\get r$ update and read the value stored in the global reference $r$, enabling shared mutable state across protocol components.
Furthermore, to model experiments where an adversary may query an oracle polynomially many times, $\lamBLL$ incorporates a $\mathtt{loop}$ iterator. 

The set of free variables $\FV{M}$ of a term $M$ is defined by induction as usual, and terms are taken modulo $\alpha$-equivalence.
Moreover, polynomials may also occur in terms, and the notation for substituting the variable $i$ with a polynomial $p$ extends naturally to terms. 
Thus, for example, $Mp$ denotes the computation obtained by replacing every occurrence of $i$ in $M$ with $p$.


\subparagraph*{Typing Rules}
\begin{figure}[h!]
	\vspace*{-4mm}
	\begin{tcolorbox}[boxrule=0.6pt, colback=white]
	\begin{center} \scriptsize
		\vspace*{-5mm}
		\begin{mathpar}\mprset{sep=1em,  vskip =0.05ex}
			\inferrule* [right=var]{\\}{ x : P \vdashv[\Theta] x: P}
			\and 
			\inferrule* [right=true]{\\}{ \vdashv[\Theta] \true : \BB}	
			\and
			\inferrule* [right=false]{\\}{ \vdashv[\Theta] \false : \BB}	
			\and
			\inferrule* [right=fun]{ \mathrm{typeof}(f) = G_1 \times \dots \times G_m \to G \\ (\Gamma_k\subst{p} \vdashv[\Theta] Z_k : G_k\subst{p})_{1\leq k \leq m} \\ p \in \N_{\geq 1}[i]}{ \boxplus_k \Gamma_k\subst{p} \vdashc[\Theta] f_p(Z_1, \dots, Z_m):G\subst{p}}
			\and
			\inferrule* [right=string]{s\in \{0,1\}^{c} \\ p : i \mapsto c\text{ is a constant polynomial}}{ \vdashv[\Theta] s : \Str[p]}	
			\and
			\inferrule* [right=tensor]{\Gamma \vdashv[\Theta] Z_1 : P \\ \Delta  \vdashv[\Theta] Z_2: Q}{\Gamma \boxplus \Delta \vdashv[\Theta] \lis{Z_1, Z_2} : P\otimes Q}	
			\and
			\inferrule* [right=let\_pair]{\Gamma \vdashv[\Theta] Z :  P \otimes Q\\x : P, y : Q, \Delta \vdashc[\Theta]  M : A}{\Gamma \boxplus \Delta \vdashc[\Theta]  \letin{ \lis{x,y} = Z}{M} : A }
			\and
			\inferrule* [right=unit]{\\}{ \vdashv[\Theta] \star : \UU}	
			\and
			\inferrule* [right=bang]{\Gamma \vdashc[\Theta] M : A}{p \ast \Gamma \vdashv[\Theta] \oc M : \oc_p A}	
			\and
			\inferrule* [right=der]{\Gamma \vdashv[\Theta] Z : \oc_1 A }{\Gamma \vdashc[\Theta] \der (Z):A } 	
			\and
			\inferrule* [right=app]{\Gamma  \vdashc[\Theta] M : P\multimap A
				\\ \Delta \vdashv[\Theta] Z : P}{\Gamma\boxplus \Delta  \vdashc[\Theta] MZ : A}	
			\and	\inferrule* [right=lam]{\Gamma, x: P  \vdashc[\Theta] M : A }{\Gamma \vdashv[\Theta] \lambda x . M : P\multimap A}
			\and
			\inferrule* [right=eta]{\Gamma  \vdashv[\Theta] V: A}{\Gamma  \vdashc[\Theta] \return V: A}	
			\and
			\inferrule* [right=let]{\Gamma  \vdashc[\Theta] N: P \\ x : P, \Delta  \vdashc[\Theta] M : A }{ \Gamma\boxplus \Delta  \vdashc[\Theta] \letin{ x = N}{M} : A}	
			\and
			\inferrule* [right=loop]{\Gamma \vdashv[\Theta] V: P\multimap P \\ \Delta  \vdashc[\Theta]  M : P }{(p \ast \Gamma)\boxplus \Delta  \vdashc[\Theta]  \loopp{V}{p}{M} : P }	
			\and
			\inferrule* [right=set]{  \Gamma \vdashv[\Theta] Z: G \\  r :G \in \Theta}{ \Gamma \vdashc[\Theta] \setr{r}{Z} : \UU }	
			\and 
			\inferrule* [right=get]{r :G \in \Theta}{ \vdashc[\Theta] \get r : G}
			\and 
			\inferrule* [right=case]{\Gamma  \vdashv[\Theta] Z: \BB \\ \Delta  \vdashc[\Theta]  M : A \\ \Delta \vdashc[\Theta]  N : A}{\Gamma \boxplus \Delta \vdashc[\Theta]  \ifthen{Z}{M}{N} : A}	
			\and 
			\inferrule* [right=weak]{\Gamma  \vdashv[\Theta] V :A \\ x \text{ not free in } \Gamma}{\Gamma, x : P \vdashv[\Theta] V :A}
			\and 
			\inferrule* [right=weak]{\Gamma  \vdashc[\Theta] M :A \\ x \text{ not free in } \Gamma}{\Gamma, x : P  \vdashc[\Theta] M :A}
		\end{mathpar}
	\end{center}
	\end{tcolorbox}
	\vspace*{-2mm}
	\caption{$\lamBLL$ Typing Rules}
	\label{fig:typingRules}
\end{figure}
The typing rules for $\lamBLL$ are given in Figure \ref{fig:typingRules}. For a fixed reference context $\Theta  = r_1 : G_1, \dots, r_n : G_n$, we have two kinds of typing judgements, namely
$
\Gamma  \vdashv[\Theta] V : A \quad \text{and} \quad\Gamma \vdashc[\Theta]M:A
$
(for values and computations, respectively), whereas $\Gamma = x_1 : P_1, \dots, x_n : P_n$ is a context assigning positive types to term variables.
The operations of polynomial addition and multiplication induce a binary partial operation $\boxplus$ and a total unary operation $\ast$
on positive types defined by induction below:
\[\begin{aligned}
	G \boxplus G &:= G &&&	p\ast G &:= G\\
	(P \otimes Q) \boxplus(R \otimes S) &:= (P \boxplus R) \otimes (Q \boxplus S) &&&
	p \ast (P \otimes Q)  &:= (p \ast P) \otimes (p \ast Q) \\
	(\oc_p A) \boxplus (\oc_q A) &:= \oc_{p+q}  A &&&
	p \ast (\oc_q  A) &:= \oc_{p\times q}  A. 
\end{aligned}\]
Note that for ground types, the identities $G \boxplus G = G$ and $p \ast G = G$ justify duplicating ground values, 
while grades track the use of higher-order functions, following~\cite{dal2012linear}. 
The partial operation $\boxplus$ on positive types can itself be extended to a total operation on variable contexts:
\[\begin{aligned}
	\varnothing \boxplus \varnothing&:= \varnothing\\
	(x: P, \Gamma) \boxplus \Delta & := \begin{cases}
		x : P, \Gamma \boxplus \Delta & \text{ if } x \text{ does not occur in } \Delta\\
		x : P \boxplus Q, \Gamma \boxplus \Sigma & \text{ if } \Delta = x: Q, \Sigma
	\end{cases}
\end{aligned}\]
Similarly, the operation $p \ast (-)$ on positive types can be extended to a total operation on variables contexts
in a natural way. 

The substitution of all occurrences of $i$ with $p \in \N_{\geq 1}[i]$ within a typing context $\Gamma$ can be easily defined: simply,
the substitution is applied uniformly to all types occurring in $\Gamma$.
The following lemma establishes that typability is preserved under substitution of the security parameter by a polynomial. 
\begin{lemma}
		Let $s\in\{v,c\}$. Given a type derivation 
		$\pi \triangleright \Gamma \vdash^\Theta_s M:A$ and a polynomial $q\in \N_{\geq 1}[i]$, there exists a type 
		derivation $\pi q$ such that  $\pi q \triangleright \Gamma\subst{q} \vdash^{\Theta\subst{q}}_s M\subst{q}:A\subst{q}$.
\end{lemma}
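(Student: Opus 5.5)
We proceed by structural induction on the derivation $\pi$, defining $\pi q$ rule by rule. In each case we apply the induction hypothesis to the immediate subderivations and then reapply the same rule. Before starting we isolate a few facts about substituting $q$ for $i$.

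The first fact is that substitution is a homomorphism on polynomials: $(p+p')q = pq + p'q$, $(p\times p')q = pq\times p'q$, $1q = 1$, and $iq = q$. Also $pq \in \N_{\ge 1}[i]$ whenever $p$ is, since positive coefficients are preserved under composition with $q\in\N_{\ge 1}[i]$.

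Using this, we prove by induction on positive types that
\[
(P\boxplus Q)q = Pq\boxplus Qq \qquad\text{and}\qquad (p\ast P)q = pq\ast Pq .
\]
In particular, $P\boxplus Q$ is defined iff $Pq\boxplus Qq$ is. We only need the left-to-right direction, which holds because $(\oc_p A)q=\oc_{pq}Aq$ and substitution commutes with type formation. Both identities then lift to contexts pointwise, because $\Gamma q$ has the same variables as $\Gamma$.

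Most rules go through directly with these facts:
\begin{itemize}
\item \textsc{var}, \textsc{true}, \textsc{false}, \textsc{unit} and \textsc{get} are immediate; for \textsc{get}, note $r:G\in\Theta$ implies $r:Gq\in\Theta q$.
\item \textsc{tensor}, \textsc{let\_pair}, \textsc{app}, \textsc{let} and \textsc{case} follow from the induction hypotheses together with $(\Gamma\boxplus\Delta)q=\Gamma q\boxplus\Delta q$.
\item \textsc{bang} and \textsc{loop} use $(p\ast\Gamma)q = pq\ast\Gamma q$. For \textsc{loop}, the term $(\loopp{V}{p}{M})q$ is $\loopp{Vq}{pq}{Mq}$, so the grade on the context and the grade on the term change consistently.
\item \textsc{der} needs $(\oc_1 A)q=\oc_1 Aq$, which holds since $1q=1$.
\item \textsc{lam}, \textsc{eta}, \textsc{set} and both \textsc{weak} rules are routine.
\end{itemize}

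Two rules need more care. For \textsc{string}, we have $s\in\{0,1\}^c$ with $p$ the constant polynomial $c$. Then $pq = c$ is still constant, so $\vdashv[\Theta q] s:\Str[pq]$ holds by the same rule.

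The main obstacle is \textsc{fun}. Here the term is $f_p(Z_1,\dots,Z_m)$ and the premises are $\Gamma_k p\vdashv[\Theta] Z_k : G_k p$, where $G_k$ are the types from $\typeof{f}$. After substitution the term becomes $f_{pq}(Z_1q,\dots,Z_mq)$, so we must produce premises at the polynomial $pq$. The induction hypothesis gives $(\Gamma_k p)q \vdashv[\Theta q] Z_kq : (G_kp)q$. We therefore need the composition identity $(Xp)q = X(pq)$ for types and contexts, where $pq$ means $p[i\leftarrow q]$. This follows from associativity of polynomial composition, $(p'[i\leftarrow p])[i\leftarrow q] = p'[i\leftarrow p[i\leftarrow q]]$, by induction on types.

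With this identity the premises read $\Gamma_k(pq)\vdashv[\Theta q] Z_kq : G_k(pq)$. Since $pq\in\N_{\ge 1}[i]$, the \textsc{fun} rule applies and yields $\boxplus_k\Gamma_k(pq) \vdashc[\Theta q] f_{pq}(Z_1q,\dots,Z_mq) : G(pq)$. By the composition identity again, this equals $(\boxplus_k \Gamma_k p)q \vdashc[\Theta q] (f_p(\vec Z))q : (Gp)q$, which is the required judgement.

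There is one further point to settle: whether the reference context $\Theta$ occurring in the \textsc{fun} premises is itself affected by $p$. In the rule as written it is not, so the substitution on $\Theta$ is uniform and nothing more is needed. Note also that $\typeof{f}$ is independent of $i$, since it is fixed by $f$, so only the term constructor index changes from $p$ to $pq$.
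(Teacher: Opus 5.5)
Your proof is correct. The paper states this lemma without proof, and your argument is the routine one it relies on: structural induction on $\pi$, using the commutations $(P\boxplus Q)q=Pq\boxplus Qq$ and $(p\ast P)q=pq\ast Pq$, plus, for \textsc{fun}, the composition identity $(Xp)q=X(p[i\leftarrow q])$ that lets you re-index the premises at $pq$. One harmless slip: the claim that $P\boxplus Q$ is defined \emph{iff} $Pq\boxplus Qq$ is defined fails for constant $q$ (e.g.\ $\Str[i]\boxplus\Str[1]$ is undefined, yet with $q=1$ both types become $\Str[1]$), but as you note only the forward direction is used.
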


For technical convenience, in the rest of this paper we will assume that any conditional $\ifthen{Z}{M}{N}$ is such that both branches $M$ and $N$ are \emph{closed} terms, i.e., $\mathit{FV}(M)=\mathit{FV}(N)=\emptyset$. This restriction entails no loss of generality since any conditional with open branches can be equivalently represented using $\lambda$-abstractions, preserving typing and semantics.  
\subparagraph*{Modeling Randomization and References}
The system $\lamBLL$ incorporates probabilistic effects and references to which one has to give semantics. We do so monads \cite{Moggi1991}, although we make only limited use of the underlying theory and assume just some familiarity with it.

Recall that for a set $X$, a \emph{(finite) probability distribution} is a function $\mu : X \to [0,1]$ with finite support, \ie the set $\supp{\mu} := \{ x \in X \mid \mu(x) > 0\}$
is finite, and such that $\sum_{x\in X} \mu(x) =1$. We denote by $\delta_x : X \to [0,1]$ the \emph{Dirac distribution} mapping an element $y$ in $X$ to $1$ if $y=x$ and to $0$ otherwise.
Any probability distribution $\mu$ is then equal to 
\(\sum_{x\in\supp{\mu}} \mu(x)\cdot\delta_{x}\), where we use the fact that a convex combination of
distributions is itself a distribution. We denote by $\D(X)$ the set of all probability distributions over $X$. This induces a monad $(\D, \eta_{\D}, \bind_{\D})$ on the category $\Set$ of sets and functions (we will omit the subscripts if there is no ambiguity). 

A \emph{store}, instead, is a map from memory reference variables to values which respects types. More precisely, for a fixed \emph{closed} reference context $\Theta = r_1 : G_1, \dots, r_m : G_m$ (meaning that the security parameter variable $i$ does \emph{not} occur in the types $G_1, \dots G_m$), we denote by $\Store_\Theta$ the set of functions $e : \{r_1, \dots, r_m \} \to \Val$ such that $e(r_j) \in \{ V \in \Val \mid \, \cdot \vdashv[\Theta] V : G_j\}$ for all $1 \leq j \leq m$. We associate to every closed $\Theta$ a corresponding monad $(\T_\Theta, \eta_\Theta, \bind_\Theta)$ on $\Set$ given by the tensor product \cite{hyland_combining_2006} of the distribution monad with the state monad $\T_{\Theta}: = (\D (- \times \Store_{\Theta}))^{\Store_{\Theta}}$, similarly to \cite{aguirre_higher-order_2021}. The unit of $\T_\Theta$ maps $X$ to $\D (X \times \Store_\Theta)^{\Store_\Theta}$ mapping $x \in X$ and $e \in \Store_\Theta$ to the Dirac distribution $\delta_{(x,e)}$.  The bind operator $
\bind_\Theta : \T_\Theta X  \times \Set(X, \T_\Theta Y) \to \T_\Theta Y
$ takes $\varphi \in \T_\Theta X $ and $f : X \to \T_\Theta Y$ to the map $\dlambda e. (\varphi(e) \bind_\D \eval \circ(f \times \id_{\Store_\Theta})) $ where $\dlambda$ and $\eval$ are respectively the Currying operator and the evaluation map induced by the Cartesian closed structure of $\Set$.  Explicitly, $\varphi \bind_\Theta f$ maps a store $e \in \Store_\Theta$ to the following distribution in $\D (Y \times \Store_\Theta)$:
\[
(y,e_1) \mapsto \sum_{(x, e_2)} \varphi(e)(x, e_2) \cdot f(x)(e_2)(y, e_1).
\]

\subparagraph*{From Sets to Indexed Families}
Working with general term sequents where the security parameter $i$ \emph{may occur freely}, requires generalizing the discussion above from sets to \emph{families} of sets. Let $\ISet$ be the category whose 
objects are families $X = \{X_n\}_{n\geq 1}$ of sets indexed by $\N_{\geq 1}$ and a morphism from $X = \{X_n\}_{n\geq 1}$ to $Y = \{Y_n\}_{n\geq 1}$ is a family of functions $\{f_n : X_n \to Y_n\}_{n\geq 1}$. Probabilistic effects are generated via the function symbols in $\Fun$. For each $f \in \Fun$ with $\typeof{f} = G_1 \times \dots \times G_m \to G$, we assume that there is a family $\den{f} = \{ \den{f}_n \}_{n \geq 1}$ of 
functions $\den{f}_n : \den{G_1}_n \times \dots \times  \den{G_m}_n \to \D(\den{G}_n)$
indexed over the security parameter $n \geq 1$, where $\den{\Str[p]}_n:= \{0, 1\}^{p(n)}$,  $\den{\BB}_n:= \{\true, \false\}$ and $\den{\UU}_n :=\{\star\}$.
Moreover, we assume that every $f\in\Fun$ can be evaluated in probabilistic polynomial time, namely that
there exists a PTM $\algof(f)$ such that for every $n\geq 1$, if $\algof(f)$ is fed with input $1^n$ and a tuple $t \in 
\den{G_1}_n \times \dots \times  \den{G_m}_n$, it  returns $x \in \den{G}_n$ with probability $\den{f}_{n}(t)(x)$.
Now, for a general reference context $\Theta$ (whose types may contain $i$), we define a monad on $\ISet$ mapping an indexed family $X = \{X_n\}_{n\geq 1}$ to the family
$
\{ \T_{\Theta n}(X_n)\}_{n\geq 1} =
\{
(\D (X_n \times \Store_{\Theta n}))^{\Store_{\Theta n}}
\}_{n\geq 1} ,
$
which we will use for the operational semantics of $\lamBLL$.
\subparagraph*{Operational Semantics}
We are now ready to present the operational semantics of $\lamBLL$, in the form of a small-step evaluation relation on computations. This evaluation system will be the primary object of study throughout the remainder of the paper, and it is with respect to it that we establish our soundness and completeness results.

For a variable context $\Gamma$, reference context $\Theta$ and type $A$, we define indexed families $\Comp{}{\Gamma}{\Theta}{A}= \{\Comp{n}{\Gamma}{\Theta}{A} \}_{n \geq 1}$ and $\Values{}{\Gamma}{\Theta}{A}= \{\Values{n}{\Gamma}{\Theta}{A} \}_{n \geq 1}$ of typable terms and values respectively as 
\[
	\Comp{n}{\Gamma}{\Theta}{A}:= \{ M \in \Lambda \mid  \Gamma\subst{n}  \vdashc[\Theta\subst{n}] M : A\subst{n} \}; \qquad
	\Values{n}{\Gamma}{\Theta}{A} := \{ V \in \Val \mid  \Gamma\subst{n}\vdashv[\Theta\subst{n}]  V : A\subst{n}\}.
\]
If the variable context $\Gamma$ is empty, we write $\CompC{n}{\Theta}{A}$ and $\ValuesC{n}{\Theta}{A}$ for $\Comp{n}{\varnothing}{\Theta}{A}$ and $\Values{n}{\varnothing}{\Theta}{A}$ respectively. 
For a fixed reference context $\Theta$ and type $A$, the small step operational semantics, see Figure~\ref{fig:smallstep}, is an indexed family of maps 
\[
\{\smstep[\Theta][A][n] :\CompC{n}{\Theta}{A}\times \Store_{\Theta \subst{n}} \longrightarrow \D_{\bot}(\CompC{n}{\Theta}{A} \times \Store_{\Theta\subst{n}})\}_{n\geq 1}.
\]
where $\D_{\bot}$ stands for the lifting of the monad $\D$. In  the case $\setr{r}{V}$, $e[V/r]$ denotes the store mapping a reference 
$s$ to $V$ if $s=$ and to $e(s)$ otherwise.
\begin{figure}[!h]
	\centering
	\begin{tcolorbox}[boxrule=0.6pt, colback=white]
		\scriptsize
		\vspace{-5pt}
		\begin{align*}
		\smstep[\Theta][A][n](\return V,e) &:= \bot \\
		\smstep[\Theta][A][n](MZ,e) &:=
		\begin{cases}
			\delta(N[Z/x],e)
			& \text{if } M=\return \lambda x.N, \\[2mm]
			\begin{aligned}
				&\smstep[\Theta][P \multimap A][n](M,e) \\
				&\qquad \bind_{\D_{\bot}}
				((N,e') \mapsto \delta(NZ,e'))
			\end{aligned}
			& \text{otherwise.}
			\end{cases}\\
			\smstep[\Theta][A][n](\letin{x=N}{M},e) &:=
			\begin{cases}
				\delta(M[V/x],e)
				& \text{if } N = \return V, \\[2mm]
				\begin{aligned}
					&\smstep[\Theta][P][n](N,e) 
					\bind_{\D_{\bot}} \\
					&\quad((N',e') \mapsto
					\delta(\letin{x=N'}{M},e'))
				\end{aligned}
				& \text{otherwise.}
			\end{cases}\\
			\smstep[\Theta][A][n](\letin{ \lis{x,y} = \lis{V,W}}{M}, e) &:= \delta(M[V/x,W/y], e)\\
			\smstep[\Theta][A][n](\der(\oc M), e)) &:= \delta(M,e)\\
			\smstep[\Theta][A][n]( \ifthen{\true}{M}{N} , e)&:= \delta(M,e)\\
			\smstep[\Theta][A][n]( \ifthen{\false}{M}{N} , e)&:= \delta(N,e)\\
			\smstep[\Theta][A][n](\setr{r}{V} , e)&:= \delta(\return \star,e[V/r])\\
			\smstep[\Theta][A][n](\get{r} , e)&:= \delta(\return e(r),e)\\
			\smstep[\Theta][A][n](f_p(Z_1, \dots, Z_m) , e)&:= (\return V,e')\mapsto (\delta(e)(e')\den{f}_{p(n)}(Z_1, \dots, Z_m)(V), e')\\
			\smstep[\Theta][A][n](\loopp{\lambda x. M}{1}{N} , e)&:= \delta(\letin{x=N}{M},e)\\
			\smstep[\Theta][A][n](\loopp{\lambda x. M}{k+1}{N} , e)&:= \delta(\letin{x=(\loopp{(\lambda x.M)}{k}{N})}{M},e)
		\end{align*}
	\end{tcolorbox}
	\caption{Small Step Semantics of $\lamBLL$}
	\label{fig:smallstep}
\end{figure}
Most of the rules are standard and do not involve effects directly, which come into play only when reducing functions in $\Fun$ and the constructs $\mathtt{get}$ and $\mathtt{set}$. 

Observe that $\mathbf{sm}$, as just defined, models only a \emph{single} reduction step. To reason about \emph{multi-step} evaluation, we introduce an iterated version of $\mathbf{sm}$, indexed by an additional parameter $k\in\N$. Formally, we define the \emph{iterated small step reduction} for every $k \in \N$, as an indexed family of partial functions 
\[
\{\smstep[\Theta][A][n,k] :\CompC{n}{\Theta}{A}\times \Store_{\Theta \subst{n}} \longrightarrow \D_{\bot}(\ValuesC{n}{\Theta}{A} \times \Store_{\Theta\subst{n}})\}_{n\geq 1}.
\]
whereas $\smstep[\Theta][A][n,0](M,e)$ is $\delta(V,e)$ if $ M =\return V$ and $\bot$ otherwise.
The following rule also holds:
\begin{mathpar}\mprset {sep=2em}
	\inferrule* []{ \smstep[\Theta][A][n](M,e) = \sum_k a_k \delta(M_k, e_k) \\ \smstep[\Theta][A][n,l_k](M_k,e_k) = \D_k
	}{\smstep[\Theta][A][n,1+ \max_k l_k](M,e)=\sum_k a_k \D_k }	
\end{mathpar}
In other words, the fact that $\smstep[\Theta][A][n,k](M,e)$ is a distribution witnesses that the evaluation of $(M,e)$ can be completed in at most $k$ steps, independently of the outcomes of probabilistic choices.

Given a pair $(M,e)\in\CompC{n}{\Theta}{A}\times \Store_{\Theta \subst{n}}$, what is the distribution over states to which this pair evaluates?
If there is $m\in\N$ such that $\smstep[\Theta][A][n,m](M,e)= \Dist$, then we
write $\itsmstep[\Theta][A][n](M,e)$ for such a distribution $\Dist$:
$
\itsmstep[\Theta][A][n](M,e)=\Dist\quad\Longleftrightarrow\quad\exists m.\smstep[\Theta][A][n,m](M,e)= \Dist.
$

\subparagraph*{Subject Reduction and Progress}

A form of \emph{type soundness} can be proved using standard techniques. The only difference with respect to the
classic forms of subject reduction and progress is that, since the operational semantics is monadic, the classical 
machinery (in particular, substitution lemmas) needs to be adapted to the presence of state and distributions. 
More details are in Appendix~\ref{app:progress}.

\section{On $\lamBLL$ as a Language For Cryptographic Proofs}\label{sec:cryptography}
After having introduced the essential features of $\lamBLL$, it is worth taking a moment to explain in what sense this formalism can serve as a framework for supporting cryptographic proofs. We do so by outlining the main ideas behind a proof presented in \cite{aplas}.

We briefly recall the notion of security we are interested in, following the presentation in Katz and Lindell \cite{katz2020introduction}, to which we refer for more details. The security notion of interest is indistinguishability under chosen-plaintext attacks (CPA security), formalized through the $\mathsf{PrivK}_{A,\Pi}$ experiment, in which a probabilistic polynomial-time adversary $A$ interacts with an encryption oracle for the encryption scheme $\Pi$ and attempts to distinguish encryptions of two chosen messages. One way of achieving CPA security consists in building $\Pi$ through pseudo-random functions (PRFs), namely efficiently computable functions that are computationally indistinguishable from truly random functions. Given such a function $F$, we write $\Pi_F$ for the induced encryption scheme. The proof that $\Pi_F$ is CPA-secure if $F$ is a PRF goes by reduction, and consists in building, out of an hypothetical adversary $A$ against $\Pi$, a successful distinguisher for $F$.

The proof can be formalized entirely within $\lamBLL$ by representing each component of the experiment $\mathsf{PrivK}_{A,\Pi}$ as a typed term.  The pseudorandom function $F$, key-generation algorithm $\Gen$ and encryption algorithm $\Enc$, are modelled by values of the following types
\[
\vdashv[\emptycontextCPA] F:\typstr{i}\tensor\typstr{i}\lollipop\typstr{i} 
\qquad
\vdashv[\emptycontextCPA] \Gen:\typunit \lollipop \typkeyprivk
\qquad
\vdashv[\emptycontextCPA] \Enc:(\typkeyprivk \tensor \typmsgprivk)\lollipop\typstr{2i},
\]
where $\Enc$ makes use of $F$ internally. These are not meant to be variables, but hypothetical $\lamBLL$ terms. The experiment $\mathsf{PrivK}_{A,\Pi}$ can itself be implemented as a $\lamBLL$ term which first draws a key at random, then asking the adversary $A$ to produce two messages $m_0$ and $m_1$ based on an encryption oracle, and then after encrypting $m_b$ (where $b$ is random) into $c$, asks $A$ to guess $b$ from $c$, still having access to the encryption oracle. As a consequence, then, an hypothetical adversary can be seen as a pair of terms
\begin{align*}
\vdashv[\emptycontextCPA]A_1&:
!_{q_1}(\typmsgprivk \lollipop \typstr{2i})
\lollipop
\typmsgprivk \tensor \typmsgprivk\tensor\typstr{r},\\
\vdashv[\emptycontextCPA]A_2&:
!_{q_2}(\typmsgprivk \lollipop \typstr{2i})\tensor\typstr{r}
\tensor\typstr{2i}
\lollipop
\typbool.
\end{align*}
where $r$ denotes the size of the adversary's internal state, while $q_1$ and $q_2$ are polynomial bounds on the number of queries that the adversary may issue to the encryption oracle during the two phases of the experiment.

The actual proof consists in constructing a distinguisher $D_A$ for $F$ that uses $A_1$ and $A_2$ as subroutines, and which can be given the type
\[
\vdashv[\emptycontextCPA]D_A:
!_{q_1+q_2+1}(\typstr{i}\lollipop\typstr{2i})
\lollipop
\typbool,
\]
The fact that this reduction is correct, namely that $D_A$ is successful whenever $A$ is successful, can in turn be formalized as a game-based proof \cite{Shoup2004} using the equational theory introduced in \cite{aplas}, which is based on a form of approximate logical relations. Notice, however, that neither $A_1$ nor $A_2$ are concrete $\lamBLL$ terms: they merely stand for hypothetical adversaries whose existence is assumed for the sake of the reduction proof.

There is one missing piece in this argument. How can we be sure that the construction of $D_A$ from $A_1$ and $A_2$, which is nothing more than a well-typed $\lamBLL$ term with two unspecified subterms, is \emph{complexity preserving}? To establish it, one must show that every instance of $D_A$ obtained by taking $A_1$ and $A_2$ as concrete, efficient algorithms is itself efficient. This is precisely where the soundness and completeness results become useful: if $A_1$ and $A_2$ are efficient, we can assume (by \emph{completeness}) that terms equivalent to them can be written in $\lamBLL$, and plugged into $D_A$. We can thus conclude (by \emph{soundness}) observing that now the term $D_A$ is a concrete term of $\lamBLL$, and can thus be evaluated in polynomial time.
\section{Soundness and Completeness for Polynomial Time}\label{sect:soundcomplete}
This is certainly the most important section of this paper, namely the one in which we prove that $\lamBLL$ is sound and complete for complexity classes that faithfully capture the notion of efficiency used by cryptographers. In the previous section, we argued why reaching such results would be useful. 

We begin, in the first subsection, by addressing polytime soundness; this is achieved by studying the dynamics of the system and providing appropriate bounds on the computational complexity of simulating the evaluation process.

\subsection{Soundness}

In order to prove polynomial-time soundness, we show that there exists a polynomial bound on the length of reduction sequences for any term in $\lamBLL$, and that term size does not increase too much along those reductions. First, we define the size of terms and values in $\lamBLL$: each constructor contributes for $1$, and the total size is thus obtained by counting the number of constructors, keeping in mind that strings count for their length. This way we 
can easily define the natural numbers $\sizeM{M}$ and $\sizeV{V}$ whenever $M$ is a computation and 
$V$ is a value. We also give a formal definition of the length of the reduction of a term $M$ and store $e$:
\begin{definition}[Cost]
	\label{def:cost}
	For a term $M \in  \CompC{n}{\Theta}{A} $ and a store $e \in \Store_{\Theta n}$, we define the \emph{evaluation cost} $\cost_n^{\Theta;A}(M,e) \in \N\cup \{+\infty\}$ as
	\[\cost_n^{\Theta;A}(M,e)
	:=
	\inf \{ k\mid\smstep[\Theta][A][n,k](M,e) \neq \bot \}.
	\] 
\end{definition}
This captures the \emph{maximal} number of steps required to reach a value. Our goal is to show that, for every well-typed term $M$, this  is polynomially bounded in $n$, uniformly across all valid stores. 
In order to prove the latter, we proceed by associating to each type derivation $\pi$ a polynomial $\poly{\pi} \in \mathbb{N}_{\geq 1}[i]$ that serves as a bound on the computational complexity of evaluating the underlying term. This polynomial is defined so that it satisfies \emph{two} key properties: it upper-bounds the evaluation cost as from Definition~\ref{def:cost}, but it also dominates the size of the term. 
As a result, we conclude that all typable terms in $\lamBLL$ can be evaluated in polynomial time, and thus the language is sound for polynomial-time computation.

 The key towards defining $\poly{\cdot}$ in an effective way consists in properly giving a weight to term \emph{variables}. In particular, we assign a polynomial to each positive type in $\lamBLL$ by induction,  reflecting the structure of types, but \emph{not} counting higher-order subtypes: ground types such as $\UU$ and $\BB$ have a fixed size equal to 1, string types $\Str[p]$ contribute a size equal to the parameter $p$, tensor products sum up their size, while exponential types are assigned a minimal positive size. Formally:
\begin{align*}
			\sizeG{\UU}&:=1 & 	\sizeP{P \otimes Q}&:=\sizeP{P}+\sizeP{Q}+1\\
			\sizeG{\BB}&:=1 & \sizeZ{!_p A}&:= 1 \\
			\sizeG{\Str[p]}&:=p & &
\end{align*}
We are now ready to define the function $\poly{\cdot}$  by induction on the structure of 
the argument type derivation $\pi$ (some rules can be found in Figure~\ref{fig:poly-cost}). 
\begin{figure}[htbp]
	\begin{tcolorbox}[boxrule=0.6pt, colback=white]
	\scriptsize
		\begin{mathpar}\mprset{sep=1em,  vskip =0.01ex}
			\polyInferRule{\inferrule* [right=var]{~}{x:P \vdashv[\Theta] x : P}}
			:= \sizeP{P}
			
			\polyInferRule{\inferrule* [right=unit]{~}{\vdashv[\Theta] * : \mathbb{U}}}
			:= 1
			
			\polyInferRule{\inferrule* [right=string]{
					s \in \{0,1\}^c \\p : i \mapsto c \text{ is a const. polynomial}}{\vdashv[\Theta] s : \mathbb{S}[p]}}
			:= p
			
			\polyInferRule{\inferrule* [right=tensor]{
					\pi_1 \triangleright \Gamma \vdashv[\Theta] Z_1 : P \\
					\pi_2 \triangleright \Delta \vdashv[\Theta] Z_2 : Q
				}{\Gamma \boxplus \Delta \vdashv[\Theta] \lis{Z_1, Z_2} : P \otimes Q}}
			:=1+\poly{\pi_1} + \poly{\pi_2}
			
			\polyInferRule{\inferrule* [right=bang]{\pi\triangleright \Gamma \vdashc[\Theta] M : A}{p \ast \Gamma \vdashv[\Theta] \oc M : \oc_p A}}
			:= p \times \poly{\pi}+p
			
			\polyInferRule{\inferrule* [right=lam]{\pi\triangleright \Gamma, x: P \vdashc[\Theta] M : A}{\Gamma \vdashv[\Theta] \lambda x. M : P \multimap A}}
			:= 1 + \poly{\pi}
						
			\polyInferRule{\inferrule* [right=let]{\pi_1 \triangleright \Gamma  \vdashc[\Theta] N: P\quad \pi_2 \triangleright x : P, \Delta  \vdashc[\Theta] M : A }{ \Gamma\boxplus \Delta  \vdashc[\Theta] \letin{ x = N}{M} : A}}
			:= 1+\poly{\pi_1}+\poly{\pi_2}
			
			\polyInferRule{\inferrule* [right=let\_pair]{\pi_1 \triangleright\Gamma \vdashv[\Theta] Z :  P \otimes Q\\x : P, y : Q, \pi_2 \triangleright\Delta \vdashc[\Theta]  M : A}{\Gamma \boxplus \Delta \vdashc[\Theta]  \letin{ \lis{x,y} = Z}{M} : A }}
			:= 1+\poly{\pi_1}+\poly{\pi_2}
			
			\polyInferRule{\inferrule* [right=case]{
					\pi_g \triangleright\Gamma  \vdashv[\Theta] Z: \BB \\ \pi_1 \triangleright \Delta  \vdashc[\Theta]  M : A \\ \pi_2 \triangleright\Delta \vdashc[\Theta]  N : A
				}{\Gamma \boxplus \Delta \vdashc[\Theta]  \ifthen{Z}{M}{N} : A}}
			:= 
			     1 + \poly{\pi_g} +\poly{\pi_1}+\poly{\pi_2}
			
			\polyInferRule{\inferrule* [right=loop]{
					\pi_1\triangleright \Gamma \vdashv[\Theta] V: P\multimap P \\ 
					\pi_2\triangleright \Delta  \vdashc[\Theta]  M : P
				}{(p \ast \Gamma)\boxplus \Delta  \vdashc[\Theta]  \loopp{V}{p}{M} : P }}
			:=  1+p+p\times \poly{\pi_1}+\poly{\pi_2}
					
		\end{mathpar}	
		\end{tcolorbox}
	\caption{The Function $\poly{\cdot}$}
	\label{fig:poly-cost}
\end{figure}
For basic ground values (booleans, unit, strings), the polynomial is given by their size, which is pretty natural. For composite values, like pairs, abstractions, and applications, the polynomial cost is defined as the sum of the polynomials of their sub-derivations, adding a constant overhead. 
The \textsc{BANG} rule introduces multiplicative cost: promoting a computation $M$ to a value of type 
$\oc_p A$ multiplies its cost by the resource polynomial $p$, reflecting potential duplication. As mentioned above, what might seem a bit strange at first glance is the way variables are weighted. Although the syntactic size of a variable is small (after all, $\sizeM{x}=1$ for every variable $x$), the weight of an instance of \textsc{VAR} is given by the size $\sizeP{P}$ of the underlying positive type $P$, which does not take into account the size of the part of $P$ that lies within the scope of a modality $!_p$. This is due to the fact that while the type system allows the indiscriminate duplication of ground-type values, the type $!_p A$ implicitly accounts for the number of copies of such value potentially needed. The two must then be treated differently.

As we mentioned above, the polynomial cost $\poly{\pi}$ associated to a typing derivation $\pi$ is designed to upper-bound the evaluation cost of the corresponding term and to dominate its syntactic size. This latter property is formalized by the following lemma.
\begin{proposition}[Size Soundness]
	\label{lemma:monotonicity_poly_size}
	For every type derivation $\pi\;\triangleright  \vdashc[\Theta] M:A$, we have that 
	$\poly{\pi}(n) \geq \sizeM{M\subst{n}}$ 
	for every $n\geq 1$. Similarly for values:  for every type derivation 
	$\pi\;\triangleright  \vdashc[\Theta] V:A$ we have that $\forall n\geq 1.\mbox{ } \poly{\pi}(n) \geq 
	\sizeV{V\subst{n}}$ for every $n\geq 1$.
\end{proposition}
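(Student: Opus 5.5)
We prove the statement by simultaneous structural induction on the derivation $\pi$, for values and computations at once. The induction must run over derivations with an \emph{arbitrary} context $\Gamma$, since the premises of \textsc{lam}, \textsc{let}, \textsc{let\_pair} and \textsc{bang} are open. The strengthened claim is: for every $\pi \triangleright \Gamma \vdash^\Theta_s M : A$ and every $n\geq 1$, we have $\poly{\pi}(n) \geq \sizeM{M\subst{n}}$. The closed case in Proposition~\ref{lemma:monotonicity_poly_size} is then an instance. Throughout we use that every polynomial in $\N_{\geq 1}[i]$ evaluates to at least $1$ at every $n\geq 1$. We also use that substituting $n$ for $i$ changes only the polynomial annotations inside $M$, and not its constructor count, except for strings, whose length is fixed.

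The base cases are immediate.
\begin{itemize}
\item For \textsc{var}, $\sizeP{P}(n)\geq 1 = \sizeM{x}$.
\item For \textsc{unit}, \textsc{true} and \textsc{false}, the polynomial is $1$ (or the analogous constant), which matches the size.
\item For \textsc{string}, $p(n)=c$ is the length of $s$. If constructor overhead for the string node is counted, we note that $c\geq 1$ because $p \in \N_{\geq 1}[i]$. We then fix the size convention so that a string counts exactly for its length, as the paper states.
\item For \textsc{get}, the size is a constant, absorbed by the constant $1$ in its clause.
\end{itemize}

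For the compositional rules (\textsc{tensor}, \textsc{lam}, \textsc{let}, \textsc{let\_pair}, \textsc{case}, \textsc{app}, \textsc{eta}, \textsc{der}, \textsc{set}, \textsc{fun}), the clause of $\poly{\cdot}$ is $1$ plus the sum of the sub-polynomials, and the size is $1$ plus the sum of the sub-sizes. The inequality therefore follows directly from the induction hypotheses. For \textsc{case}, both branches are counted in $\poly{\cdot}$, so no maximum issue arises. The rules not displayed in Figure~\ref{fig:poly-cost} are assumed to follow the same pattern: $1+\sum$, with a subscript polynomial $p$ added where $p$ appears in the term, as in \textsc{fun}.

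The \textsc{weak} rules leave both $M$ and $\poly{\cdot}$ unchanged, assuming $\poly{\cdot}$ passes through unchanged. In \textsc{bang}, $\poly{\pi}=p\times\poly{\pi'}+p \geq \poly{\pi'}+1$ because $p(n)\geq 1$, and this bounds $1+\sizeM{M}$.

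The only case needing care is \textsc{loop}. The term $\loopp{V}{p}{M}$ contains the polynomial $p$ as an annotation. After substitution, $p(n)$ is a numeral that we count as contributing $p(n)$, or at most $p(n)$, to the size. The clause $1+p+p\times\poly{\pi_1}+\poly{\pi_2}$ then dominates $1+p(n)+\sizeV{V}+\sizeM{M}$, using $p(n)\geq 1$ so that $p(n)\poly{\pi_1}(n)\geq \poly{\pi_1}(n)$.

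The main obstacle is not the arithmetic. It is pinning down the size convention for the polynomial annotations in $f_p$ and $\mathtt{loop}$, and checking that the clauses of $\poly{\cdot}$ omitted from Figure~\ref{fig:poly-cost} include enough additive overhead, for instance $p$ in \textsc{fun}. We would fix these conventions first and then run the uniform induction.
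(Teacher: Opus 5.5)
Your proposal is correct and is the same argument as the paper's, which only states that the result follows by induction on $\pi$. Your generalisation to open contexts and your use of $p(n)\geq 1$ in the \textsc{bang} and \textsc{loop} clauses are what that induction needs. One clause differs from your guess: the paper's appendix gives the \textsc{fun} clause as $1+\sum_j\poly{\pi_j}+\sizeG{G}$, not as adding $p$, which is harmless because $f_p$ counts as a single constructor.
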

This result can be proved by induction on $\pi$. This is of course not enough. We have to prove that
$\poly{\pi}$ constitutes a bound on the number of reduction steps necessary to reduce the
underlying term to normal form.
\begin{restatable}[Step Soundness]{theorem}{PolySoundness}\label{th:polysound}
	For every type derivation $\pi\;\triangleright  \vdashc[\Theta] M:A$, it holds that
	$\cost_n^{\Theta;A}(Mn,e) \leq \poly{\pi}(n)$
	for every $n\geq 1, e\in \Store_{\Theta n}$.
\end{restatable}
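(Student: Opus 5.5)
The plan is to prove a strengthened statement by induction on a measure combining $\poly{\pi}(n)$ with the structure of $M$. The key auxiliary fact is a \emph{weighted substitution lemma}: the polynomial of a derivation, after substituting closed values for variables, is bounded by the original polynomial plus suitably scaled weights of the substituted values. With that in hand, the cost statement becomes a \emph{potential argument}: each small step strictly decreases $\poly{\cdot}(n)$ of the current term. Since every reachable term is closed and typable (subject reduction, Appendix~\ref{app:progress}), and the potential is a natural number, any reduction path from $(Mn,e)$ reaches a value within $\poly{\pi}(n)$ steps. This bound is uniform over stores and probabilistic outcomes, which is exactly what $\cost$ measures.

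\textbf{Step 1 (substitution lemma).} Suppose $\pi \triangleright \Gamma, x:P \vdash^\Theta_s M : A$ and $\rho \triangleright\, \vdashv[\Theta] Z : P$. I would show there is a derivation $\pi[\rho/x]$ of $M[Z/x]$ with $\poly{\pi[\rho/x]} \le \poly{\pi} - \sizeP{P} + \poly{\rho}$ when $P$ has no modality at top level. More generally, the weight carried by $x$ is replaced by that of $Z$, and $!_p$-typed components are accounted for by grades. For $P = \oc_p B$ and $Z = \oc N$, each of the at most $p$ uses of $x$ (tracked by $\boxplus$ and $\ast$ in the rules \textsc{bang} and \textsc{loop}) pays at most $\poly{\rho_N}+1$. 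This cost is already included in $\poly{\rho} = p\,\poly{\rho_N}+p$, whereas the variable itself weighed only $1$.

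Ground variables may be duplicated freely. This is harmless because the branches of conditionals are closed and ground sizes are fixed by the type: a \textsc{var} occurrence already costs $\sizeP{G}$, which dominates the substituted value by Proposition~\ref{lemma:monotonicity_poly_size}. The proof is by induction on $\pi$, splitting contexts along $\boxplus$, and I also need the lemma that substituting $n$ for $i$ commutes with $\poly{\cdot}$.

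\textbf{Step 2 (one-step decrease).} For each redex rule of Figure~\ref{fig:smallstep}, I check that every term in the support of $\smstep[\Theta][A][n](M,e)$ has a derivation whose polynomial at $n$ is at most $\poly{\pi}(n)-1$:
\begin{itemize}
\item[$\bullet$] The $\beta$ and \textsc{let} rules follow from Step 1, using the $+1$ of \textsc{lam} and \textsc{let}.
\item[$\bullet$] The rule for $\der(\oc M)$ follows since $\poly{}$ drops from $\poly{\pi'}+1$ to $\poly{\pi'}$ (with $p=1$).
\item[$\bullet$] For the case rules, one branch is discarded.
\item[$\bullet$] For $\mathtt{get}$, $\mathtt{set}$ and $f_p(\dots)$, the result is a ground value whose weight is bounded by the type size. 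This requires $\poly{}$ of \textsc{fun}, \textsc{get} and \textsc{set} to be at least $1+\sizeG{G}$; I would check, or adjust, the rules not shown in Figure~\ref{fig:poly-cost}.
\item[$\bullet$] For $\mathtt{loop}$ with $k+1$, the term $\letin{x=\loopp{V}{k}{N}}{M}$ costs $1 + (1+k+k\,\poly{\pi_1}+\poly{\pi_2}) + \poly{\pi_M}$. Since $\poly{\pi_1}=1+\poly{\pi_M}$, this is strictly below $1+(k+1)+(k+1)\poly{\pi_1}+\poly{\pi_2}$.
\end{itemize}
Congruence cases ($MZ$, $\mathtt{let}$) follow by the induction hypothesis, since $\poly{}$ is additive in subderivations.

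\textbf{Main obstacle.} The hard part is Step 1 in the presence of the graded exponential and \textsc{loop}: substituting $\oc N$ into a body where $x:\oc_p B$ was split across several $\boxplus$ occurrences and multiplied by loop grades. The difficulty is matching the multiplicative factor $p$ in $\poly{\rho}$ against the number of \textsc{der} uses weighted through nested \textsc{bang}/\textsc{loop} scalings. I would first prove it as an inequality parametric in the grade: if $x:\oc_q B$ in $\pi$, then $\poly{\pi[\rho/x]} \le \poly{\pi} - 1 + q\,(\poly{\rho_N}+1)$, by induction on $\pi$. The \textsc{bang} and \textsc{loop} cases multiply both sides by their grade, consistent with $p\ast\Gamma$.
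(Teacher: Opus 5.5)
Your proposal is correct and follows essentially the paper's route. The paper uses the same quantitative substitution lemma, in which the ground part of the substituted value is pre-paid by the weights $\sizeP{P}$ of the variable occurrences. Your excess $\poly{\rho}-\sizeP{P}$ is the paper's $\polyStar{\rho}$ minus one per top-level $\oc$-component. It satisfies only inequalities under $\boxplus$ and $p\ast-$, rather than the exact equalities of Lemma~\ref{lemma:costSubOperationPositiveTypes}, but that suffices. As in the paper, this yields a strict decrease of $\poly{\cdot}$ at every step, which you combine with progress and the identity $\cost(M,e)=1+\max_k\cost(M_k,e_k)$ of Theorem~\ref{th:SR}.

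One small fix: the ground case needs $\poly{\rho}=\sizeG{G}$ for closed ground values (Lemma~\ref{lemma:costValueGroundType}). Proposition~\ref{lemma:monotonicity_poly_size} bounds size by $\poly{\cdot}$ from above, which is the wrong direction for this step.
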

As detailed in the Appendix, the proof of this Theorem requires a nontrivial machinery, and in particular
a nonstandard Substitution Lemma. The substitution process, a cornerstone of any subject reduction 
proof, takes a quantitative form here, which however cannot be the natural one (which simply does not go
through in the induction). Instead, it is necessary to prove a stronger result, namely that it is sufficient to consider only the higher-order 
part of the substituted value, delegating to the second term the task of balancing the rest through 
its free variable occurrences.

Putting Lemma \ref{lemma:monotonicity_poly_size} and Theorem \ref{th:polysound} together, we get:
\begin{corollary}[Polytime Soundness]\label{cor:polytimesoundness}
	For every type derivation $\pi\;\triangleright  \vdashc[\Theta] M:A$, there is a polynomial $p$
	such that the problem of evaluating configurations in the form $(Mn,e)$ takes time polynomial in $n$.
\end{corollary}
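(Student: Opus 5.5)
The plan is to combine Theorem~\ref{th:polysound} with Proposition~\ref{lemma:monotonicity_poly_size} through a straightforward simulation argument. We take $p := \poly{\pi}$ and describe a probabilistic Turing machine that, on input $1^n$ and an encoding of the store $e\in\Store_{\Theta n}$, computes $Mn$ and then performs the small-step reduction of Figure~\ref{fig:smallstep}. Whenever it meets a function symbol $f_q(Z_1,\dots,Z_m)$, it invokes the polynomial-time algorithm $\algof(f)$ on $1^{q(n)}$ and the arguments, flipping coins where needed. Since every store contains only closed ground values of types in $\Theta n$, the size of $e$ is bounded by a constant depending only on $\Theta$ plus a polynomial in $n$. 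By Theorem~\ref{th:polysound}, along every branch of probabilistic choices the machine reaches a value in at most $\poly{\pi}(n)$ steps. It therefore remains to show that each single step can be simulated in time polynomial in $n$.

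The key intermediate claim is that every configuration $(N,e')$ reachable from $(Mn,e)$ has $\sizeM{N}$ bounded by a fixed polynomial in $n$. I would obtain this from subject reduction (Appendix~\ref{app:progress}) together with the fact that the measure $\poly{\cdot}$ does not increase along a reduction step. The proof of Theorem~\ref{th:polysound} should in fact give more: if $(N,e')$ is in the support of $\smstep[\Theta][A][n](N_0,e_0)$ with $N_0$ typed by $\pi_0$, then $N$ has a derivation $\pi'$ with $\poly{\pi'}(n) < \poly{\pi_0}(n)$. This strict decrease is exactly what the cost bound rests on. Applying Proposition~\ref{lemma:monotonicity_poly_size} to each reachable $N$ then yields $\sizeM{N}\leq \poly{\pi}(n)$ uniformly.

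Step costs then follow easily. Locating the redex, performing a substitution $N[Z/x]$ or $M[V/x]$, or unfolding a loop touches a term of size at most $\poly{\pi}(n)$. Reading or updating the store concerns values of polynomial size. A call to $\algof(f)$ at parameter $q(n)$ runs in time polynomial in $q(n)$ and in its inputs, whose sizes are again bounded by $\poly{\pi}(n)$. The total running time is thus bounded by roughly $\poly{\pi}(n)$ steps times a polynomial overhead per step, which is polynomial in $n$. The output distribution of the machine coincides with $\itsmstep[\Theta][A][n](Mn,e)$ by construction.

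The main obstacle is the uniform size invariant for intermediate terms. Proposition~\ref{lemma:monotonicity_poly_size} speaks only about the initial derivation, so it must be transported along reductions. The danger lies in substitutions of large ground values into variables occurring many times, since ground types are freely duplicable. I would handle this with the quantitative Substitution Lemma used for Theorem~\ref{th:polysound}. There, variable occurrences already weigh $\sizeP{P}$, so duplicating ground values is charged in advance. If the lemma only controls the higher-order part, I would prove a companion bound showing that the ground part of the substituted value is absorbed by the $\sizeP{P}$ weights of the occurrences it replaces.
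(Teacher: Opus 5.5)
Your proposal is correct and follows the paper's own argument. You simulate the small-step semantics step by step, delegate function symbols to $\algof(f)$, bound the number of steps by Theorem~\ref{th:polysound}, and bound the cost of each step via Proposition~\ref{lemma:monotonicity_poly_size}. The transport of the size bound to intermediate terms, which you make explicit, is exactly Lemma~\ref{lemma:descreasing}, and the paper's short proof uses it only implicitly. Your fallback ``companion bound'' is not needed: the Substitution Lemma already bounds the weight of the substituted derivation, absorbing the ground part through the variable weights, and Proposition~\ref{lemma:monotonicity_poly_size} applies to that derivation directly.
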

\begin{proof}
It is sufficient to consider the algorithm that, given $n$ and $(Mn,e)$ as inputs, proceeds by simulating step-by-step the evaluation of $(Mn,e)$ through the operational semantics $\smstep[\Theta][A][n]$. Whenever this process requires the evaluation of a symbol $f$ in $\Fun$, one will obviously resort to the underlying algorithm $\algof(f)$, which by hypothesis runs in polynomial time. The fact that the number of macro-steps is bounded by a polynomial is a consequence of Theorem \ref{th:polysound}, while the fact that each step takes polynomial time is a consequence of the fact that the terms involved cannot be too large, thanks to Lemma \ref{lemma:monotonicity_poly_size}.
\end{proof}

\subsection{Completeness}\label{subsect:completeness}

The goal of this section is to establish that every function computable in polynomial time by a Probabilistic Turing Machine can be represented by a $\lamBLL$ term computing the same function. Remarkably, this can be achieved using only a finite and minimal set of basic function symbols, consisting of cyclic shifts and bit-reading and bit-writing operations on strings. This shows that the bulk of the computation is carried out by $\lamBLL$ itself, rather than being delegated to the function symbols.

We first of all need to define what we mean by a probabilistic polynomial time function, and in doing so we follow the discussion in Section \ref{sect:birdseye}. In particular, the mathematical objects that will become elements of our complexity class will not be mere functions over natural numbers, but families of random functions, i.e., families of functions each of which, given a string as input, returns a \emph{distribution} of strings. Formally, given a \emph{first-order interface} consisting of a pair of polynomials $I=(p,q)$, an \emph{I-function} $f$ is a family of functions:
\[
f=\{f_n: \{0,1\}^{p(n)} \rightarrow \D(\{0,1\}^{q(n)}) \}_{n\in\mathbb{N}}.
\]
Noticeably, $\lamBLL$ terms of first-order type naturally computes certain $I$-functions.
Given $\vdashv[\varnothing] M: \Str[p] \lollipop \Str[q]$, we say that $M$  \emph{computes} a $(p,q)$-function if  
$\itsmstep[\varnothing][{\Str[q]}][n]((Mn)s,\emptyset)=(f_n(s)(t),\emptyset)$ for every $n\geq 1, s\in\{0,1\}^{p(n)}$.

In the following, we are going define the class of functions $\mathsf{PPT}(I)$ whose elements are the $I$-functions that we consider efficiently computable: however, to arrive at a formal definition, despite the intuitive simplicity, it is necessary to recall the reference model of computation, namely that of probabilistic Turing machines.

A Probabilistic Turing Machine (PTM, \cite{Gill1977}) is a generalization of the classical deterministic Turing Machine having access to a source of randomness, and in particular to a potentially infinite amount of independent uniformly distributed bits, based on which the machine can make decisions. We consider a particular model of PTM with three tapes: input tape, work tape and output tape. All tapes operate over a common tape alphabet $\tapeAlph=\{0,1,\blank\}$, where $\blank$ denotes the blank symbol. We also define the input/output alphabet as the set $\Sigma=\{0,1\}$.

Formally, we define a PTM as $\mathcal{M}=(Q,\delta_0,\delta_1,q_0, q_{f})$, where $Q$ denotes a finite set of states, $q_0 \in Q$ is the initial state, and $q_f \in Q$ is the final (or accepting) state. The behavior of the machine is determined by two transition functions, $\delta_0$ and $\delta_1$, which specify the next move depending on the outcome of a fair coin flip. At each step of the computation, the machine is in a state $q\in Q$ and reads a pair of symbols $(\alpha_{\mathrm{in}},\alpha_{\mathrm{work}})\in\tapeAlph\times\tapeAlph$ corresponding to the current symbol under the head of the input and work tape, respectively.
The machine flips a fair coin and depending on the output $b\in\{0,1\}$ of it, applies the corresponding transition $\delta_b$. A configuration for a PTM represents its computational state at a specific moment during its execution. Formally, a configuration is a tuple
\[
C=(q, (l_{\mathrm{i}},symb_{\mathrm{i}},r_{\mathrm{i}}), (l_{\mathrm{w}},symb_{\mathrm{w}},r_{\mathrm{w}}), (l_{\mathrm{o}},symb_{\mathrm{o}},r_{\mathrm{o}}))
\]
where $q$ is the current state and every tape is represented by a triple $(l,symb,r)$ composed of the following : the segment of the tape to the left of the head, the symbol currently under the head, the segment of the tape to the right of the head. The way the underlying configuration evolves can be defined formally as follows.
\begin{definition}[Evaluation Relation of a PTM]
	\label{def:stepRelationPTM}
	Let $\ConfSet$ denote the set of all possible configurations of a PTM. 
	We define a family of relations 
	$\Rightarrow \;\subseteq\; \ConfSet \times \mathbb{N} \times \D(\ConfSet)$
	as follows.
	\begin{mathpar}
		\scriptsize
		\mprset{sep=2em}
		\inferrule* [right=Final]
		{n \in \mathbb{N} \\ \text{$C$ is final}}
		{C \xRightarrow{n} \{C^1\}}
		\qquad\qquad
		\inferrule* [right=Step]
		{\delta_0(C) = D \\ \delta_1(C) = E \\ D \xRightarrow{n} \mathcal{D} \\ E \xRightarrow{n} \mathcal{E}}
		{C \xRightarrow{n+1} \tfrac{1}{2}\,\mathcal{D} + \tfrac{1}{2}\,\mathcal{E}}
	\end{mathpar}
\end{definition}
Moreover, we say that a PTM works in time $r$ if for every input $s\in\{0,1\}^n$, the initial configuration $C_s$ for $s$ is such that $C_s \xRightarrow{r(n)} \mathcal{E}$ for some $\mathcal{E}\in\D(\ConfSet)$. Given a first-order interface $I=(p,q)$, we say that a PTM computes an $I$-function $f$ in time $g:\N\rightarrow\N$ if for every input $s\in\{0,1\}^{p(n)}$, if the initial configuration $C_{(1^n,s)}$ is such that there exists $m \in \N$ with $C_{(1^n,s)}\xRightarrow{m} \mathcal{E}$, where $\forall t\in \{0,1\}^{q(n)}.f_n(s)(t)=\sum_{C \text{ final configuration for t}} \mathcal{E}(C)$, and $m\leq g(n)$.

We are finally able to define the complexity class that represents our target. 
The class $\mathsf{PPT}(I)$ includes those $I$-functions which can be computed by PTMs working in polynomial time.
Noticeably, this is the class that $\lamBLL$ captures.
 
\begin{restatable}[First-Order Completeness]{theorem}{Completeness} 
	\label{th:PPTCompleteness}
	For every first-order interface $I$, the class of $I$-functions which can be computed
	by $\lamBLL$ terms is precisely $\mathsf{PPT}(I)$.
\end{restatable}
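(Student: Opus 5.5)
The statement has two inclusions, and I would prove them separately.

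For the inclusion of $\lamBLL$-computable $I$-functions into $\mathsf{PPT}(I)$ (soundness), I would take a term $\vdashv[\varnothing] M:\Str[p]\lollipop\Str[q]$ and a string $s\in\{0,1\}^{p(n)}$. The computation $(\return M)s$ is typable with derivation $\pi'$, whose weight is bounded by $1+\poly{\pi}+p$. By Corollary~\ref{cor:polytimesoundness}, together with Theorem~\ref{th:polysound} and Proposition~\ref{lemma:monotonicity_poly_size}, the configuration $((Mn)s,\emptyset)$ reaches a value within $\poly{\pi'}(n)$ steps. Every intermediate term has polynomial size, because the bound of Proposition~\ref{lemma:monotonicity_poly_size} persists along reduction by subject reduction. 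I would then build a PTM that, on input $(1^n,s)$, writes down $Mn$ and simulates $\smstep[\varnothing][{\Str[q]}][n]$ step by step. Whenever a symbol $f_p$ is reached, the machine runs $\algof(f)$ on input $1^{p(n)}$. Since $\algof(f)$ uses a polynomial number of fair coins, it is itself a PTM step. The output distribution of this machine coincides with $\itsmstep$. One technical point needs care: the time bound in the PTM definition is uniform over all coin outcomes, while $\algof(f)$ is only assumed to run in polynomial time. I would take this to mean a worst-case bound, so the Step rule of Definition~\ref{def:stepRelationPTM} applies with a global polynomial.

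For the inclusion of $\mathsf{PPT}(I)$ into the $\lamBLL$-computable functions (completeness), I would fix a PTM $\mathcal{M}$ running in time $r(n)$ on inputs $(1^n,s)$, and encode it with only a few function symbols: a cyclic shift $\mathit{shl},\mathit{shr}:\Str[p]\to\Str[p]$, a bit read $\mathit{read}:\Str[p]\to\BB$ of the first position, a bit write $\mathit{write}:\Str[p]\times\BB\to\Str[p]$, and $\random$ at length $1$ (coin flip).

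The encoding would proceed as follows.
\begin{enumerate}
\item Each tape is represented as a circular string of length $c\cdot r$ over $\{0,1\}$. Each tape symbol in $\tapeAlph$ is encoded by two bits, or the tape is split into two parallel strings. The head position is the first cell, so moving left or right is a single shift. Since the machine never visits more than $r(n)$ cells, circularity never causes wrap-around interference.
\item States are encoded as tuples of booleans in $\BB^{\lceil\log|Q|\rceil}$.
\item A configuration then has ground-tensor type $\typConf$. Because ground types are freely duplicable ($G\boxplus G=G$), no grade is consumed by copying.
\item The term $\transitionTerm{b}$ is a finite nest of $\ifthen{\cdot}{\cdot}{\cdot}$ on the state bits and the read symbols, returning the new state, the symbols to write, and the directions.
\item $\stepPTMTerm:\typConf\lollipop\typConf$ flips a coin with $\random$, branches into $\transitionTerm{0}$ or $\transitionTerm{1}$, and acts as the identity if the state is $q_f$.
\item The whole machine runs as $\loopp{\stepPTMTerm}{r}{\initTerm(s)}$, which is typable by the \textsc{loop} rule with $\Gamma=\varnothing$.
\item $\initTerm$ copies $s$ into the blank input tape bit by bit, using a loop of length $p$ that shifts and writes. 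It also fills the blank tapes, writing the unary $1^n$ if needed.
\item $\decodeOutputStringTerm$ extracts the first $q(n)$ output cells into $\Str[q]$ with a loop of length $q$.
\end{enumerate}
Finally, I would prove by induction on $k$ that after $k$ loop iterations the distribution over encoded configurations equals the image of the PTM distribution from Definition~\ref{def:stepRelationPTM}. Idling at final configurations matches the \textsc{Final} rule.

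I expect the main obstacle to be the string plumbing in completeness. Copying or extracting between strings of different lengths ($\Str[p]$ into $\Str[cr]$, and $\Str[cr]$ into $\Str[q]$) requires either length-changing primitives or a construction from shifts and writes on a fixed-length buffer. I would first try adding a single basic length-adjusting symbol (pad/truncate) to the minimal set, then check whether it is derivable from the others. A secondary, bookkeeping obstacle is ensuring the loop body is a closed value of type $P\lollipop P$ with $P$ purely ground, so that the polynomial grade stays at $r$.
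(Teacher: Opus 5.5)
Your route is the paper's. Soundness goes through Corollary~\ref{cor:polytimesoundness}: you simulate $\mathbf{sm}$ step by step and hand each $f_p$ to $\algof(f)$. Your worst-case reading of its running time is the right one for the uniform time bound of Definition~\ref{def:stepRelationPTM}. Completeness encodes the PTM with shifts, bit read/write and $\random$, iterates $\stepPTMTerm$ $r$ times with $\mathtt{loop}$ from $\initTerm$, and proves multi-step simulation by induction.

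Your encoding choices are harmless or improvements:
\begin{itemize}
\item A single circular buffer with the head at position $0$ turns a head move into a pair of shifts. This avoids the paper's left/symbol/right triple and the $\moveLTerm$/$\moveRTerm$ plumbing.
\item Explicit idling at $q_f$ is exactly what the \textsc{Final} rule demands. The paper's $\stepPTMTerm$ always applies $\delta_b$, and its one-step lemma only covers non-final configurations.
\item You put $1^n$ on the input tape, which the paper's $\initTerm$ omits.
\end{itemize}
One parameter needs fixing. On the input tape, ``at most $r$ visited cells'' is the wrong criterion, because a left move can wrap around onto unvisited input cells. The buffer needs at least $r$ plus the input length in cells, so choose $c$ accordingly.

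The obstacle you leave open needs no new primitive. Every symbol of $\Fun$ comes with a constructor $f_p$ for \emph{every} polynomial $p$, so a nullary symbol of type $\Str[i]$ gives a fresh buffer of any polynomial length. The paper uses a nullary $\texttt{zeros}$, as $\texttt{zeros}_{2r}$, $\texttt{zeros}_{2p}$, $\texttt{zeros}_{2q}$. Your own $\random$ works equally well as $\random_{cr}$ or $\random_{q}$, rather than only at length $1$. Every cell is overwritten afterwards, so all probabilistic branches end in the same value. Moving data between $\Str[p]$, $\Str[cr]$ and $\Str[q]$ is then a $\mathtt{loop}$ whose state is the ground pair (source, target): read the first bit of the source, write it at the first position of the target, and shift both. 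This is what the paper's $\encodeInputStringTerm$, $\initInputTapeTerm$, $\finalTerm$ and $\decodeOutputStringTerm$ do.

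Your fallback of a direct pad/truncate symbol does not work as stated. A symbol's type contains only the single size variable $i$, instantiated uniformly by $f_p$. So no fixed symbol converts $\Str[p]$ into $\Str[cr]$, or $\Str[cr]$ into $\Str[q]$, for independent $p,q,r$. You would need machine-specific symbols, and once those are allowed the completeness direction collapses: any $f\in\mathsf{PPT}(I)$ is itself an admissible symbol of type $\Str[p]\to\Str[q]$. The only length-changing symbol you need is a buffer creator of type $\Str[i]$ (or $\Str[1]\to\Str[i]$), which is exactly the $\texttt{zeros}$ solution.
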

\begin{proof}[Sketch]
		The fact that every first-order term in $\lamBLL$ computes a function in the underlying class $\mathsf{PPT}(I)$
		is an easy consequence of Corollary \ref{cor:polytimesoundness}.
		It thus remain to show that polytime PTMs can be embedded into $\lamBLL$.
		Here we only provide some hints for the proof, which can be found in Appendix \ref{app:completeness}. 
		The structure of the embedding is fairly standard, but it should be noted that:
	\begin{itemize}
		\item
		We need to handle the fact that the alphabet on which PTMs work has cardinality $3$, and therefore it is necessary to encode a symbol using multiple booleans, which is standard.
		\item
		Care must be taken due to the fact that the string type is sized, and therefore all strings must have a statically bounded length. However, handling this is not too difficult, because the PTM we want to encode works in polynomial time, and thus cannot use more than a polynomially bounded portion of their tapes.
		\item
		We also need to manage the fact that we want $\Fun$ to contain very few function symbols, namely cyclic shifts and functions for testing and setting the first bit in a string. This limits us, but given the locality of PTMs, this is not an issue either.
	\end{itemize}
\end{proof}

\section{Beyond First-Order}\label{sect:secondorder}

In the previous section, we showed that $\lamBLL$ 
is sound and complete with respect to a notion of polynomial-time 
computable function that faithfully captures cryptographic practice. As already 
mentioned in Section \ref{sect:birdseye}, however, not all cryptographic constructions
and adversaries have a first-order nature, as witnessed by the example in Section \ref{sec:cryptography}.

Let us suppose to modify the notion of 
polynomial-time computable function as follows. A 
\emph{second-order interface} is a pair of first-order interfaces $(I,J)$.
Given such a pair $P=((p,q),(r,s))$ a \emph{$P$-function} $f$ consists
of a family of functions of the following form:
\[
f=\{f_n: (\{0,1\}^{p(n)} \rightarrow \D(\{0,1\}^{q(n)}))\times\{0,1\}^{r(n)} \rightarrow \D(\{0,1\}^{s(n)}) \}_{n\in\mathbb{N}} .
\]
A function $f$ of this kind is naturally computed by a PTM of the sort introduced in Section 
\ref{sect:soundcomplete}, but with the additional capability of interacting with an oracle. The oracle behaves 
according to the interface $I=(p,q)$ but is otherwise an arbitrary function. 
The runtime of such a PTM can be defined in the usual way,
with oracle calls having unit cost. All in all, this allows us to introduce the
class $\mathsf{PPT}(P)$.

The definition we have just given intuitively corresponds to 
$\lamBLL$ terms whose type is of the form
$!_{t}(\Str[p]\lollipop\Str[q])\otimes\Str[r]\lollipop\Str[s]$.
 How, however, can we prove that this is indeed the case? There is evidently a problem in the very 
definition of computability for $\lamBLL$ terms of type like the one above. 
This issue is rooted in the fact that, in our calculus, at least in the form in which it has been 
used so far, it is not possible to test how such terms behave when fed with (first-order) 
functions \emph{not themselves representable} in the language.

\newcommand{\AFun}{\mathcal{A}}
Accordingly, it is necessary (solely for the purpose of extending the soundness and completeness result 
in the direction we are considering) to extend $\lamBLL$ as follows:
\begin{varitemize}
	\item
	On the one hand, the language of \emph{computations} 
	is extended with symbols that are syntactically treated as function symbols but that
	semantically serve to model \emph{oracle} calls. This form a set $\AFun$.
	\item
	On the other hand, these new symbols are given semantics 
	in a novel way. In particular, unlike the symbols in $\Fun$, 
	it is not assumed that any symbol $f\in\AFun$ has a fixed semantics $\den{f}$ 
	and a fixed evaluation algorithm $\algof(f)$. Instead, the semantics $\den{f}$ is passed 
	as an additional parameter to the operators 
	$\mathbf{sm}$ and $\mathbf{ism}$, which in this way will be able to 
	understand how to evaluate $f$.
	\item
	Finally, the weight
	$\mathit{poly(\pi)}$
	 of a derivation 
	$\pi$ consisting of an instance of a functional symbol in $\AFun$ 
	is defined to be equal to $1$ plus the weight of the immediate subderivations.
\end{varitemize}
With these three new ingredients, it is finally possible to define when a term of the above kind computes a 
$P$-function. More importantly, it becomes possible to prove that the resulting class of functions coincides with that of 
$P$-functions computable in polynomial time:
\begin{theorem}[Second-Order Completeness]
	For every second-order interface $P$, the class of $P$-functions which can be computed
	by $\lamBLL$ terms is $\mathsf{PPT}(P)$.
\end{theorem}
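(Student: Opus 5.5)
The proof splits into two inclusions, mirroring Theorem~\ref{th:PPTCompleteness}. First, though, we fix what it means for a term to compute a $P$-function. For $P=((p,q),(r,s))$, let $M$ be a closed term of type $!_{t}(\Str[p]\lollipop\Str[q])\otimes\Str[r]\lollipop\Str[s]$. We say $M$ computes $f$ if, for every $n$, every family $g_n:\{0,1\}^{p(n)}\to\D(\{0,1\}^{q(n)})$ and every $s\in\{0,1\}^{r(n)}$, the following holds. Apply $Mn$ to $\lis{\oc(\lambda x.\,o_p(x)),s}$, where $o\in\AFun$ is an oracle symbol whose semantics $\den{o}_n:=g_n$ is passed to $\mathbf{ism}$. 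The resulting distribution must then be $f_n(g_n,s)$.

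\textbf{Soundness.} We redo Theorem~\ref{th:polysound} with the extended weight, in which an oracle node costs $1$ plus its subderivations. The operational rule for $o_p$ is the rule for $f_p$ with $\den{o}$ taken as a parameter. Every case of the induction, including the nonstandard Substitution Lemma, is uniform in $\den{f}$, so the step bound $\poly{\pi}(n)$ holds for \emph{every} oracle semantics. Proposition~\ref{lemma:monotonicity_poly_size} also carries over unchanged. We then repeat the simulation of Corollary~\ref{cor:polytimesoundness} on an oracle PTM. Each reduction of an $o_p(Z)$ redex becomes a single oracle query on the string $Z$, of length $p(n)$. Every other step costs polynomial time. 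Hence the number of queries and the total running time are bounded by a polynomial in $n$ that is independent of $g$, which places the function in $\mathsf{PPT}(P)$.

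\textbf{Completeness.} Let $\mathcal{M}$ be an oracle PTM running in time $u(n)$. We reuse the encoding from Appendix~\ref{app:completeness}: the configuration type $\typConf$, the terms $\stepPTMTerm$, $\initTerm$ and $\finalTerm$, and the outer $\mathtt{loop}$ over $u(i)$ iterations. We add a dedicated oracle query tape and an oracle answer tape, each of fixed width $p(i)$ and $q(i)$. The oracle is a variable $y:\oc_{u}(\Str[p]\lollipop\Str[q])$ that must be threaded through the loop. Since $\mathtt{loop}$ requires a step function of type $P\multimap P$ with $P$ positive, we cannot simply capture $y$ in the body. Two options are available.
\begin{itemize}
\item[$\bullet$] Make the loop state $P=\typConf\otimes\oc_{u}(\Str[p]\lollipop\Str[q])$, with grades decreasing along the iterations. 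This is problematic because the grade of $P$ is fixed.
\item[$\bullet$] Split the oracle into a $u$-fold tensor of $\oc_1$ copies. Since $\oc_{u}A$ cannot be split syntactically into many copies, the natural fix is again the \textsc{loop} rule: a body $V$ with $\Gamma=y:\oc_1(\ldots)$ typed under $p\ast\Gamma$ gives exactly $\oc_{u}$ usage.
\end{itemize}
So the body is $\lambda c.\,\stepPTMTerm'\,c$, where the free $y$ occurs once under $\der$. The rule then yields the context $y:\oc_{u}(\ldots)$. In each step the body tests whether the state is the query state. If it is, the body calls $\der(y)$ on the query tape's contents, of type $\Str[p]$, and writes the answer into the answer tape. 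Otherwise it performs an ordinary transition. Because branches of conditionals must be closed, the call to $y$ must happen unconditionally in every iteration, with the result discarded when no query is pending. This costs at most one query per step, so the total stays within $u$ queries, and a dummy query does not change the output distribution.

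The main obstacle is exactly this typing of the oracle use inside the iterated step function. We need to check that the \textsc{loop} rule's $p\ast\Gamma$ produces grade $u$, and that the unconditional dummy query preserves the semantics. It does, because $g_n$ has no side effects visible to $\mathcal{M}$. Correctness then follows by induction on the number of steps: after $k$ iterations the term's distribution equals $\mathcal{M}$'s configuration distribution $C\xRightarrow{k}$ with oracle $g_n$. The remaining tape-handling with shifts and bit read/write is as in the first-order case.
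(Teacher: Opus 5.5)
Your proposal is correct and follows the paper's own sketch. For soundness, you re-prove size and step soundness with the oracle semantics passed as a parameter and simulate evaluation on an oracle PTM. For completeness, you extend the PTM encoding with oracle query/answer tapes. Your use of the \textsc{loop} rule's $p\ast\Gamma$, turning a body that uses $y:\oc_1(\Str[p]\lollipop\Str[q])$ once per iteration into the grade $\oc_u$, spells out a typing detail the paper leaves implicit.

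Two minor remarks. First, the unconditional dummy query is harmless but not forced: the closed-branch convention is discharged by abstraction. For example, $(\ifthen{b}{\return\lambda z.M_1}{\return\lambda z.M_2})\,y$ is typable with $y:\oc_1(\Str[p]\lollipop\Str[q])$ and calls the oracle only in the query branch. Second, the step-soundness cases are uniform in $\den{f}$ only if the oracle node is weighted like a $\Fun$ node, i.e.\ $1+\sizeG{G}+\sum_j\poly{\pi_j}$, rather than just $1$ plus its subderivations as you (and the paper) state. Otherwise the reduct $\return V$ with $V:\Str[q]$ can outweigh the redex.
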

\begin{proof}[Proof Sketch]
	The fact that every second-order $\lamBLL$ term computes a function in $\mathsf{PPT}(P)$ for an appropriate second-order interface $P$ can be proven by providing size and cost soundness results, following what we did in the first-order case. The crucial point is the lifting of Corollary \ref{cor:polytimesoundness}, where it must be noted that the simulation of the operational semantics can only be performed on oracle PTMs, given that this operational semantics depends on the interpretation of the symbols in $\AFun$. As for the encoding of PTMs, the procedure is similar to the first-order case. However, it is necessary to simulate the oracle call, and to do this, we proceed as follows:
	\begin{itemize}
		\item
		Of course, the PTM to be encoded has additional tapes available for the oracle call, but this does not pose any particular problem, because adding two more tapes to the pictures does not represent a substantial change.
		\item
		The simulation of the oracle call consists in passing the content of the corresponding input oracle tape to a first-order function, and then transferring the result back to the output oracle tape.
	\end{itemize}
	In other words, the proof follows very closely the structure of the one from Section \ref{sect:soundcomplete}.
\end{proof}
\section{Conclusion}\label{sect:conclusion}
In this work, we have shown that $\lamBLL$ \cite{aplas} is sound and complete for a notion of probabilistic polynomial time corresponding to the cryptographic practice, both at first order and at second order. To the best of the authors' knowledge, this is the first characterization of a notion of polynomial time specifically designed with cryptography in mind and in the spirit of implicit computational complexity.

\subsubsection*{Related Work}
A number of logics, programming languages, and formal systems have been designed to support cryptographic proofs and to capture probabilistic polynomial time. The first example in this direction is the logic, together with its associated simple programming language, introduced by Impagliazzo and Kapron~\cite{impagliazzo_kapron}. Along the same lines, one should certainly mention the works of Nowak and Zhang~\cite{nowak2010calculus}. In all these cases, however, the notion of probabilistic polynomial time employed is the classical first-order one, and there are no results concerning the correspondence with second-order polytime.
Broadening the scope, one should also cite the contributions on characterizations of probabilistic polynomial time via ICC techniques (see, e.g.,~\cite{DalLagoKahleOitavem2021}). Once again, the emphasis is on the characterization of classes such as $\mathsf{PP}$, $\mathsf{BPP}$ and $\mathsf{ZPP}$, which differ from the classes considered in this work.

\bibliographystyle{splncs04} 
\bibliography{biblio} 


\newpage
\appendix     
\section{Progress}
\label{app:progress} 
For a variable context $\Gamma$, a location context $\Theta$ and a type $A$, let $\CompO{\Gamma;}{\Theta}{A}$ and $\ValuesO{\Gamma;}{\Theta}{A}$ be the sets of derivable computation terms and values respectively which are open for the security parameter variable $i$:
\[
\Lambda_o(\Gamma; \Theta; A) := \{ M  \in \Lambda \mid  \Gamma   \vdashc[\Theta] M : A \} \qquad \Val_o(\Gamma; \Theta; A) := \{ V \in \Val \mid  \Gamma \vdashv[\Theta]  V : A\}.
\]

In particular, it requires that for every positive natural number $n$, the term obtained by substituting $n$ for the security parameter behaves correctly according to the semantics. This is formalized as follows

\begin{theorem}[Progress]\label{th:progress}
    For every $M\in \Lambda_o(\varnothing; \Theta; A)$, either
    \begin{enumerate}
        \item $M=\return V$ for some value $V\in \Val_o(\varnothing; \Theta; A)$ or
        \item $\forall n \geq 1$, $\forall e \in \Store_{\Theta n}$, $\exists \Dist\in \D(\CompC{n}{\Theta}{A} \times \Store_{\Theta\subst{n}})$ such that $\smstep[\Theta][A][n](Mn,e)=\Dist$.
    \end{enumerate}
\end{theorem}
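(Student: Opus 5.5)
The plan is to argue by induction on the type derivation $\pi \triangleright \varnothing \vdashc[\Theta] M : A$; equivalently, by case analysis on the last typing rule, with the induction hypothesis needed only for the evaluation-context cases (application and \texttt{let}). Two auxiliary facts come first. The first is a \emph{canonical forms} lemma for closed values. If $\varnothing \vdashv[\Theta] V : A$, then:
\begin{itemize}
\item if $A=\BB$, $V$ is $\true$ or $\false$;
\item if $A=P\otimes Q$, $V=\lis{Z_1,Z_2}$;
\item if $A=\oc_p B$, $V=\oc N$;
\item if $A=P\multimap B$, $V=\lambda x.N$.
\end{itemize}
This follows by inspecting the value rules. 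Because the context is empty, the \textsc{var} case cannot occur. The \textsc{weak} case cannot occur either, since it would produce a nonempty context. The second fact is the substitution lemma stated earlier in the paper. It lets us move from $M$ (open in $i$) to $Mn$: a derivation for $M$ yields one for $Mn$ with types $An$ and reference context $\Theta n$. In particular, closed values remain of the canonical shape after substituting $n$, because substitution of $i$ does not change the term constructors.

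I then go through the rules for computations. \textsc{eta} gives case~1 directly. For \textsc{der}, canonical forms gives $Z=\oc N$, so $\smstep(\der(\oc N)n,e)=\delta(Nn,e)$. For \textsc{let\_pair}, $Z=\lis{V,W}$, and the pair rule fires. For \textsc{case}, $Z\in\{\true,\false\}$, and one of the two conditional rules applies. \textsc{set} and \textsc{get} always fire; for \textsc{get}, the store $e\in\Store_{\Theta n}$ assigns $r$ a value, so $e(r)$ is defined. For \textsc{fun}, the arguments are closed ground values, and $\den{f}_{p(n)}$ gives a distribution on $\den{G p}_n$, which lifts to a distribution over configurations $(\return V,e)$. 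For \textsc{loop}, canonical forms gives $V=\lambda x.N$. After substitution the grade $p(n)$ is a positive integer, so either the $1$ clause or the $k{+}1$ clause applies.

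The evaluation-context rules are where the induction hypothesis is used.
\begin{itemize}
\item \textbf{Application $NZ$.} Apply the induction hypothesis to $N$. If $N=\return V$, then $V=\lambda x.N'$ by canonical forms, and $\beta$ fires. Otherwise $\smstep(Nn,e)$ is a distribution, and so is its bind with the map $(N'',e')\mapsto\delta(N''Zn,e')$.
\item \textbf{\texttt{let} $x=N$ in $M'$.} If $N$ is a return, the substitution clause fires. Otherwise the bind of a proper distribution is proper.
\end{itemize}

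For \textsc{weak}, the context is empty, so the case cannot arise at the root. A hidden \textsc{weak} with an empty conclusion context is impossible.

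The one genuinely delicate point is that the result must be a distribution in $\D(\CompC{n}{\Theta}{A}\times\Store_{\Theta n})$, not merely some $\D_\bot$ value. This requires that every term in the support is typable at $An$ and every store is in $\Store_{\Theta n}$. That is exactly subject reduction, so I would prove progress simultaneously with a subject reduction statement. The hardest subcases are the ones that need a value-substitution lemma for graded contexts: $\beta$, \texttt{let}, \textsc{let\_pair}, and \texttt{loop} unfolding. In those subcases one must check that $p\ast\Gamma$ and $\boxplus$ combine correctly; in particular, loop unfolding splits $\oc_{k+1}$ usage into $k$ plus $1$. I would state a standard substitution lemma $\Gamma, x:P\vdash M:A$, $\Delta\vdash V:P$ $\Rightarrow$ $\Gamma\boxplus\Delta\vdash M[V/x]:A$, with an empty $\Delta$ here. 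Because $\Delta$ is empty, the grading bookkeeping in those subcases becomes trivial.
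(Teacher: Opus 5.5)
Your proposal is correct and takes essentially the same route as the paper: induction on the typing derivation with a case split on the last rule. Closed values have canonical shapes, which settles the \textsc{der}, \textsc{let\_pair}, \textsc{case}, \textsc{loop} and $\beta$ cases, and the induction hypothesis is needed only for application and \texttt{let}. Your one addition, proving typability of the reducts simultaneously, is not done inside the paper's Progress proof, which takes it as implicit in the codomain of $\mathbf{sm}$ and establishes it separately in Lemma~\ref{lemma:descreasing} via the quantitative Substitution Lemma~\ref{lemma:substitution_lamBLL}; so your version is, if anything, more careful.
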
 

\begin{proof}
	By induction on the structure of a type derivation for $M$, by cases.
	\begin{itemize}
		\item \textit{Rule ETA}: We have that $M=\return{V}$, where $V\in \Val_o(\varnothing; \Theta; A)$ and we can conclude. 
    By induction on the derivation of the typing judgment $\vdashc[\Theta n] Mn: An$
    \\item.
        \item \textit{Rule FUN}: 
        \begin{align*}
            \inferrule* { \mathrm{typeof}(f) = G_1 \times \dots \times G_m \to G \\ ( \vdashv[\Theta n] Z_k : G_k\subst{pn})_{1\leq k \leq m} \\ p \in \N_{\geq 1}[i]}{  \vdashc[\Theta n] f_p(Z_1, \dots, Z_m):G\subst{p}n}
        \end{align*}
        If $\den{f}_{p(n)} : \den{G_1}_{p(n)} \times \dots \times  \den{G_m}_{p(n)} \to D(\den{G}_{p(n)})$, we have that \\ $\den{f}_{p(n)}(Z_1,\ldots,Z_m)=\sum_{j=1}^{m} a_j \mbox{ } \delta_{M_j}$. By operational semantics we have\\ $\forall e. \smstep[\Theta][A][n](f_p(Z_1, \dots, Z_m),e)=\sum_{j=1}^{m} a_j \mbox{ } \delta(M_j,e')$ and we can conclude.
        \\
        \item \textit{Rule LET\_PAIR}:
        \begin{align*}
            \inferrule* { \vdashv[\Theta n] Z :  Pn \otimes Qn\\x : Pn, y : Qn \vdashc[\Theta n]  M : An}
            {\vdashc[\Theta n]  \letin{ \lis{x,y} = Z}{M} : An }
        \end{align*}
        We must have $Z=\lis{Z_1,Z_2}$ such that $\vdashv[\Theta n] Z_1 : Pn$ and $\vdashv[\Theta n] Z_2 : Qn$ are derivable, then by small step semantics we have that  $\forall e. \smstep[\Theta][A][n](\letin{ \lis{x,y} = \lis{Z_1, Z_2}}{M},e)=\delta(N[Z_1/x, Z_2/y], e)$ and we can conclude.\\

        \item \textit{Rule DER}:
        \begin{align*}
            \inferrule* { \vdashv[\Theta n] Z :  !_1 An}
            {\vdashc[\Theta n]  \der(Z): An }
        \end{align*}
        We must have $Z=!M$ such that $\vdashc[\Theta n] N : An$ is derivable, then by small step semantics we have that  $\forall e. \smstep[\Theta][A][n](\der(!N),e)=\delta(N, e)$ and we can conclude.\\

        \item \textit{Rule LET}:
        \begin{align*}
            \inferrule* {\vdashc[\Theta n] N_1 :  Pn \\x : Pn \vdashc[\Theta n]  N_2: An}
            {\vdashc[\Theta n]  \letin{x = N_1}{N_2} : An }
        \end{align*}
        By induction hypothesis on $N_1$ we have that either $N_1=\return{V}$ or $\forall e \in \Store_{\Theta\subst{n}}. \exists D \in \Dist_{\bot}(\CompC{n}{\Theta}{A} \times \Store_{\Theta\subst{n}})$ such that $\smstep[\Theta][A][n](N_1)=D$. We proceed by analyzing these two cases:
        \begin{itemize}
            \item Sub-case $N_1=\return{V}$:\\
            We have that $M=\letin{x = N_1}{N_2}=\letin{x = \return{V}}{N_2}$. By hypothesis of LET-typing rule we have that $\vdashc[\Theta n] N_1 :  Pn$ is derivable, then in this case $\vdashc[\Theta n] \return{V} :  Pn$ is derivable.
            By small step semantics we have that  $\forall e. \smstep[\Theta][A][n](\letin{x = \return{V}}{N_2},e)=\delta(N_2[V/x], e)$ and we can conclude.
            \item Sub-case $D=\sum_{k} a_k \delta(M_k, e_k)$ such that $\smstep[\Theta][A][n](N_1)=D$:\\ 
            By small step semantics we have that $\forall e. \smstep[\Theta][A][n](\letin{x = N_1}{N_2},e)=\sum_{k} a_k \mbox{ }\delta(\letin{x=M_k}{N_2}, e_k)$ and we can conclude.
        \end{itemize}
        \vspace{3mm}
        \item \textit{Rule APP}:
        \begin{align*}
            \inferrule* {\vdashc[\Theta n] N:  Pn \multimap An \\  \vdashv[\Theta n]  Z: Pn}
            {\vdashc[\Theta n]  NZ : An }
        \end{align*}
        By induction hypothesis on $N$ we have that either $N=\return{V}$ or $\forall e \in \Store_{\Theta\subst{n}}. \exists D \in \Dist_{\bot}(\CompC{n}{\Theta}{A} \times \Store_{\Theta\subst{n}})$ such that $\smstep[\Theta][A][n](N)=D$. We proceed by analyzing these two cases:
        \begin{itemize}
            \item Sub-case $N_1=\return{V}$:\\
            We have that $M=NZ=(\return{V})Z$. By hypothesis of APP-typing rule we have that $\vdashc[\Theta n] N:  Pn \multimap An$ is derivable, then in this case $\vdashc[\Theta n] \return{V}:  Pn \multimap An$ is derivable, so $V$ is in the form $V=\lambda x.N'$ for some $N'$ such that $x:Pn\vdashc[\Theta n] N': An$.
            By small step semantics we have that  $\forall e. \smstep[\Theta][A][n]((\return{\lambda x. N'})Z,e)=\delta(N'[Z/x], e)$ and we can conclude.
            \item Sub-case $D=\sum_{k} a_k \delta(M_k, e_k)$ such that $\smstep[\Theta][A][n](N)=D$:\\ 
            By small step semantics we have that $\forall e. \smstep[\Theta][A][n](NZ,e)=\sum_{k} a_k \mbox{ }\delta(M_k Z, e_k)$ and we can conclude.
        \end{itemize}
        \vspace{3mm}
        \item \textit{Rule LOOP}:
        \begin{align*}
            \inferrule* {\vdashv[\Theta n] V:  Pn \multimap Pn \\  \vdashc[\Theta n]  N: Pn}
            {\vdashc[\Theta n]  \loopp{V}{p}{N} : Pn }
        \end{align*}
        By hypothesis of LOOP-typing rule we have that $\vdashv[\Theta n] V:  Pn \multimap Pn$ is derivable, s $V$ is in the form $V=\lambda x.N'$ for some $N'$ such that $x: Pn\vdashc[\Theta n]  N': Pn$ derivable. By small step semantics we have to analyze the following two sub-cases:
        \begin{itemize}
            \item Sub-case $p=1$:\\
            By small step semantics we have that $\forall e. \smstep[\Theta][A][n](\loopp{\lambda x.N'}{1}{N},e)=\delta(\letin{x=N'}{N},e)$ and we can conclude.
            \item Sub-case $p=1+q$ for some $q\in\N_{\geq 1}[i]$:\\
            By small step semantics we have $\forall e. \smstep[\Theta][A][n](\loopp{\lambda x.N'}{1+qn}{N},e)=\\\delta(\loopp{\lambda x.N'}{qn}{N},e)$ 
        \end{itemize}
        \vspace{3mm}
        \item \textit{Rule CASE}:\\
        \begin{align*}
            \inferrule* {\vdashv[\Theta n] Z:\mathbb{B}\\ 
            \vdashc[\Theta n]  N_1: An\\
            \vdashc[\Theta n]  N_2: An}
            {\vdashc[\Theta n]  \ifthen{Z}{N_1}{N_2} : An }
        \end{align*}
        By hypothesis of CASE-typing rule we have that $\vdashv[\Theta n] Z:\mathbb{B}$ is derivable, so we have to analyze the following two sub-cases:
        \begin{itemize}
            \item Sub-case $Z=\true$:\\
            By small step semantics we have $\forall e. \smstep[\Theta][A][n](\ifthen{\true}{N_1}{N_2},e)=\delta(N_1,e)$ and we can conclude. 
            \item Sub-case $Z=\false$:\\
            By small step semantics we have $\forall e. \smstep[\Theta][A][n](\ifthen{\false}{N_1}{N_2},e)=\delta(N_2,e)$ and we can conclude. 
        \end{itemize}
        \vspace{3mm}
        \item \textit{Rule SET}:
        \begin{align*}
            \inferrule* {\vdashv[\Theta n] Z:G \\ r:G \in\Theta n}
            {\vdashc[\Theta n]  \setr{r}{Z} : \mathbb{U} }
        \end{align*}
        By small step semantics we have $\forall e. \smstep[\Theta][A][n](\setr{r}{Z},e)=\delta(\return{\ast},e[Z/r])$ and we can conclude. \\
        \item \textit{Rule GET}:
        \begin{align*}
            \inferrule* {r:G \in\Theta n}
            {\vdashc[\Theta n]  \get r : G}
        \end{align*}
        By small step semantics we have $\forall e. \smstep[\Theta][A][n](\get r,e)=\delta(\return{e(r)},e)$ and we can conclude. 
    \end{itemize}
\end{proof}
\section{Soundness}
\label{app:soundness} 
\begin{figure}[p]
    \begin{minipage}{1.2\textwidth}

     \begin{mathpar}\mprset{sep=0.2em,  vskip =0.01ex}

        \polyStarInferRule{\inferrule* [right=true]{~}{\vdashv[\Theta] \true : \mathbb{B}}}
        := 0

        \polyStarInferRule{\inferrule* [right=false]{~}{\vdashv[\Theta] \false : \mathbb{B}}}
        := 0

        \polyStarInferRule{\inferrule* [right=unit]{~}{\vdashv[\Theta] * : \mathbb{U}}}
        := 0

        \polyStarInferRule{\inferrule* [right=string]{
        s \in \{0,1\}^c \\p : i \mapsto c \text{ is a const. polynomial}}{\vdashv[\Theta] s : \mathbb{S}[p]}}
        := 0

        \polyStarInferRule{\inferrule* [right=tensor]{
        \pi_1 \triangleright \Gamma \vdashv[\Theta] Z_1 : P \\
        \pi_2 \triangleright \Delta \vdashv[\Theta] Z_2 : Q
        }{\Gamma \boxplus \Delta \vdashv[\Theta] \lis{Z_1, Z_2} : P \otimes Q}}
        := \polyStar{\pi_1} + \polyStar{\pi_2}

        \polyStarInferRule{\inferrule* [right=bang]{\pi\triangleright \Gamma \vdashc[\Theta] M : A}{p \ast \Gamma \vdashv[\Theta] \oc M : \oc_p A}}
        := p \times \poly{\pi}+p

%
    \end{mathpar}

\caption{The $\polyStar{\cdot}$ function}
\label{fig:poly_sub-cost}
\end{minipage}%
\end{figure}

The proof of substitution in the context of polynomial cost assignment and typing relies on several key auxiliary lemmas and definitions that establish fundamental properties.

First of all, we need a variation on the theme of the function $\poly{\cdot}$, which is given in Figure \ref{fig:poly_sub-cost}, and which only applies to values. By an easy inspection, one realizes that $\polyStar{\pi}\leq\poly{\pi}$ whenever the former is defined. 

The following lemmas establish key foundational properties about values of ground types. Lemma~\ref{lemma:typableValuesGroudType} ensures that every value in the semantic interpretation of a ground type is typable by a derivation in an empty reference context. Lemma~\ref{lemma:costValueGroundType} shows that the cost assigned by $\poly{\cdot}$ to the derivation with reference context $\Theta$ for a closed value of ground type matches exactly the size of the ground type, reflecting their minimal complexity. Finally, Lemma~\ref{lemma:costSubValueGroundType} shows that the refined cost measure $\polyStar{\cdot}$ assigns zero cost to such derivations. 

\begin{lemma}\label{lemma:typableValuesGroudType}
For any ground type $G$, $\forall n\geq 1.\forall V\in \den{G}_n$ then $\vdashv[\varnothing] V:Gn$.
\end{lemma}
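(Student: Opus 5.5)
The plan is a case analysis on the ground type $G$, which by the grammar of Figure~\ref{fig:syntax} is one of $\UU$, $\BB$ or $\Str[p]$. In each case we fix $n\geq 1$ and an element $V\in\den{G}_n$. We then exhibit a derivation of $\vdashv[\varnothing] V : Gn$ that uses a single axiom rule. These axioms have no premises and no variables in the conclusion. They also do not consult the reference context, so they can be instantiated with $\Theta=\varnothing$.

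\emph{Case $\UU$.} Here $\den{\UU}_n=\{\star\}$, so $V=\star$. Since $\UU n=\UU$, the rule \textsc{unit} gives $\vdashv[\varnothing]\star:\UU$ directly.

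\emph{Case $\BB$.} Here $\den{\BB}_n=\{\true,\false\}$ and $\BB n=\BB$. The rules \textsc{true} and \textsc{false} type the two possible values.

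\emph{Case $\Str[p]$.} Here $\den{\Str[p]}_n=\{0,1\}^{p(n)}$, so $V=s$ is a binary string of length $c:=p(n)$. Substituting $n$ for $i$ turns $\Str[p]$ into $\Str[p n]$, where $pn$ is the constant polynomial $i\mapsto c$. Note that $c\ge 1$, since every polynomial in $\N_{\geq 1}[i]$ is at least $1$ at every $n\geq 1$. The \textsc{string} rule, applied with this constant polynomial, then yields $\vdashv[\varnothing] s:\Str[pn]$.

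The only step needing any care is the string case. There one must observe that $p n$ is a constant polynomial, hence admissible as the index in the \textsc{string} rule. One should also identify the polynomial expression obtained by substitution with the constant $p(n)$, which is how types $Gn$ are read throughout (e.g.\ in the definition of $\Values{n}{\Gamma}{\Theta}{A}$). No induction is needed, since ground types are not recursive.
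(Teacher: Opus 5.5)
Your proposal is correct and matches the paper's proof, which is simply a case inspection on the ground type $G$. You spell out the three cases, including the identification of the substituted index $pn$ with the constant polynomial $i\mapsto p(n)$ that the \textsc{string} rule requires.
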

\begin{sproof}
    By cases inspection on $G$.
\end{sproof}

\begin{lemma}\label{lemma:costValueGroundType}
For any ground type $G$ and value $V$ with derivation $\pi\triangleright \vdashv[\Theta] V:G$, we have that $\poly{\pi}=\sizeG{G}$.
\end{lemma}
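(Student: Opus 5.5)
The plan is to proceed by case analysis on the ground type $G$, combined with inversion on the last rule of the derivation $\pi\triangleright \vdashv[\Theta] V:G$. Since $V$ is a value with an empty variable context and $G$ is ground, the only typing rules whose conclusion can have this shape are \textsc{unit}, \textsc{true}, \textsc{false} and \textsc{string}. The rules \textsc{var} and \textsc{weak} are excluded because the context is empty. The rules \textsc{tensor}, \textsc{bang} and \textsc{lam} are excluded because their conclusions have types $P\otimes Q$, $\oc_p A$ and $P\multimap A$ respectively, none of which is ground. This inversion step is the only point that needs care, and it is immediate from the syntax-directed shape of the value rules in Figure~\ref{fig:typingRules}.

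With the inversion in hand, we check each case against the definition of $\poly{\cdot}$ in Figure~\ref{fig:poly-cost} and of $\sizeG{\cdot}$.
\begin{itemize}
\item If $G=\UU$, then $V=\star$ and $\poly{\pi}=1=\sizeG{\UU}$.
\item If $G=\BB$, then $V\in\{\true,\false\}$. The figure does not list \textsc{true} and \textsc{false} explicitly, but by the stated convention that basic ground values receive their size, their weight is $1=\sizeG{\BB}$.
\item If $G=\Str[p]$, then $V=s$ with $s\in\{0,1\}^c$ and $p$ the constant polynomial $c$. The \textsc{string} clause gives $\poly{\pi}=p=\sizeG{\Str[p]}$.
\end{itemize}

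The reference context $\Theta$ plays no role in any of these cases, so the equality holds uniformly. The only mild obstacle is making the weights of \textsc{true} and \textsc{false} explicit, which must be fixed as $1$.
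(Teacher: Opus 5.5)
Your proposal is correct and follows essentially the same route as the paper, which simply argues by induction on the last rule of $\pi$. With an empty variable context only the axioms \textsc{unit}, \textsc{true}, \textsc{false} and \textsc{string} can conclude such a judgment, so that induction reduces to exactly the inversion you perform; you are also right that the weight of \textsc{true} and \textsc{false} has to be fixed at $1$ using the paper's stated convention, since the figure defining $\poly{\cdot}$ does not list these rules.
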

\begin{sproof}
    By induction on the last rule of $\pi$. 
\end{sproof}

\begin{lemma} \label{lemma:costSubValueGroundType}
For any ground type $G$ and value $V$ with derivation $\pi\triangleright \Gamma \vdashv[\Theta] V:G$, we have that $\polyStar{\pi}=0$.
\end{lemma}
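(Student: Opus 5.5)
The plan is to argue by induction on the derivation $\pi \triangleright \Gamma \vdashv[\Theta] V : G$, splitting on its last rule, as in Lemma~\ref{lemma:costValueGroundType}. The first observation is that only a few value rules can conclude a judgement whose type is a ground type $G \in \{\UU, \BB, \Str[p]\}$. The \textsc{lam} rule concludes an arrow type, and \textsc{bang} concludes $\oc_p A$. The \textsc{tensor} rule concludes $P \otimes Q$, which is not ground either. So the last rule must be one of \textsc{unit}, \textsc{true}, \textsc{false}, \textsc{string}, \textsc{var} or the value version of \textsc{weak}.

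For the axioms \textsc{unit}, \textsc{true}, \textsc{false} and \textsc{string}, Figure~\ref{fig:poly_sub-cost} sets $\polyStar{\pi} := 0$ directly, so there is nothing to prove. For \textsc{weak}, the premise is a derivation $\pi'$ of $\Gamma' \vdashv[\Theta] V : G$ with the same ground type. I would take $\polyStar{\cdot}$ to be transparent on weakening, i.e.\ $\polyStar{\pi} = \polyStar{\pi'}$, just as $\poly{\cdot}$ is implicitly treated. The induction hypothesis then gives $0$.

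The remaining case is \textsc{var}, where $V = x$ and $x : G \vdashv[\Theta] x : G$. Figure~\ref{fig:poly_sub-cost} lists no clause for variables. This matches the role of $\polyStar{\cdot}$ described before the statement: it measures only the higher-order part of a substituted value, and the ground part is accounted for by the variable occurrences of the receiving term. The consistent reading is therefore that $\polyStar{\cdot}$ of a variable of ground type is $0$. I would state this convention explicitly, with $\polyStar{\cdot}$ on \textsc{var} being the higher-order weight of the type, which is $0$ for ground types. Under that convention the case closes immediately.

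Alternatively, I would restrict the lemma to the closed values to which $\polyStar{\cdot}$ is actually applied in the Substitution Lemma. In that setting the \textsc{var} case cannot occur with an empty context, and \textsc{weak} only adds unused hypotheses.

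The main obstacle is exactly this bookkeeping point: making sure the \textsc{var} and \textsc{weak} cases are covered by the definition of $\polyStar{\cdot}$. The inversion step itself is routine, because no structural or higher-order rule can produce a ground type. I would resolve the gap by fixing the \textsc{var} and \textsc{weak} clauses as described above, and then the proof is a two-line case inspection.
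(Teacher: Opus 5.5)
Your proof is correct and matches the paper's, which argues by induction on the last rule of $\pi$; only \textsc{unit}, \textsc{true}, \textsc{false}, \textsc{string}, \textsc{var} and \textsc{weak} can conclude a ground type. You are right that Figure~\ref{fig:poly_sub-cost} has no \textsc{var} or \textsc{weak} clause, a point the paper glosses over. Either of your fixes suffices, because $\polyStar{\cdot}$ is only ever applied to closed values (e.g.\ in the base cases $G\boxplus G=G$ and $p\ast G=G$ of Lemma~\ref{lemma:costSubOperationPositiveTypes}), and there those two rules cannot occur.
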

\begin{sproof}
    By induction on the last rule of $\pi$. 
\end{sproof}

Moreover, Lemma~\ref{lemma:costSubOperationPositiveTypes} ensures that the polynomial cost assignment $\polyStar{\cdot}$ is coherent with the structure of positive types built using the binary operation $P \boxplus Q$ and the unary operation $p\ast P$ on positive types. In both cases, the lemma formally establishes that the polynomial assignment $\polyStar{\cdot}$ aligns with the inductive structure of positive types. This alignment is essential for reasoning modularly about cost during substitution and composition of terms.

\begin{lemma} \label{lemma:costSubOperationPositiveTypes}\leavevmode\vspace{-2mm}
\begin{enumerate}[label=\Alph*.]
    \item For a value $V$, positive types $P$ and $Q$, if $\pi\triangleright  \vdashv[\Theta] V:P \boxplus Q$, then there exist derivations $\pi_1\triangleright \vdashv[\Theta] V:P$ and $\pi_2\triangleright  \vdashv[\Theta] V:Q$ such that $\polyStar{\pi}=\polyStar{\pi_1}+\polyStar{\pi_2}$.
    \item For a positive value $Z$, positive type $P$ and polynomial $p$ in $\N_{\geq 1}[i]$, if $\pi\triangleright \vdashv[\Theta] Z:p\ast P$, then there exists a derivation $\pi'\triangleright  \vdashv[\Theta] Z: P$ such that $\polyStar{\pi}=p\times\polyStar{\pi'}$.
\end{enumerate}
\end{lemma}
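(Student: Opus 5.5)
Both parts go by induction on the structure of the positive type ($P\boxplus Q$ in part A, $P$ in part B), using the definitions of $\boxplus$ and $\ast$ on positive types. We also inspect which rules can conclude a closed value judgement of positive type. In an empty variable context, a value of positive type can only be a ground value, a pair $\lis{Z_1,Z_2}$, or $\oc M$. Variables are excluded because the context is empty. A \textsc{weak} instance is excluded because there is no variable to weaken away. So the last rule of $\pi$ is one of \textsc{unit}, \textsc{true}, \textsc{false}, \textsc{string}, \textsc{tensor} or \textsc{bang}. These are exactly the rules on which $\polyStar{\cdot}$ is defined, so both sides of the claimed equalities make sense.

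For part A, we distinguish the shape of the type.
\begin{itemize}
\item If $P\boxplus Q$ is ground, then $P=Q=G$. We take $\pi_1=\pi_2=\pi$. By Lemma~\ref{lemma:costSubValueGroundType} all three costs are $0$, so $0=0+0$.
\item If $P=P_1\otimes P_2$ and $Q=Q_1\otimes Q_2$, then $\pi$ ends with \textsc{tensor} on $Z=\lis{Z_1,Z_2}$. Its premises are $\rho_j\triangleright\vdashv[\Theta] Z_j:P_j\boxplus Q_j$, whose contexts are necessarily empty. By induction we split each $\rho_j$ into derivations for $P_j$ and for $Q_j$ with additive cost. We then recombine them with \textsc{tensor} and use the additivity of $\polyStar{\cdot}$ on \textsc{tensor}.
\item If $P=\oc_p A$ and $Q=\oc_q A$, then $\pi$ ends with \textsc{bang} on $\oc M$ from $\sigma\triangleright\cdot\vdashc[\Theta] M:A$. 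The conclusion context is $(p+q)\ast\varnothing=\varnothing$. We take $\pi_1,\pi_2$ to be \textsc{bang} applied to the same $\sigma$ with grades $p$ and $q$. Then $\polyStar{\pi}=(p+q)\times\poly{\sigma}+(p+q)$, which distributes to $\polyStar{\pi_1}+\polyStar{\pi_2}$.
\end{itemize}

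Part B is analogous, by induction on $P$.
\begin{itemize}
\item For ground $P$, we have $p\ast G=G$. We take $\pi'=\pi$, and both sides are $0$ by Lemma~\ref{lemma:costSubValueGroundType}.
\item For a tensor, we use $p\ast(P_1\otimes P_2)=(p\ast P_1)\otimes(p\ast P_2)$. We apply the induction hypothesis to each component and get $\polyStar{\pi}=p\times\polyStar{\pi'_1}+p\times\polyStar{\pi'_2}=p\times\polyStar{\pi'}$.
\item For $P=\oc_q A$, $\pi$ is a \textsc{bang} of grade $p\times q$ over some $\sigma$. We take $\pi'$ to be the \textsc{bang} of grade $q$ over the same $\sigma$. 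Then $(p\times q)\times\poly{\sigma}+p\times q=p\times(q\times\poly{\sigma}+q)$.
\end{itemize}

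The main obstacle is a bookkeeping point rather than a computational one. The inversion requires that the last rule of $\pi$ is determined by the shape of $Z$ and the type, so that no \textsc{weak} or \textsc{var} instance can intervene. This is ensured by the empty variable context together with the fact that values of $\oc$-type in empty context are syntactically $\oc M$. A second point is that we need the premises in the \textsc{tensor} case to again have empty contexts, so the induction hypothesis applies. This holds since $\Gamma\boxplus\Delta=\varnothing$ forces $\Gamma=\Delta=\varnothing$.
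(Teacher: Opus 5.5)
Your proposal is correct and takes the same route as the paper, which proves part A by induction on the definition of $P\boxplus Q$ and part B by induction on the definition of $p\ast P$. The ground, tensor and $\oc$ clauses are handled exactly as you do, and your inversion remarks (no \textsc{var} or \textsc{weak} in an empty context, empty premise contexts for \textsc{tensor} and \textsc{bang}) simply make explicit details that the paper leaves implicit.
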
 
\begin{sproof}
    The first case is proved by induction on the definition of $P\boxplus Q$ and the second case is by induction on the definition of $p\ast P$. 
\end{sproof}

Building on these foundational cost decompositions, we now state the substitution lemma. It ensures that substituting a well-typed value for a variable within a term or another value preserves typing and adds at most the cost of the substituted value to the overall cost. As described earlier, this additional cost is measured using the refined cost function $\polyStar{\cdot}$, which provides a tighter bound on the complexity contribution of values. 
\begin{lemma}[Substitution] \label{lemma:substitution_lamBLL}
    \leavevmode\vspace{-2mm}
    \begin{enumerate}
        \item Given two derivations $\pi\triangleright \Gamma,x:P \vdashc[\Theta] M:A$ and $\rho\mbox{ }\triangleright \vdashv[\Theta] V:P$, then there exists a derivation $\pi[\rho/x]\triangleright \Gamma\vdashc[\Theta] M[V/x]:A$ such that 
        $\poly{\pi[\rho/x]}\leq \poly{\pi}+\polyStar{\rho}$.
        \item Given two derivations $\varphi \triangleright \Gamma,x:P \vdashv[\Theta] U:A$ and $\rho\mbox{ }\triangleright \vdashv[\Theta] V:P$, then there exists a derivation $\varphi[\rho/x]\triangleright \Gamma\vdashc[\Theta] U[V/x]:A$ such that 
        $\poly{\varphi[\rho/x]}\leq \poly{\varphi}+\polyStar{\rho}$.
    \end{enumerate}
\end{lemma}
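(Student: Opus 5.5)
We prove both items simultaneously by induction on the derivation $\pi$ (respectively $\varphi$), keeping $\rho$ fixed and doing a case analysis on the last rule. The typing part is the usual substitution lemma. Each rule is rebuilt with the substituted subderivations. The variable $x$ is placed in whichever premise context carries it, and when $x$ occurs in several premises we split its type $P$ accordingly. The quantitative part is where the special measure $\polyStar{\cdot}$ enters. The idea is that every occurrence of $x$ in $\pi$ has already been charged $\sizeP{P}$ by the \textsc{var} rule. That charge covers the first-order, duplicable part of $V$. Only the higher-order content under $\oc$, which is exactly what $\polyStar{\rho}$ measures, must be added.

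The base cases are as follows.
\begin{itemize}
\item \textsc{var} on $x$: here $\poly{\pi}=\sizeP{P}$ and $\pi[\rho/x]=\rho$. We need $\poly{\rho}\le \sizeP{P}+\polyStar{\rho}$. This is a small auxiliary fact, proved by induction on a closed positive value $Z$ of type $P$:
\begin{itemize}
\item ground types are handled by Lemma~\ref{lemma:costValueGroundType} and Lemma~\ref{lemma:costSubValueGroundType};
\item tensors add up, with the $+1$ matching the $+1$ in $\sizeP{P\otimes Q}$;
\item for $\oc_p A$ we have $\poly{\rho}=\polyStar{\rho}\le 1+\polyStar{\rho}$.
\end{itemize}
\item \textsc{weak} on $x$: substitution does nothing and the bound is trivial, since $\polyStar{\rho}\ge 0$.
\item Other axioms: $x$ does not occur and there is nothing to prove.
\end{itemize}

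For unary structural rules (\textsc{lam}, \textsc{eta}, \textsc{der}, \textsc{set}), the bound follows directly from the induction hypothesis, because $\poly{\cdot}$ is additive with constant overhead.

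For binary rules whose context is $\Gamma\boxplus\Delta$ (\textsc{tensor}, \textsc{app}, \textsc{let}, \textsc{let\_pair}, \textsc{fun}), suppose $x$ occurs in both premises. Then its type is $P_1\boxplus P_2=P$. Lemma~\ref{lemma:costSubOperationPositiveTypes}(A) gives $\rho_1,\rho_2$ with $\polyStar{\rho}=\polyStar{\rho_1}+\polyStar{\rho_2}$. Applying the induction hypothesis to each premise and summing gives the claimed bound.

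For \textsc{case}, the branches are closed by the standing convention, so $x$ can only occur in the guard and the argument is immediate.

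The scaling rules \textsc{bang} and \textsc{loop} are the main point. Take \textsc{bang}: $x$ has type $p\ast P'$ in the conclusion and type $P'$ in the premise. Lemma~\ref{lemma:costSubOperationPositiveTypes}(B) gives $\rho'$ with $\polyStar{\rho}=p\times\polyStar{\rho'}$. The induction hypothesis then yields
\[p\times\poly{\pi_0[\rho'/x]}+p\le p\times\poly{\pi_0}+p+p\times\polyStar{\rho'}=\poly{\pi}+\polyStar{\rho}.\]
\textsc{loop} is handled the same way on its $p\ast\Gamma$ part, combined with the $\boxplus$ splitting for $\Delta$.

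One subtlety is that part (B) of Lemma~\ref{lemma:costSubOperationPositiveTypes} is stated for positive values $Z$. A value of positive type is indeed positive, since $\lambda$-abstractions have arrow type, so the lemma applies.

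I expect the main obstacle to be exactly this interplay. The naive bound $\poly{\pi}+\poly{\rho}$ fails under \textsc{bang} and \textsc{loop}: it would multiply ground sizes by $p$, even though ground types satisfy $p\ast G=G$. The bound only closes because $\polyStar{\cdot}$ is zero on ground values and multiplicative on $\oc$. Checking that the decomposition lemmas are compatible with derivations that end in \textsc{weak} inside tensors is where I would be most careful.
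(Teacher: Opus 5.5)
Your proposal is correct and follows essentially the same route as the paper: a mutual induction on the derivation, using Lemma~\ref{lemma:costSubOperationPositiveTypes}(A) to split $\rho$ in the $\boxplus$-rules, part (B) to rescale it in \textsc{bang} and \textsc{loop}, and the closed-branch convention to dispose of \textsc{case}. Your \textsc{var} case is more careful than the paper's. The paper asserts $\poly{\varphi[\rho/x]}=\sizeP{P}=\polyStar{\rho}$, which fails in general (e.g.\ $\polyStar{\rho}=0<\sizeP{\BB}$), whereas your auxiliary inequality $\poly{\rho}\le\sizeP{P}+\polyStar{\rho}$, proved by induction on the closed positive value, is exactly what that step needs.
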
 
\begin{proof}
    By mutual induction on $\pi$ and $\varphi$.
    \begin{itemize}
        \item \textit{Case $\varphi \triangleright x:P \vdashv[\Theta] x:P$}:\\
        By hypothesis we have $\rho\mbox{ }\triangleright \vdashv[\Theta] V:P$, then we have a derivation $\varphi[\rho/x]\triangleright \vdashv[\Theta] V:P$. Moreover, by definitions of $\mathit{poly}$ and $\mathit{\underline{poly}}$ we have $\poly{\varphi[\rho/x]}=\sizeP{P}=\polyStar{\rho}$. Therefore, we can conclude as we have that
        $\poly{\varphi[\rho/x]}=\sizeP{P}=\polyStar{\rho} \leq \poly{\varphi}+\polyStar{\rho} $. \\

        \item \textit{Case $\varphi \triangleright p \ast \Gamma \vdashv[\Theta] !N:!_p A$}:\\
        We have that $p\ast\Gamma=p\ast\Delta,x:P$, so we have that $P=p\ast Q$ for some positive type $Q$. By BANG-typing rule we have $\varphi_1 \triangleright \Delta,x:Q \vdashc[\Theta] N: A$ and $\poly{\varphi}=p\times\poly{\varphi_1}$.
        By hypothesis we have $\rho\mbox{ }\triangleright \vdashv[\Theta] V:p\ast Q$ and by Lemma~\ref{lemma:costSubOperationPositiveTypes} case B, we have that there exists $\rho'\mbox{ }\triangleright \vdashv[\Theta] V:Q$ such that 
        \begin{equation}
            \label{eq:subst_bang}
            \polyStar{\rho}=p\times\polyStar{\rho'}
        \end{equation}
        By IH on $\varphi_1$ we have that there exists $\varphi[\rho'/x]\triangleright \Delta \vdashc[\Theta] N[V/x]:A$ such that $\poly{\varphi[\rho'/x]}\leq \poly{\varphi_1}+\polyStar{\rho'}$.\\
        Therefore, let $\varphi[\rho/x]\triangleright p\ast\Delta \vdashv[\Theta] ! N[V/x]:!_p A$, we have that
        \begin{align*}
            \poly{\varphi[\rho/x]} 
            &= p \times \poly{\varphi[\rho'/x]}\\
            &\hspace{-1mm}\overset{IH}{\leq} p \times (\poly{\varphi_1}+\polyStar{\rho'})\\
            &= p \times \poly{\varphi_1}+ p \times \polyStar{\rho'}\\
            &\hspace{-3mm}\overset{\text{Eq.}\ref{eq:subst_bang}}{=} p \times \polyStar{\varphi_1}+\poly{\rho}\\
            &=\poly{\varphi}+\polyStar{\rho}
        \end{align*}
        and we can conclude.\\

        \item \textit{Case $\varphi \triangleright \Gamma_1 \boxplus \Gamma_2 \vdashv[\Theta] \lis{Z_1,Z_2}:Q_1 \otimes Q_2$}:\\
        We have that $\Gamma_1 \boxplus \Gamma_2 = \Delta_1 \boxplus \Delta_2, x:P$ for some $\Delta_1, \Delta_2$ and by hypothesis we have $\rho\mbox{ }\triangleright \vdashv[\Theta] V:P$. We proceed by analyzing the following sub-cases:
        \begin{itemize}
            \item Sub-case $x\in\Gamma_1$ and $x\notin \Gamma_2$:\\
            We have that $\Gamma_1=\Delta_1,x:P$ and $\Gamma_2=\Delta_2$. By TENSOR-typing  rule we have that $\varphi_1 \triangleright \Delta_1,x:P \vdashv[\Theta] Z_1: Q_1$ and $\varphi_2 \triangleright \Delta_2 \vdashv[\Theta] Z_2: Q_2$. Moreover, by definition and $\poly{\varphi}=1+\poly{\varphi_1}+\poly{\varphi_2}$.\\
            By IH on $\varphi_1$ we have that there exists $\varphi_1[\rho/x] \triangleright \Delta_1 \vdashv[\Theta] Z_1[V/x]: Q_1$ such that $\poly{\varphi_1[\rho/x]}\leq \poly{\varphi_1}+\polyStar{\rho}$.\\
            Therefore, let $\varphi[\rho/x]\triangleright \Delta_1 \boxplus \Delta_2 \vdashv[\Theta] \lis{Z_1[V/x],Z_2}:Q_1 \otimes Q_2$, we have that
            \begin{align*}
                \poly{\varphi[\rho/x]} 
                &= 1+\poly{\varphi_1[\rho/x]} + \poly{\varphi_2} \\
                &\hspace{-1mm}\overset{IH}{\leq} 1+ \poly{\varphi_1} + \polyStar{\rho} + \poly{\varphi_2}\\ 
                &=\poly{\varphi}+\polyStar{\rho}
            \end{align*}
            and we can conclude.
            \item Sub-case $x\notin\Gamma_1$ and $x\in\Gamma_2$: similar to the previous sub-case.
            \item Sub-case $P=P_1\boxplus P_2$:\\
            We have that $\Gamma_1=\Delta_1,x:P_1$ and $\Gamma_2=\Delta_2, x:P_2$. By TENSOR-typing  rule we have that $\varphi_1 \triangleright \Delta_1,x:P_1 \vdashv[\Theta] Z_1: Q_1$ and $\varphi_2 \triangleright \Delta_2,x:P_2 \vdashv[\Theta] Z_2: Q_2$. Moreover, by definition and $\poly{\varphi}=1+\poly{\varphi_1}+\poly{\varphi_2}$.\\
            By hypothesis we have $\rho\mbox{ }\triangleright \vdashv[\Theta] V:P$ and by Lemma~\ref{lemma:costSubOperationPositiveTypes} case A, we have that there exists $\rho_1\mbox{ }\triangleright \vdashv[\Theta] V:P_1$ and $\rho_2\mbox{ }\triangleright \vdashv[\Theta] V:P_2$ such that 
            \begin{equation}
                \label{eq:subst_tensor}
                \polyStar{\rho}=\polyStar{\rho_1}+\polyStar{\rho_2}
            \end{equation}
            By IH on $\varphi_1$ we have that there exists $\varphi_1[\rho_1/x] \triangleright \Delta_1 \vdashv[\Theta] Z_1[V/x]: Q_1$ such that $\poly{\varphi_1[\rho_1/x]}\leq \poly{\varphi_1}+\polyStar{\rho_1}$.\\
            By IH on $\varphi_2$ we have that there exists $\varphi_2[\rho_2/x] \triangleright \Delta_2 \vdashv[\Theta] Z_2[V/x]: Q_2$ such that $\poly{\varphi_2[\rho_2/x]}\leq \poly{\varphi_2}+\polyStar{\rho_2}$.\\
            Therefore, let $\varphi[\rho/x]\triangleright \Delta_1\boxplus\Delta_2 \vdashv[\Theta] \lis{Z_1[V/x],Z_2[V/x]}:Q_1\otimes Q_2$, we have that
            \begin{align*}
                \poly{\varphi[\rho/x]} 
                &= 1+\poly{\varphi_1[\rho_1/x]} + \poly{\varphi_2[\rho_2/x]} \\
                &\hspace{-1mm}\overset{IHs}{\leq} 1+\poly{\varphi_1} + \polyStar{\rho_1} + \poly{\varphi_2} +\polyStar{\rho_2}\\ 
                &\hspace{-2mm}\overset{\text{Eq.}\ref{eq:subst_tensor}}{=} 1+\poly{\varphi_1} + \poly{\varphi_2} +\polyStar{\rho}\\
                &=\poly{\varphi}+\polyStar{\rho}
            \end{align*}
            and we can conclude.\\
        \end{itemize}
        
        \item \textit{Case $\varphi \triangleright \Gamma,x:P \vdashv[\Theta] \lambda y.N:P'\multimap A'$}:\\
        By LAM-typing rule we have $\varphi_1 \triangleright \Gamma,x:P,y:P'\vdashc[\Theta] N:A'$ and $\poly{\varphi}=1+\poly{\varphi_1}$. Moreover, by hypothesis we have $\rho\mbox{ }\triangleright \vdashv[\Theta] V:P$.\\
        By IH on $\varphi_1$ we have that there exists $\varphi_1[\rho/x] \triangleright \Gamma,y:P' \vdashc[\Theta] N[V/x]:A'$ such that $\poly{\varphi_1[\rho/x]}\leq \poly{\varphi_1}+\polyStar{\rho}$.\\
        Therefore, let $\varphi[\rho/x]\triangleright \Gamma \vdashv[\Theta] \lambda y.N[V/x]: P'\multimap A'$, we have that
        \begin{align*}
            \poly{\varphi[\rho/x]} 
            &= 1+\poly{\varphi_1[\rho/x]}\\
            &\hspace{-1mm}\overset{IH}{\leq} 1+\poly{\varphi_1} + \polyStar{\rho} \\ 
            &=\poly{\varphi}+\polyStar{\rho}
        \end{align*}
        and we can conclude.\\

        \item \textit{Case $\pi \triangleright \Gamma,x:P \vdashc[\Theta] \return{V'}:A$}:\\
        By ETA-typing rule we have $\pi_1 \triangleright \Gamma,x:P \vdashv[\Theta] V':A$ and $\poly{\pi}=1+\poly{\pi_1}$. Moreover, by hypothesis we have $\rho\mbox{ }\triangleright \vdashv[\Theta] V:P$.\\
        By IH on $\pi_1$ we have that there exists $\pi_1[\rho/x] \triangleright \Gamma \vdashv[\Theta] V'[V/x]:A$ such that $\poly{\pi_1[\rho/x]}\leq \poly{\pi_1}+\polyStar{\rho}$.\\
        Therefore, let $\pi[\rho/x]\triangleright \Gamma \vdashc[\Theta] \return{V'[V/x]}:A$, we have that
        \begin{align*}
            \poly{\pi[\rho/x]} 
            &= 1+\poly{\pi_1[\rho/x]} \\
            &\hspace{-1mm}\overset{IH}{\leq} 1+\poly{\pi_1}+\polyStar{\rho}\\ 
            &=\poly{\pi}+\polyStar{\rho}
        \end{align*}
        and we can conclude.\\

        \item \textit{Case $\pi \triangleright \Gamma,x:P \vdashc[\Theta] \der(Z):A$}:\\
        By DER-typing rule we have $\pi_1 \triangleright \Gamma,x:P \vdashv[\Theta] Z:!_1 A$ and $\poly{\pi}=1+\poly{\pi_1}$. Moreover, by hypothesis we have $\rho\mbox{ }\triangleright \vdashv[\Theta] V:P$.\\
        By IH on $\pi_1$ we have that there exists $\pi_1[\rho/x] \triangleright \Gamma \vdashv[\Theta] Z[V/x]:!_1 A$ such that $\poly{\pi_1[\rho/x]}\leq \poly{\pi_1}+\polyStar{\rho}$.\\
        Therefore, let $\pi[\rho/x]\triangleright \Gamma \vdashc[\Theta] \der(Z[V/x]):A$, we have that
        \begin{align*}
            \poly{\pi[\rho/x]} 
            &=1+\poly{\pi_1[\rho/x]} \\
            &\hspace{-1mm}\overset{IH}{\leq} 1+\poly{\pi_1}+\polyStar{\rho}\\ 
            &=\poly{\pi}+\polyStar{\rho}
        \end{align*}
        and we can conclude.\\

        \item \textit{Case $\pi \triangleright \Gamma,x:P \vdashc[\Theta] \setr{r}{Z}:\UU$}:\\
        By SET-typing rule we have $\pi_1 \triangleright \Gamma,x:P \vdashv[\Theta] Z:G$ where $r:G\in\Theta$ and $\poly{\pi}=2+\poly{\pi_1}$. Moreover, by hypothesis we have $\rho\mbox{ }\triangleright \vdashv[\Theta] V:P$.\\
        By IH on $\pi_1$ we have that there exists $\pi_1[\rho/x] \triangleright \Gamma \vdashv[\Theta] Z[V/x]:G$ such that $\poly{\pi_1[\rho/x]}\leq \poly{\pi_1}+\polyStar{\rho}$. \\
        Therefore, let $\pi[\rho/x]\triangleright \Gamma \vdashc[\Theta] \setr{r}{Z[V/x]}:\UU$, we have that
        \begin{align*}
            \poly{\pi[\rho/x]} 
            &=2+\poly{\pi_1[\rho/x]} \\
            &\hspace{-1mm}\overset{IH}{\leq} 2+ \poly{\pi}+\polyStar{\rho}\\ 
            &=2+\poly{\pi_1}+\polyStar{\rho}
        \end{align*}
        and we can conclude.\\

        \item \textit{Case $\pi \triangleright \Gamma_1 p \boxplus \Gamma_2 p \vdashc[\Theta] f_p(Z_1,Z_2):Gp$}:\\
        We have that $\Gamma_1 p\boxplus \Gamma_2  p = \Delta_1  p\boxplus \Delta_2  p, x:P$ for some $\Delta_1, \Delta_2$ and by hypothesis we have $\rho\mbox{ }\triangleright \vdashv[\Theta] V:P$. We proceed by analyzing the following sub-cases:
        \begin{itemize}
            \item Sub-case $x\in\Gamma_1 p$ and $x\notin \Gamma_2 p$:\\
            We have that $\Gamma_1 p=\Delta_1 p ,x:P$ and $\Gamma_2 p=\Delta_2 p$. By FUN-typing rule $\typeof{f}:G_1\times G_2 \rightarrow G$ and we have $\pi_1 \triangleright \Delta_1 p,x:P \vdashv[\Theta] Z_1:G_1p$ and $\pi_2 \triangleright \Delta_2 p \vdashv[\Theta] Z_2:G_2p$. Moreover, by definition we have $\poly{\pi}=1+\poly{\pi_1}+\poly{\pi_2}+\sizeG{G}$.\\
            By IH on $\pi_1$ we have that there exists $\pi_1[\rho/x] \triangleright \Delta_1 p  \vdashv[\Theta] Z_1[V/x]:G_1p$ such that $\poly{\pi_1[\rho/x]}\leq \poly{\pi_1}+\polyStar{\rho}$.  \\
            Therefore, let $\pi[\rho/x]\triangleright \Delta_1 p \boxplus \Delta_2 p \vdashc[\Theta] f_p(Z_1[V/x],Z_2):Gp$, we have that
            \begin{align*}
                \poly{\pi[\rho/x]} 
                &=1+\poly{\pi_1[\rho/x]}+\poly{\pi_2}+\sizeG{G}\\
                &\hspace{-2mm}\overset{\text{IH}}{\leq}  
                1+\poly{\pi_1}+\polyStar{\rho}+\poly{\pi_2}+\sizeG{G}\\
                &=\poly{\pi}+\polyStar{\rho} 
            \end{align*}
            and we can conclude.
            \item Sub-case $x\notin\Gamma_1 p$ and $x\in \Gamma_2 p$: similar to the previous sub-case.
            \item Sub-case $P=P_1\boxplus P_2$:\\
            We have that $\Gamma_1 p=\Delta_1 p ,x:P_1$ and $\Gamma_2 p=\Delta_2 p, x:P_2$. By FUN-typing rule $\typeof{f}:G_1\times G_2 \rightarrow G$ and we have $\pi_1 \triangleright \Delta_1 p,x:P \vdashv[\Theta] Z_1:G_1p$ and $\pi_2 \triangleright \Delta_2 p, x:P_2 \vdashv[\Theta] Z_2:G_2p$. Moreover, by definition we have $\poly{\pi}=1+\poly{\pi_1}+\poly{\pi_2}+\sizeG{G}$.\\
            By hypothesis we have $\rho\mbox{ }\triangleright \vdashv[\Theta] V:P$ and by Lemma~\ref{lemma:costSubOperationPositiveTypes} case A we have that there exists $\rho_1\mbox{ }\triangleright \vdashv[\Theta] V:P_1$ and $\rho_2\mbox{ }\triangleright \vdashv[\Theta] V:P_2$ such that $\polyStar{\rho}=\polyStar{\rho_1}+\polyStar{\rho_2}$.\\
            By IH on $\pi_1$ we have that there exists $\pi_1[\rho_1/x] \triangleright \Delta_1 p  \vdashv[\Theta] Z_1[V/x]:G_1p$ such that $\poly{\pi_1[\rho_1/x]}\leq \poly{\pi_1}+\polyStar{\rho_1}$.\\ 
            By IH on $\pi_2$ we have that there exists $\pi_2[\rho_2/x] \triangleright \Delta_2 p  \vdashv[\Theta] Z_2[V/x]:G_2p$ such that $\poly{\pi_2[\rho_2/x]}\leq \poly{\pi_2}+\polyStar{\rho_2}$. \\ 
            Therefore, let $\pi[\rho/x]\triangleright \Delta_1 p \boxplus \Delta_2 p \vdashc[\Theta] f_p(Z_1[V/x],Z_2[V/x]):Gp$, we have that
            \begin{align*}
                \poly{\pi[\rho/x]} 
                &=1+\poly{\pi_1[\rho_1/x]}+\poly{\pi_2[\rho_2/x]}+\sizeG{G}\\
                &\hspace{-1mm}\overset{\text{IHs}}{\leq}  1+\poly{\pi_1}+\polyStar{\rho_1}+\poly{\pi_2}+\polyStar{\rho_2}+\sizeG{G}\\ 
                &=1+\poly{\pi_1}+\poly{\pi_2}+\sizeG{G}+\polyStar{\rho}\\
                &=\poly{\pi}+\polyStar{\rho}
            \end{align*}
            and we can conclude.\\
        \end{itemize}

        \item \textit{Case $\pi \triangleright \Gamma_1 \boxplus \Gamma_2 \vdashc[\Theta] \letin{\lis{y_1,y_2}=Z}{N}:A$}:\\
        We have that $\Gamma_1 \boxplus \Gamma_2 = \Delta_1 \boxplus \Delta_2, x:P$ for some $\Delta_1, \Delta_2$ and by hypothesis we have $\rho\mbox{ }\triangleright \vdashv[\Theta] V:P$. We proceed by analyzing the following sub-cases:
        \begin{itemize}
            \item Sub-case $x\in\Gamma_1$ and $x\notin\Gamma_2$:\\
            We have that $\Gamma_1=\Delta_1,x:P$ and $\Gamma_2=\Delta_2$.
            By LET\_PAIR-typing rule we have $\pi_1 \triangleright \Delta_1, x:P \vdashv[\Theta] Z:Q_1 \tensor Q_2$ and $\pi_2 \triangleright \Delta_2 \vdashc[\Theta] N:A$. Moreover, by definition we have $\poly{\pi}=1+\poly{\pi_1}+\poly{\pi_2}$.\\
            By IH on $\pi_1$ we have that there exists $\pi_1[\rho/x] \triangleright \Delta_1  \vdashv[\Theta] Z[V/x]:Q_1 \tensor Q_2$ such that $\poly{\pi_1[\rho/x]}\leq \poly{\pi_1}+\polyStar{\rho}$. \\
            Therefore, let $\pi[\rho/x]\triangleright \Delta_1 \boxplus \Delta_2  \vdashc[\Theta] \letin{\lis{y_1,y_2}=Z[V/x]}{N}:A$, we have that
            \begin{align*}
                \poly{\pi[\rho/x]} 
                &=1+\poly{\pi_1[\rho/x]}+\poly{\pi_2} \\
                &\hspace{-1mm}\overset{IH}{\leq} 1+\poly{\pi_1}+\polyStar{\rho}+\poly{\pi_2}\\ 
                &=\poly{\pi}+\polyStar{\rho}
            \end{align*}
            and we can conclude.
            \item Sub-case $x\notin\Gamma_1$ and $x\in\Gamma_2$:\\
            We have that $\Gamma_2=\Delta_2,x:P$ and $\Gamma_1=\Delta_1$.
            By LET\_PAIR-typing rule we have $\pi_1 \triangleright \Delta_1, x:P \vdashv[\Theta] Z:Q_1 \tensor Q_2$ and $\pi_2 \triangleright \Delta_2 \vdashc[\Theta] N:A$. Moreover, by definition we have $\poly{\pi}=1+\poly{\pi_1}+\poly{\pi_2}$.\\
            By IH on $\pi_2$ we have that there exists $\pi_2[\rho/x] \triangleright \Delta_2  \vdashv[\Theta] N[V/x]:A$ such that $\poly{\pi_2[\rho/x]}\leq \poly{\pi_2}+\polyStar{\rho}$. \\
            Therefore, let $\pi[\rho/x]\triangleright \Delta_1 \boxplus \Delta_2  \vdashc[\Theta] \letin{\lis{y_1,y_2}=Z}{N[V/x]}:A$, we have that
            \begin{align*}
                \poly{\pi[\rho/x]} 
                &=1+\poly{\pi_1}+\poly{\pi_2[\rho/x]} \\
                &\hspace{-1mm}\overset{IH}{\leq} 1+\poly{\pi_1}+\poly{\pi_2}+\polyStar{\rho}\\ 
                &=\poly{\pi}+\polyStar{\rho}
            \end{align*}
            and we can conclude.
            \item Sub-case $P=P_1\boxplus P_2$:\\
            We have that $\Gamma_1=\Delta_1 ,x:P_1$ and $\Gamma_2=\Delta_2, x:P_2$.  By LET\_PAIR-typing rule we have $\pi_1 \triangleright \Delta_1, x:P_1 \vdashv[\Theta] Z:Q_1 \tensor Q_2$ and $\pi_2 \triangleright \Delta_2, x:P_2 \vdashc[\Theta] N:A$. Moreover, by definition we have $\poly{\pi}=1+\poly{\pi_1}+\poly{\pi_2}$.\\ 
            By hypothesis we have $\rho\mbox{ }\triangleright \vdashv[\Theta] V:P$ and by Lemma~\ref{lemma:costSubOperationPositiveTypes} case A we have that there exists $\rho_1\mbox{ }\triangleright \vdashv[\Theta] V:P_1$ and $\rho_2\mbox{ }\triangleright \vdashv[\Theta] V:P_2$ such that 
            \begin{equation}
                \label{eq:subst_let_pair}
                \polyStar{\rho}=\polyStar{\rho_1}+\polyStar{\rho_2}.
            \end{equation}
            By IH on $\pi_1$ we have that there exists $\pi_1[\rho_1/x] \triangleright \Delta_1  \vdashv[\Theta] Z[V/x]:Q_1 \tensor Q_2$ such that $\poly{\pi_1[\rho_1/x]}\leq \poly{\pi_1}+\polyStar{\rho_1}$. \\
            By IH on $\pi_2$ we have that there exists $\pi_2[\rho_2/x] \triangleright \Delta_2  \vdashc[\Theta] N[V/x]:A$ such that $\poly{\pi_1[\rho_2/x]}\leq \poly{\pi_2}+\polyStar{\rho_2}$.\\
            Therefore, let $\pi[\rho/x]\triangleright \Delta_1 \boxplus \Delta_2  \vdashc[\Theta] \letin{\lis{y_1,y_2}=Z[V/x]}{N[V/x]}:A$, we have that
            \begin{align*}
                \poly{\pi[\rho/x]} 
                &=1+\poly{\pi_1[\rho_1/x]}+\poly{\pi_2[\rho_2/x]} \\
                &\hspace{-1mm}\overset{IHs}{\leq} 1+\poly{\pi_1}+\polyStar{\rho_1}+\poly{\pi_2}+\polyStar{\rho_2}\\ 
                &\hspace{-2mm}\overset{\text{Eq.}\ref{eq:subst_let_pair}}{=} 
                1+\poly{\pi_1}+\poly{\pi_2}+\polyStar{\rho}\\ 
                &=\poly{\pi}+\polyStar{\rho}
            \end{align*}
            and we can conclude.\\
        \end{itemize}

        \item \textit{Case $\pi \triangleright \Gamma_1 \boxplus \Gamma_2 \vdashc[\Theta] \letin{y=Z}{N}:A$ and case $\pi \triangleright \Gamma_1 \boxplus \Gamma_2 \vdashc[\Theta] NZ:A$}: similar to the previous case.\\
        
        \item \textit{Case $\pi \triangleright (p\ast\Gamma_1) \boxplus \Gamma_2 \vdashc[\Theta] \loopp{V'}{p}{N}:Q$}:\\
        We have that $(p\ast\Gamma_1) \boxplus \Gamma_2=(p\ast\Delta_1)\boxplus\Delta_2,x:P$ for some $\Delta_1,\Delta_2$ and by hypothesis we have $\rho\mbox{ }\triangleright \vdashv[\Theta] V:P$. We proceed by analyzing the following sub-cases:
        \begin{itemize}
            \item Sub-case $x\in\Gamma_1$ and $x\notin\Gamma_2$:\\
            Since $x\in\Gamma_1$, we have that $P=p\ast P'$ for some positive type $P'$. Therefore, we have $\Gamma_1=\Delta_1,x:P'$ and $\Gamma_1=\Delta_2$. By LOOP-typing rule we have $\pi_1 \triangleright \Delta_1,x:P' \vdashv[\Theta] V':Q \multimap Q$ and $\pi_2 \triangleright \Delta_2 \vdashc[\Theta]N:Q$. Moreover, by definition we have that $\poly{\pi}=1+p+p\times \poly{\pi_1}+\poly{\pi_2}$.\\
            By hypothesis we have $\rho\mbox{ }\triangleright \vdashv[\Theta] V:p\ast P'$ and by Lemma~\ref{lemma:costSubOperationPositiveTypes} case B we have that there exists $\rho'\mbox{ }\triangleright \vdashv[\Theta] V:P'$ such that 
            \begin{equation}
                \label{eq:subst_loop_1}
                \polyStar{\rho}=p\times\polyStar{\rho'}.
            \end{equation}
            By IH on $\pi_1$ we have that there exists $\pi_1[\rho'/x] \triangleright \Delta_1  \vdashv[\Theta] V'[V/x]:Q\multimap Q$ such that $\poly{\pi_1[\rho'/x]}\leq \poly{\pi_1}+\polyStar{\rho'}$. \\
            Therefore, let $\pi[\rho/x]\triangleright (p\ast\Delta_1) \boxplus \Delta_2  \vdashc[\Theta] \loopp{(V'[V/x])}{p}{N}:Q$, we have that
            \begin{align*}
                \poly{\pi[\rho/x]} 
                &=1+p+p\times \poly{\pi_1[\rho'/x]}+\poly{\pi_2} \\
                &\hspace{-1mm}\overset{IH}{\leq} 1+p+p\times (\poly{\pi_1}+\polyStar{\rho'})+\poly{\pi_2}\\ 
                &=1+p+p\times\poly{\pi_1}+p\times\polyStar{\rho'}+\poly{\pi_2}\\ 
                &\hspace{-2mm}\overset{\text{Eq.}\ref{eq:subst_loop_1}}{=} 
                1+p+p\times\poly{\pi_1}+\polyStar{\rho}+\poly{\pi_2}\\  
                &=\poly{\pi}+\polyStar{\rho}
            \end{align*}
            and we can conclude.
            \item Sub-case $x\notin\Gamma_1$ and $x\in\Gamma_2$:\\
            We have that $\Gamma_1=\Delta_1$ and $\Gamma_2=\Delta_2,x:P$.
            By LOOP-typing rule we have $\pi_1 \triangleright \Delta_1 \vdashv[\Theta] V':Q \multimap Q$ and $\pi_2 \triangleright \Delta_2,x:P' \vdashc[\Theta]N:Q$. Moreover, by definition we have that $\poly{\pi}=1+p+p\times \poly{\pi_1}+\poly{\pi_2}$.\\
            By IH on $\pi_2$ we have that there exists $\pi_2[\rho/x] \triangleright \Delta_2  \vdashc[\Theta]N[V/x]:Q$ such that $\poly{\pi_2[\rho/x]}\leq \poly{\pi_2}+\polyStar{\rho}$.\\
            Therefore, let $\pi[\rho/x]\triangleright (p\ast\Delta_1) \boxplus \Delta_2  \vdashc[\Theta] \loopp{V'}{p}{(N[V/x])}:Q$, we have that
            \begin{align*}
                \poly{\pi[\rho/x]} 
                &=1+p+p\times \poly{\pi_1}+\poly{\pi_2[\rho/x]} \\
                &\hspace{-1mm}\overset{IH}{\leq} 1+p+p\times \poly{\pi_1}+\poly{\pi_2}+\polyStar{\rho}\\  
                &=\poly{\pi}+\polyStar{\rho}
            \end{align*}
            and we can conclude.
            \item Sub-case $x\in\Gamma_1$ and $x\in\Gamma_2$:\\
            We have that $\Gamma_1=\Delta_1,x:P_1$ and $\Gamma_2=\Delta_2,x:P_2$, and $P=(p\ast P_1) \boxplus P_2$.  By LOOP-typing rule we have $\pi_1 \triangleright \Delta_1,x:P_1 \vdashv[\Theta] V':Q \multimap Q$ and $\pi_2 \triangleright \Delta_2,x:P_2 \vdashc[\Theta]N:Q$. Moreover, by definition we have that $\poly{\pi}=1+p+p\times \poly{\pi_1}+\poly{\pi_2}$.\\
            In this case case by hypothesis we have $\rho\mbox{ }\triangleright \vdashv[\Theta] V:(p\ast P_1) \boxplus P_2$.
            By Lemma~\ref{lemma:costSubOperationPositiveTypes} case A on $\rho$ we have that there exists $\rho_1\mbox{ }\triangleright \vdashv[\Theta] V:p\ast P_1$ and $\rho_2\mbox{ }\triangleright \vdashv[\Theta] V: P_2$ such that $\polyStar{\rho}=\polyStar{\rho_1}+\polyStar{\rho_2}$.
            Moreover, by Lemma~\ref{lemma:costSubOperationPositiveTypes} case B on $\rho_1$ we have that there exists $\rho'_1\mbox{ }\triangleright \vdashv[\Theta] V:  P_1$ such that $\polyStar{\rho_1}=p\times\polyStar{\rho'_1} $.
            Summing up, we have that 
            \begin{equation}
                \label{eq:subst_loop_2}
                \polyStar{\rho}=p\times\polyStar{\rho'_1}+\polyStar{\rho_2}.
            \end{equation}
            By IH on $\pi_1$ we have that there exists $\pi_1[\rho'_1/x] \triangleright \Delta_1  \vdashv[\Theta] V'[V/x]:Q\multimap Q$ such that $\poly{\pi_1[\rho'_1/x]}\leq \poly{\pi_1}+\polyStar{\rho'_1}$. \\
            By IH on $\pi_2$ we have that there exists $\pi_2[\rho_2/x] \triangleright \Delta_2  \vdashc[\Theta]N[V/x]:Q$ such that $\poly{\pi_2[\rho_2/x]}\leq \poly{\pi_2}+\polyStar{\rho_2}$.\\
            Therefore, let $\pi[\rho/x]\triangleright (p\ast\Delta_1) \boxplus \Delta_2  \vdashc[\Theta] \loopp{(V'[V/x])}{p}{(N[V/x])}:Q$, we have that
             \begin{align*}
                \poly{\pi[\rho/x]} 
                &=1+p+p\times \poly{\pi_2[\rho'_1/x]}+\polyStar{\pi_2[\rho_2/x]} \\
                &\hspace{-1mm}\overset{IHs}{\leq} 1+p+p\times (\poly{\pi_1}+\poly{\rho'_1})+\poly{\pi_2}+\polyStar{\rho_2}\\ 
                &=1+p+p\times\poly{\pi_1}+p\times\polyStar{\rho'_1}+\poly{\pi_2}+\polyStar{\rho_2}\\  
                &=\poly{\pi}+p\times\polyStar{\rho'_1}+\polyStar{\rho_2}\\  
                &\hspace{-2mm}\overset{\text{Eq.}\ref{eq:subst_loop_2}}{=} \poly{\pi}+\polyStar{\rho}
            \end{align*}
            and we can conclude.\\
        \end{itemize}

        \item \textit{Case $\pi \triangleright \Gamma_1\boxplus \Gamma_2 \vdashc[\Theta] \ifthen{Z}{N_1}{N_2}:A$}:\\ 
        Recall that in Subsection~\ref{sect:soundcomplete}, we restrict attention to \texttt{if}-constructs where both branches are closed terms, ensuring that their evaluation cost does not depend on the surrounding context and can be analyzed independently. Therefore we have $\Gamma_2$ is the empty environment and  $\Gamma_1 \boxplus \Gamma_2 = \Delta_1, x:P$ for some $\Delta_1$. Moreover, by hypothesis we have $\rho\mbox{ }\triangleright \vdashv[\Theta] V:P$. 
        By CASE-typing rule we have $\pi_g \triangleright \Delta_1,x:P \vdashv[\Theta] Z:\BB$, $\pi_1 \triangleright \Delta_2 \vdashc[\Theta] N_1:A$ and $\pi_2 \triangleright \Delta_2 \vdashc[\Theta] N_2:A$. Moreover, by definition we have $\poly{\pi}=1+\poly{\pi_g}+\poly{\pi_1}+\poly{\pi_2}$\\
        By IH on $\pi_g$ we have that there exists $\pi_g[\rho/x] \triangleright \Delta_1 \vdashv[\Theta] Z[V/x]:\BB$ such that $\poly{\pi_g[\rho/x]}\leq \poly{\pi_g}+\polyStar{\rho}$.\\
        Therefore, let $\pi[\rho/x]\triangleright \Delta_1 \boxplus \Delta_2  \vdashc[\Theta] \ifthen{Z[V/x]}{N_1}{N_2}:A$, we have that
             \begin{align*}
                \poly{\pi[\rho/x]} 
                &=1+\poly{\pi_g[\rho/x]}+\poly{\pi_1}+\poly{\pi_2} \\
                &\hspace{-1mm}\overset{IH}{\leq} 
                1+\poly{\pi_g}+\polyStar{\rho}+\poly{\pi_1}+\poly{\pi_2}\\ 
                &=\poly{\pi}+\poly{\rho}
            \end{align*}
        and we can conclude.
    \end{itemize}
\end{proof}

Another crucial property of the polynomial cost assignment is that it strictly decreases along term reduction. The following lemma formalizes this concept by showing that, for any well-typed term and store, each small-step reduction produces a term whose polynomial cost is strictly smaller than that of the original term.

\begin{lemma}[Measure Decreasing along Reduction] \label{lemma:descreasing}
    Given $\pi\triangleright  \vdashc[\Theta] M:A$ and store $e\in \Store_{\Theta}$ such that 
    $\smstep[\Theta][A][n](M,e)=\sum_{k} a_k \mbox{ } \delta(M_k, e_k)$, then for every $k$ there exists $\rho_k \triangleright  \vdashc[\Theta] M_k:A$ such that $\poly{\pi}>\poly{\rho_k}$.
\end{lemma}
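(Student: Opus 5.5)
We argue by induction on the structure of the derivation $\pi$, or equivalently on $M$, following the clauses of the small-step semantics in Figure~\ref{fig:smallstep}. For each redex shape we exhibit the derivation $\rho_k$ of the reduct and compare weights. Throughout, we work with the instance $\pi n$ obtained by substituting the security parameter, using the lemma on substitution of $i$ stated in Section~\ref{sect:calculus}. Every polynomial then becomes a number $\ge 1$, and strict decrease can be checked numerically.

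The base cases are the axiom-like redexes.
\begin{itemize}
\item For $\der(\oc N)$ we have $\poly{\pi}=1+(1\times\poly{\pi_N}+1)>\poly{\pi_N}$. The reduct is typed by the subderivation $\pi_N$ (up to weakening).
\item For the conditional, taking $\rho_k$ as the derivation of the selected branch works, because the guard and the other branch contribute positively.
\item For $\get r$ and $\setr{r}{Z}$, the reduct $\return W$ with $W$ ground is typed by a derivation of weight $1+\sizeG{G}$ (Lemma~\ref{lemma:costValueGroundType}). We must check that the weight assigned to these rules, which is only partially shown in Figure~\ref{fig:poly-cost} (the appendix uses $2+\poly{\pi_1}$ for \textsc{set}), exceeds this. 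If the clause for \textsc{get} is not large enough, we fix it to $2+\sizeG{G}$.
\item The same applies to $f_p(Z_1,\dots,Z_m)$, whose weight includes the summand $\sizeG{G}$. Each outcome $\return V$ with $V\in\den{G}_{p(n)}$ is typable by Lemma~\ref{lemma:typableValuesGroudType}, with weight $1+\sizeG{G}$, which is smaller than $1+\sum\poly{\pi_k}+\sizeG{G}+1$.
\end{itemize}

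The $\beta$-like redexes use the Substitution Lemma~\ref{lemma:substitution_lamBLL}.
\begin{itemize}
\item For $\letin{x=\return V}{M'}$ we get
\[\poly{\rho}\le\poly{\pi_{M'}}+\polyStar{\rho_V}\le \poly{\pi_{M'}}+\poly{\rho_V}<1+(1+\poly{\rho_V})+\poly{\pi_{M'}},\]
using $\polyStar{\cdot}\le\poly{\cdot}$.
\item For application $(\return\lambda x.N)Z$ the computation is similar: the weight drops by the constructors of $\return$, $\lambda$ and the application node.
\item For $\letin{\lis{x,y}=\lis{V,W}}{M'}$ we substitute twice. The bound $\polyStar{\rho_V}+\polyStar{\rho_W}=\polyStar{\rho_{\lis{V,W}}}$ follows from the \textsc{tensor} clause of $\polyStar{\cdot}$, and the $1$ of the pair node together with the $1$ of the let node gives strictness.
\end{itemize}
Contexts with shared variables are handled by Lemma~\ref{lemma:costSubOperationPositiveTypes}, exactly as inside the Substitution Lemma.

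For the congruence cases, where $M=\letin{x=N}{M'}$ or $M=NZ$ with $N$ not a value, the induction hypothesis on the subderivation for $N$ gives derivations $\rho'_k$ for each reduct $N_k$ with $\poly{\rho'_k}<\poly{\pi_N}$. Rebuilding the \textsc{let} or \textsc{app} node gives the result, since both weights are additive in the subderivation. A side remark is needed here: typing in the store context $\Theta$ is preserved, because the store does not affect typing.

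The loop is the step we expect to be the main obstacle. For $p=1$ the reduct $\letin{x=N}{M'}$ has weight $1+\poly{\pi_N}+\poly{\pi_{M'}}$. We need this to be below $2+\poly{\pi_V}+\poly{\pi_N}$, where $\poly{\pi_V}=1+\poly{\pi_{M'}}$, which is fine. For $k+1$ the reduct is $\letin{x=\loopp{V}{k}{N}}{M'}$, and its weight is
\[1+(1+k+k\poly{\pi_V}+\poly{\pi_N})+\poly{\pi_{M'}}.\]
The original weight is $1+(k+1)+(k+1)\poly{\pi_V}+\poly{\pi_N}$. The difference is $\poly{\pi_V}-\poly{\pi_{M'}}-1=0$, so strictness fails by exactly one unit. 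There are two repairs, and we would try the first. Either we charge $2$ instead of $1$ for the \textsc{let} node generated by unfolding, by redefining the weight of \textsc{loop} as $1+2p+p\times\poly{\pi_1}+\poly{\pi_2}$; this keeps polynomiality and leaves Size Soundness unchanged. Or we note that $\poly{\pi_V}$ exceeds $1+\poly{\pi_{M'}}$ whenever $V$ is typed in a context split by the $p\ast\Gamma$ factor. A further technical point is that the contexts $p\ast\Gamma$ must be re-split as $(k\ast\Gamma)\boxplus\Gamma$; this is where Lemma~\ref{lemma:costSubOperationPositiveTypes} and the identity $(k+1)\ast P=(k\ast P)\boxplus P$ are needed.
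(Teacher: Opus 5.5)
You follow the paper's route: case analysis on the clauses of Figure~\ref{fig:smallstep}, the Substitution Lemma with $\polyStar{\cdot}\le\poly{\cdot}$ for the $\beta$-like redexes, and the induction hypothesis for the two congruence cases. However, the loop case, which you single out as the crux, contains an arithmetic slip that leads you to the wrong conclusion. Write $a=\poly{\pi_V}$, $b=\poly{\pi_N}$ and $c=\poly{\pi_{M'}}$. The redex $\loopp{V}{k+1}{N}$ weighs $1+(k+1)+(k+1)a+b$. The reduct $\letin{x=\loopp{V}{k}{N}}{M'}$ weighs $1+(1+k+ka+b)+c$. Their difference is therefore $a-c$, not $a-c-1$. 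The \textsc{lam} clause gives $a=1+c$, so the difference is exactly $1$ and the measure strictly decreases with the definition as given. This is the paper's own computation: $3+2k+(k+1)\poly{\pi'_1}+\poly{\pi_2}$ against $2+2k+(k+1)\poly{\pi'_1}+\poly{\pi_2}$.

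As written, your proposal asserts that the lemma fails for the paper's $\poly{\cdot}$ and establishes it only for a re-weighted \textsc{loop} clause. That is a different statement, since Theorem~\ref{th:polysound} and Corollary~\ref{cor:polytimesoundness} are stated for the paper's measure, even though your modified measure would also go through. Your second repair is not available at all. The \textsc{lam} clause makes $\poly{\pi_V}$ equal to $1+\poly{\pi_{M'}}$, never larger. Moreover, the lemma concerns closed computations, so $V$ is typed in the empty context and the re-splitting $(k+1)\ast\Gamma=(k\ast\Gamma)\boxplus\Gamma$ you worry about is vacuous here. Simply correct the arithmetic and drop both repairs.

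A minor point of the same kind arises in the $f_p$ case. You compare with $1+\sum_j\poly{\pi_j}+\sizeG{G}+1$, whereas the appendix charges $1+\sizeG{G}+\sum_j\poly{\pi_{1j}}$. This still gives strictness for $m\ge 1$. For a nullary symbol such as $\random$, it does so only if \textsc{eta} adds nothing. The paper sidesteps this by comparing against the value derivation, of weight $\sizeG{G}$.
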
 
\begin{proof}
    By induction on the definition of small step reduction (Figure~\ref{fig:smallstep}).
    \begin{itemize}
        \item \textit{Case $\pi\triangleright  \vdashc[\Theta] \letin{x=\return{V}}{N}:A$ and $\smstep[\Theta][A][n](M,e)=\delta(N[V/x],e)$}:\\
        By LET-typing rule we have $\pi_1 \triangleright  \vdashc[\Theta] \return{V}:P$ and $\pi_2 \triangleright x:P \vdashc[\Theta] N:A$ and by definition 
        \begin{equation}
            \label{eq:dec_let_1}
            \poly{\pi}=1+\poly{\pi_1}+\poly{\pi_2}.
        \end{equation}
         Moreover, by ETA-typing rule applied to $\pi_1$ we have $\pi'_1 \triangleright  \vdashv[\Theta] V:P$ and by definition 
        \begin{equation}
            \label{eq:dec_let_2}
            \poly{\pi_1}=1+\poly{\pi'_1}.
        \end{equation} 
        By Lemma~\ref{lemma:substitution_lamBLL} applied to $\pi_2$ and $\pi'_1$, we have that there exists $\pi_2[\pi'_1/x] \triangleright \vdashc[\Theta] N[V/x]:A$ such that 
        $\poly{\pi_2[\pi'_1/x]}\leq\poly{\pi_2}+\polyStar{\pi'_1}$.
        Moreover, we have that $\polyStar{\cdot}$ is always lower or equal to $\poly{\cdot}$, then we have:
        \begin{equation}
            \label{eq:dec_let_4}
            \poly{\pi_2[\pi'_1/x]}\leq\poly{\pi_2}+\poly{\pi'_1}.
        \end{equation} 
        Therefore, we have that 
        \begin{align*}
            \poly{\pi} &\overset{\text{Eq.~\ref{eq:dec_let_1}}}{=} 1+\poly{\pi_1}+\poly{\pi_2}\\
            &\overset{\text{Eq.~\ref{eq:dec_let_2}}}{=} 2+\poly{\pi'_1}+\poly{\pi_2}\\
            &\hspace{2mm} > \poly{\pi_2[\pi'_1/x]}\\
            &\overset{\text{Eq.~\ref{eq:dec_let_4}}}{\leq} \poly{\pi'_1}+\poly{\pi_2}
        \end{align*}
        and we can conclude.\\

        \item \textit{Case $\pi\triangleright  \vdashc[\Theta] NZ:A$ and $\smstep[\Theta][A][n](M,e)=\delta(N_k Z,e_k)$}:\\
        By APP-typing rule we have  $\pi_1 \triangleright  \vdashc[\Theta] N:P\multimap A$ and $\pi_2 \triangleright  \vdashv[\Theta] Z:P$ and by definition 
        \begin{equation}
            \label{eq:dec_app_1}
            \poly{\pi}=1+\poly{\pi_1}+\poly{\pi_2}.
        \end{equation}
        Moreover, by small step semantics $\smstep[\Theta][A][n](N,e)=\sum_{k} a_k \mbox{ }\delta(N_k,e_k)$.\\
        By IH on $\pi_1$ we have that for every $k$ there exists $\rho'_k \triangleright  \vdashc[\Theta] N_k:P\multimap A$ such that $\poly{\pi_1}>\poly{\rho'_k}$.\\
        Let $\rho_k \triangleright  \vdashc[\Theta] N_k Z: A$, then by definition we have 
        \begin{equation}
            \label{eq:dec_app_2}
            \poly{\rho_k}=1+\poly{\rho'_k}+\poly{\pi_2}.
        \end{equation}
        Therefore, we have that
        \begin{align*}
            \poly{\pi} \overset{\text{Eq.~\ref{eq:dec_app_1}}}{=} 1+\poly{\pi_1}+\poly{\pi_2}
            \overset{IH}{>} 1+\poly{\rho'_k}+\poly{\pi_2}
            \overset{\text{Eq.~\ref{eq:dec_app_2}}}{=} \poly{\rho_k}
        \end{align*}
        this holds for every $k$ and so we can conclude.\\

        \item \textit{Case $\pi\triangleright  \vdashc[\Theta] \letin{x=N}{N'}:A$ and $\smstep[\Theta][A][n](M,e)=\sum_{k} a_k \mbox{ } \delta(\letin{x=N_k}{N'}, e_k)$}:\\
        By LET-typing rule we have $\pi_1 \triangleright  \vdashc[\Theta] N:P$ and $\pi_2 \triangleright x:P \vdashc[\Theta] N':A$ and by definition 
        \begin{equation}
            \label{eq:dec_let_2_1}
            \poly{\pi}=1+\poly{\pi_1}+\poly{\pi_2}.
        \end{equation}
        Moreover, by small step semantics $\smstep[\Theta][A][n](N,e)=\sum_{k} a_k \mbox{ }\delta(N_k,e_k)$.\\
        By IH on $\pi_1$ we have that for every $k$ there exists $\rho'_k \triangleright  \vdashc[\Theta] N_k:P$ such that $\poly{\pi_1}>\poly{\rho'_k}$.\\
        Let $\rho_k \triangleright  \vdashc[\Theta] \letin{x=N_k}{N'}: A$, then by definition we have 
        \begin{equation}
            \label{eq:dec_let_2_2}
            \poly{\rho_k}=1+\poly{\rho'_k}+\poly{\pi_2}.
        \end{equation}
        Therefore, we have that
        \begin{align*}
            \poly{\pi} \overset{\text{Eq.~\ref{eq:dec_let_2_1}}}{=} 1+\poly{\pi_1}+\poly{\pi_2}
            \overset{IH}{>} 1+\poly{\rho'_k}+\poly{\pi_2}
            \overset{\text{Eq.~\ref{eq:dec_let_2_2}}}{=} \poly{\rho_k}
        \end{align*}
        this holds for every $k$ and so we can conclude.\\

        \item \textit{Case $\pi\triangleright  \vdashc[\Theta] \letin{\lis{x_1,x_2}=\lis{Z_1,Z_2}}{N}:A$ and $\smstep[\Theta][A][n](M,e)=\delta(N[Z_1/x_1,Z_2/x_2],e)$}:\\
        By LET\_PAIR-typing rule we have $\pi_1 \triangleright  \vdashv[\Theta] \lis{Z_1,Z_2}:P_1 \tensor P_2$ and $\pi_2 \triangleright x_1:P_1,x_2:P_2 \vdashc[\Theta] N:A$ and by definition 
        \begin{equation}
            \label{eq:dec_let_pair_1}
            \poly{\pi}=1+\poly{\pi_1}+\poly{\pi_2}.
        \end{equation}
        Moreover, by TENSOR-typing rule applied to $\pi_1$ we have  $\pi'_1 \triangleright  \vdashv[\Theta] Z_1:P_1$ and $\pi''_1 \triangleright  \vdashv[\Theta] Z_2:P_2$, and by definition
        \begin{equation}
            \label{eq:dec_let_pair_2}
            \poly{\pi_1}=1+\poly{\pi'_1}+\poly{\pi''_1}.
        \end{equation}
        By Lemma~\ref{lemma:substitution_lamBLL} applied twice, we have that there exists $\pi_2[\pi'_1/x_1,\pi''_1/x_2] \triangleright \vdashc[\Theta] N[Z_1/x_1,Z_2/x_2]:A$ such that 
        $\poly{\pi_2[\pi'_1/x_1,\pi''_1/x_2]}\leq \poly{\pi_2}+\polyStar{\pi'_1}+\polyStar{\pi''_1}$.
        Moreover, we have that $\polyStar{\cdot}$ is always lower or equal to $\poly{\cdot}$, then we have:
        \begin{equation}
            \label{eq:dec_let_pair_3}
            \poly{\pi_2[\pi'_1/x_1,\pi''_1/x_2]}\leq \poly{\pi_2}+\poly{\pi'_1}+\poly{\pi''_1}.
        \end{equation} 
        Therefore, we have that
        \begin{align*}
            \poly{\pi} &\overset{\text{Eq.~\ref{eq:dec_let_pair_1}}}{=} 1+\poly{\pi_1}+\poly{\pi_2}\\ 
            &\hspace{2mm} > \poly{\pi_2[\pi'_1/x_1,\pi''_1/x_2]}\\
            &\overset{\text{Eq.~\ref{eq:dec_let_pair_3}}}{\leq} 1+\poly{\pi_2}+\poly{\pi'_1}+\poly{\pi''_1}\\ 
            &\overset{\text{Eq.~\ref{eq:dec_let_pair_2}}}{=} 1+ \poly{\pi_1}+\poly{\pi_2}
        \end{align*}
        and we can conclude.\\

        \item \textit{Case $\pi\triangleright  \vdashc[\Theta] \der(!N):A$ and $\smstep[\Theta][A][n](M,e)=\delta(N,e)$}:\\
        By DER-typing rule we have $\pi_1 \triangleright  \vdashv[\Theta] !N:!_1 A$ and by definition
        \begin{equation}
            \label{eq:dec_der_1}
            \poly{\pi}=1+\poly{\pi_1}.
        \end{equation} 
        Moreover, by BANG-typing rule applied to $\pi_1$ we have $\pi'_1 \triangleright  \vdashc[\Theta] N:A$ and by definition
        \begin{equation}
            \label{eq:dec_der_2}
            \poly{\pi_1}=1\times\poly{\pi'_1}=\poly{\pi'_1}.
        \end{equation} 
        Therefore, we have that
        \begin{align*}
            \poly{\pi} &\overset{\text{Eq.~\ref{eq:dec_der_1}}}{=} 1+\poly{\pi_1}  
            > \poly{\pi'_1}
            \overset{\text{Eq.~\ref{eq:dec_der_2}}}{=} \poly{\pi_1}
        \end{align*}
        and we can conclude.\\

        \item \textit{Case $\pi\triangleright  \vdashc[\Theta] (\return{(\lambda x.N)})Z:A$ and $\smstep[\Theta][A][n](M,e)=\delta(N[Z/x],e)$}:\\
        Since by hypothesis we have $\pi\triangleright  \vdashc[\Theta] (\return{(\lambda x.N)})Z:A$, we proceed by analyzing such derivation as follows:
        \begin{itemize}
            \item By APP-typing rule we have $\pi_1 \triangleright  \vdashc[\Theta] \return{(\lambda x.N)}:P\multimap A$ and $\pi_2 \triangleright  \vdashv[\Theta] Z:P$, and by definition we have $\poly{\pi}=1+\poly{\pi_1}+\poly{\pi_2}$.
            \item By ETA-typing rule applied to $\pi_1$ we have $\pi'_1 \triangleright  \vdashv[\Theta] \lambda x.N:P\multimap A$ and by definition $\poly{\pi_1}=\poly{\pi'_1}$.
            \item By LAM-typing rule applied to $\pi'_1$ we have $\pi''_1 \triangleright x:P \vdashc[\Theta] N:A$ and by definition $\poly{\pi'_1}=1+\poly{\pi''_1}$.
        \end{itemize}
        Summing up, we gave that 
        \begin{equation}
            \label{eq:dec_app_1_1}
            \poly{\pi}=2+\poly{\pi''_1}+\poly{\pi_2}.
        \end{equation}
        By Lemma~\ref{lemma:substitution_lamBLL} applied to $\pi''_1$ and $\pi_2$, we have that there exists $\pi''_1[\pi_2/x] \triangleright \vdashc[\Theta] N[Z/x]:A$ such that 
        $\poly{\pi''_1[\pi_2/x] }\leq \poly{\pi''_1}+\polyStar{\pi_2}$.
        Moreover, we have that $\polyStar{\cdot}$ is always lower or equal to $\poly{\cdot}$, then we have:
        \begin{equation}
            \label{eq:dec_app_1_2}
            \poly{\pi''_1[\pi_2/x] }\leq \poly{\pi''_1}+\poly{\pi_2}.
        \end{equation}  
        Therefore, we have that
        $$
        \poly{\pi}\overset{\text{Eq.~\ref{eq:dec_app_1_1}}}{=} 2+\poly{\pi''_1}+\poly{\pi_2}
        > \poly{\pi''_1[\pi_2/x] }
        \overset{\text{Eq.~\ref{eq:dec_app_1_2}}}{\leq} \poly{\pi''_1}+\poly{\pi_2}
        $$
        and we can conclude.\\

        \item \textit{Case $\pi\triangleright  \vdashc[\Theta] \ifthen{\true}{N_1}{N_2}:A$ and $\smstep[\Theta][A][n](M,e)=\delta(N_1,e)$}:\\
        By CASE-typing rule we have $\pi_g \triangleright  \vdashv[\Theta] \true:\BB$, $\pi_1 \triangleright  \vdashc[\Theta] N_1:A$ and $\pi_2 \triangleright  \vdashc[\Theta] N_2:A$ and by definition 
        \begin{equation}
            \label{eq:dec_if_true}
            \poly{\pi}=1+\poly{\pi_g}+ \poly{\pi_1}+\poly{\pi_2}.
        \end{equation} 
        Therefore, we have that
        $$
        \poly{\pi}\overset{\text{Eq.~\ref{eq:dec_if_true}}}{=} 1+\poly{\pi_g}+\poly{\pi_1}+\poly{\pi_2}
        > \poly{\pi_1}
        $$
        and we can conclude.\\

        \item \textit{Case $\pi\triangleright  \vdashc[\Theta] \ifthen{\false}{N_1}{N_2}:A$ and $\smstep[\Theta][A][n](M,e)=\delta(N_2,e)$}:\\
        By CASE-typing rule we have $\pi_g \triangleright  \vdashv[\Theta] \true:\BB$, $\pi_1 \triangleright  \vdashc[\Theta] N_1:A$ and $\pi_2 \triangleright  \vdashc[\Theta] N_2:A$ and by definition 
        \begin{equation}
            \label{eq:dec_if_false}
            \poly{\pi}=1+\poly{\pi_g}+\poly{\pi_1}+\poly{\pi_2}.
        \end{equation} 
        Therefore, we have that
        $$
        \poly{\pi}\overset{\text{Eq.~\ref{eq:dec_if_false}}}{=} 1+\poly{\pi_g}+\poly{\pi_1}+\poly{\pi_2}
        > \poly{\pi_2}
        $$
        and we can conclude.\\

        \item \textit{Case $\pi\triangleright  \vdashc[\Theta] \setr{r}{V}:\UU$ and $\smstep[\Theta][A][n](M,e)=\delta(\return{\star},e[V/r])$}:\\
        By SET-typing rule we have $\pi_1 \triangleright  \vdashv[\Theta] V:G$ and $r:G\in\Theta$, by definition
        \begin{equation}
            \label{eq:dec_set_1}
            \poly{\pi}=2+\poly{\pi_1}.
        \end{equation}\\
        Let $\rho \triangleright  \vdashc[\Theta] \return{\star}: \UU$, then by ETA-typing rule we have  $\rho' \triangleright  \vdashv[\Theta] \star: \UU$ and by definition
        \begin{equation}
            \label{eq:dec_set_2}
            \poly{\rho}=1+\poly{\rho'}=1+1=2.
        \end{equation}
        Therefore, we have that
        $
        \poly{\pi} \overset{\text{Eq.~\ref{eq:dec_set_1}}}{=} 2+\poly{\pi_1} 
        > \poly{\rho} 
        \overset{\text{Eq.~\ref{eq:dec_set_2}}}{=} 2
        $
        and we can conclude since it is easy to check that $\forall n\geq 1.\mbox{ } \poly{\pi_1}n\geq 1$.\\

        \item \textit{Case $\pi\triangleright  \vdashc[\Theta] \get{r}:G$ and $\smstep[\Theta][A][n](M,e)=\delta(\return{e(r)},e)$}:\\
        By GET-typing rule we have $r:G\in\Theta$ and by definition
        \begin{equation}
            \label{eq:dec_get_1}
            \poly{\pi}=2+\sizeG{G}.
        \end{equation}
        Let $\rho \triangleright  \vdashc[\Theta] \return{e(r)}: G$, then by ETA-typing rule and since $e(r)$ is of ground type $G$ and by Lemma~\ref{lemma:typableValuesGroudType} we have  $\rho' \triangleright  \vdashv[\varnothing] e(r): G$. By definition $\poly{\rho}=1+\poly{\rho'}$. Moreover, by Lemma~\ref{lemma:costValueGroundType} applied to $\rho'$ we have that $\poly{\rho'}=\sizeG{G}$. Summing up, we have 
        \begin{equation}
            \label{eq:dec_get_2}
            \poly{\rho}=1+\sizeG{G}.
        \end{equation}
        Therefore, we have that
        $
        \poly{\pi} \overset{\text{Eq.~\ref{eq:dec_get_1}}}{=} 2+\sizeG{G} 
        > \poly{\rho} 
        \overset{\text{Eq.~\ref{eq:dec_get_2}}}{=} 1+\sizeG{G}
        $
        and we can conclude.\\

        \item \textit{Case $\pi\triangleright  \vdashc[\Theta] f_p(Z_1,\ldots,Z_m):Gp$ and $\smstep[\Theta][A][n](M,e)=\sum_k a_k \mbox{ } \delta(U_k,e_k)=\den{f}_p(n) (Z_1,\ldots,Z_m)$}:\\
        By FUN-typing rule we have $\typeof{f}=G_1 \times \ldots \times G_m \rightarrow G$ and $\pi_{1j}\triangleright  \vdashv[\Theta] Z_j:G_j$ for $1\leq j\leq m$. Moreover, by definition we have 
        \begin{equation}
            \label{eq:dec_fun_1}
            \poly{\pi}=1+\sizeG{G}+\sum_{j=1}^{m} \poly{\pi_{1j}}.
        \end{equation} 
        Since $U_k \in \den{Gp}_n$ for every $k$, by Lemma~\ref{lemma:typableValuesGroudType} we have $\rho_k \triangleright\vdashv[\varnothing] U_k:Gp(n)$. Moreover, by Lemma~\ref{lemma:costValueGroundType} applied to $\rho_k$ for every $k$, we have that
        \begin{equation}
            \label{eq:dec_fun_3}
            \poly{\rho_k}=\sizeG{G}.
        \end{equation}
        Therefore, we have that
        $$
        \poly{\pi} \overset{\text{Eq.~\ref{eq:dec_fun_1}}}{=} 1+\sizeG{G}+\sum_{j=1}^{m} \poly{\pi_{1j}} 
        > \poly{\rho_k} 
        \overset{\text{Eq.~\ref{eq:dec_fun_3}}}{=} \sizeG{G}
        $$
        this holds for every $k$ and so we can conclude.\\

        \item \textit{Case $\pi\triangleright  \vdashc[\Theta] \loopp{(\lambda x.N_2)}{1}{N_1}:P$ and $\smstep[\Theta][A][n](M,e)=\sum_k a_k \mbox{ } \delta(\letin{x=N_1}{N_2},e)$}:\\
        Since by hypothesis we have $\pi\triangleright  \vdashc[\Theta] \loopp{(\lambda x.N_2)}{1}{N_1}:P$, we proceed by analyzing such derivation as follows:
        \begin{itemize}
            \item By LOOP-typing rule we have $\pi_1\triangleright  \vdashv[\Theta] \lambda x.N_2:P\multimap P$ and $\pi_2\triangleright  \vdashc[\Theta] N_1:P$. Moreover, by definition
            $\poly{\pi}=1+1+1\times \poly{\pi_1}+\poly{\pi_2}=2+\poly{\pi_1}+\poly{\pi_2}$.
            \item By LAM-typing rule applied to $\pi_1$ we have $\pi'_1\triangleright  x:P \vdashc[\Theta] N_2:P$. Moreover, by definition we have that $\poly{\pi_1}=1+\poly{\pi'_1}$.
        \end{itemize}
        Summing up, we have that
        \begin{equation}
            \label{eq:dec_loop_1_1}
            \poly{\pi}=3+\poly{\pi'_1}+\poly{\pi_2}.
        \end{equation}
        Let $\rho \triangleright  \vdashc[\Theta] \letin{x=N_1}{N_2}:P$, then by definition 
        \begin{equation}
            \label{eq:dec_loop_1_2}
            \poly{\rho}=1+\poly{\pi'_1}+\poly{\pi_2}.
        \end{equation} 
        Therefore, we have that
        $$
        \poly{\pi} \overset{\text{Eq.~\ref{eq:dec_loop_1_1}}}{=} 3+\poly{\pi'_1}+\poly{\pi_2} 
        > \poly{\rho} 
        \overset{\text{Eq.~\ref{eq:dec_loop_1_2}}}{=} 1+\poly{\pi'_1}+\poly{\pi_2}
        $$
        and we can conclude.\\

        \item \textit{Case $\pi\triangleright  \vdashc[\Theta] \loopp{(\lambda x.N_2)}{k+1}{N_1}:P$ and \\$\smstep[\Theta][A][n](M,e)=\sum_k a_k \mbox{ } \delta(\letin{x=(\loopp{(\lambda x.N_2)}{k}{N_1})}{N_2},e)$}:\\
        Since by hypothesis we have $\pi\triangleright  \vdashc[\Theta] \loopp{(\lambda x.N_2)}{k+1}{N_1}:P$, we proceed by analyzing such derivation as follows:
        \begin{itemize}
            \item By LOOP-typing rule we have $\pi_1\triangleright  \vdashv[\Theta] \lambda x.N_2:P\multimap P$ and $\pi_2\triangleright  \vdashc[\Theta] N_1:P$. Moreover, by definition
            $\poly{\pi}=1+k+1+(k+1)\times \poly{\pi_1}+\poly{\pi_2}=2+k+(k+1)\times \poly{\pi_1}+\poly{\pi_2}$.
            \item By LAM-typing rule applied to $\pi_1$ we have $\pi'_1\triangleright  x:P \vdashc[\Theta] N_2:P$. Moreover, by definition we have that $\poly{\pi_1}=1+\poly{\pi'_1}$.
        \end{itemize}
        Summing up, we have that
        \begin{align*}
            \poly{\pi}&=2+k+(k+1)\times \poly{\pi_1}+\poly{\pi_2}\\
            &=2+k+(k+1)\times(1+\poly{\pi'_1})+\poly{\pi_2}\\
            &=2+k+(k+1)\times 1+(k+1)\times\poly{\pi'_1}+\poly{\pi_2}\\
            &=2+k+k+1+(k+1)\times\poly{\pi'_1}+\poly{\pi_2}\\
            &=3+2k+(k+1)\times\poly{\pi'_1}+\poly{\pi_2}
        \end{align*}
        Thus, we have
        \begin{equation}
            \label{eq:dec_loop_2_1}
            \poly{\pi}=3+2k+(k+1)\times\poly{\pi'_1}+\poly{\pi_2}.
        \end{equation}
        Let $\rho \triangleright  \vdashc[\Theta] \letin{x=(\loopp{(\lambda x.N_2)}{k}{N_1})}{N_2}:P$ , we proceed by analyzing such derivation as follows:
        \begin{itemize}
            \item By LET-typing rule we have $\rho_1 \vdashc[\Theta] \loopp{(\lambda x.N_2)}{k}{N_1}:P$ and $\pi'_1\triangleright  x:P \vdashc[\Theta] N_2:P$. By definition we have $\poly{\rho}=1+\poly{\rho_1}+\poly{\pi'_1}$.
            \item By LOOP-typing rule applied to $\rho_1$ we have $\pi_1\triangleright  \vdashv[\Theta] \lambda x.N_2:P\multimap P$ and $\pi_2\triangleright  \vdashc[\Theta] N_1:P$. Moreover, by definition
            $\poly{\rho_1}=1+k+k\times \poly{\pi_1}+\poly{\pi_2}$.
            \item By LAM-typing rule applied to $\pi_1$ we have $\pi'_1\triangleright  x:P \vdashc[\Theta] N_2:P$. Moreover, by definition we have that $\poly{\pi_1}=1+\poly{\pi'_1}$.
        \end{itemize}
        Summing up, we have that
        \begin{align*}
            \poly{\rho}&=1+\poly{\rho_1}+\poly{\pi'_1}\\
            &=1+\left( 1+k+k\times \poly{\pi_1}+\poly{\pi_2} \right)+\poly{\pi'_1}\\
            &=1+\left( 1+k+k\times (1+\poly{\pi'_1})+\poly{\pi_2} \right)+\poly{\pi'_1}\\
            &=2+k+k\times(1+\poly{\pi'_1})+\poly{\pi_2}+\poly{\pi'_1}\\
            &=2+k+k+k\times\poly{\pi'_1}+\poly{\pi_2}+\poly{\pi'_1}\\
            &=2+2k+k\times\poly{\pi'_1}+\poly{\pi'_1}+\poly{\pi_2}\\
            &=2+2k+(k+1)\times\poly{\pi'_1}+\poly{\pi_2}
        \end{align*}
        Thus, we have
        \begin{equation}
            \label{eq:dec_loop_2_2}
            \poly{\rho}=2+2k+(k+1)\times \poly{\pi'_1}+\poly{\pi_2}
        \end{equation}
        Therefore, we have that
        \begin{align*}
        \poly{\pi} &\overset{\text{Eq.~\ref{eq:dec_loop_2_1}}}{=} 3+2k+(k+1)\times\poly{\pi'_1}+\poly{\pi_2}\\
        &> \poly{\rho} 
        \overset{\text{Eq.~\ref{eq:dec_loop_2_2}}}{=} 2+2k+(k+1)\times \poly{\pi'_1}+\poly{\pi_2}
        \end{align*}
        and we can conclude.\\
    \end{itemize}
\end{proof}

Moreover, we prove the Subject Reduction Theorem (Theorem~\ref{th:SR}), which guarantees two critical properties for well-typed terms in $\lamBLL$. First, it ensures that every term obtained after a single small-step reduction remains well-typed with the same type, preserving type correctness. Second, the theorem guarantees that the assigned cost of the original term is always strictly higher than the costs of any terms it reduces to, reinforcing the idea that evaluation progresses by reducing complexity. Together, these properties are essential to maintain both type safety and the soundness of the polynomial cost analysis during execution.

\begin{theorem}[Quantitative Subject Reduction]\label{th:SR}
    For all $n\geq 1$, for every reference context $\Theta$, type $A$ and term $M\in\Lambda_n(\Theta,A)$, for all $e\in\Store_{\Theta}$ such that $\smstep[\Theta][A][n](M,e)=\sum_{k} a_k \mbox{ } \delta(M_k, e_k)$, we have that:
    \begin{enumerate}
        \item $\forall k. \vdashc[\Theta n] M_k:An$ is derivable;
        \item $\cost_n^{A,\Theta}(M,e)=1+\max_k \cost_n^{A,\Theta}(M_k,e_k)$
    \end{enumerate}
\end{theorem}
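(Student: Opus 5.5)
The proof splits into the two claims, proved separately: typing preservation by induction on the small-step semantics, and the cost identity from the definition of iterated reduction.

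\emph{Part 1 (typing preservation).} We proceed by induction on the definition of $\smstep[\Theta][A][n]$ in Figure~\ref{fig:smallstep}, inverting the typing derivation of $M$ at each step. This is essentially the first half of Lemma~\ref{lemma:descreasing}, read at a fixed instance $n$ (via the substitution lemma for the security parameter, applied to $\pi$ to obtain $\pi n$). The cases are as follows.
\begin{itemize}
\item For the $\beta$-like redexes ($(\return \lambda x.N)Z$, $\letin{x=\return V}{N}$, and $\letin{\lis{x,y}=\lis{V,W}}{N}$), we invoke Lemma~\ref{lemma:substitution_lamBLL}, which also yields a typing derivation for the contractum.
\item For $\der(\oc N)$, the conditional, and the two loop unfoldings, we rebuild the derivation directly from the premises. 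For the loop, we use the LET and LOOP rules with grade $k$ in place of $k+1$; contexts are empty, so no splitting of $\boxplus$ is needed.
\item For $\setr{r}{V}$ we type $\return\star$ by ETA and UNIT. For $\get r$ and for $f_p(\vec Z)$, each result is a ground value in $\den{G}_n$; we type it with Lemma~\ref{lemma:typableValuesGroudType} and weaken the empty reference context to $\Theta n$.
\item For the congruence cases ($NZ$ with $N$ not a value, and $\letin{x=N}{M'}$), the induction hypothesis types each $N_k$, and we reapply APP or LET.
\end{itemize}

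\emph{Part 2 (cost identity).} We unfold Definition~\ref{def:cost} together with the inference rule defining $\smstep[\Theta][A][n,k]$. Since $\smstep[\Theta][A][n](M,e)$ is a proper distribution, $M$ is not of the form $\return V$. Hence $\smstep[\Theta][A][n,0](M,e)=\bot$, and any $k$ with $\smstep[\Theta][A][n,k](M,e)\neq\bot$ has the form $1+l$.

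We then show two inequalities.
\begin{itemize}
\item ($\le$) If every $\smstep[\Theta][A][n,l_k](M_k,e_k)$ is defined with $l_k=\cost(M_k,e_k)$, the rule gives a distribution at index $1+\max_k l_k$.
\item ($\ge$) Any derivation of $\smstep[\Theta][A][n,1+l](M,e)\neq\bot$ must come from that rule. Because the one-step distribution is uniquely determined, it forces each $(M_k,e_k)$ to be defined at some index $l'_k$ with $\max_k l'_k = l$, hence $\cost(M_k,e_k)\le l$.
\end{itemize}
Infinite costs are handled uniformly: if some $M_k$ has cost $+\infty$, then no finite index works for $M$ either, so both sides are $+\infty$. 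The maximum is over finitely many $k$, since distributions are finitely supported.

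\emph{Main obstacle.} The main technical point is a monotonicity fact: if $\smstep[\Theta][A][n,l](N,e)$ is defined then so is $\smstep[\Theta][A][n,l']$ for all $l'\ge l$, with the same distribution. This fact is needed for the ($\le$) direction when the $l_k$ differ. The rule as stated does not obviously permit padding the smaller indices, so the maximum must be matched exactly. I would prove monotonicity by a separate induction on $l$. The base case uses the fact that $\smstep(\return V,e)=\bot$, so values stay at cost $0$; this requires reading the rule so that value components are allowed to sit at any index.
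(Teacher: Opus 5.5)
Your proof is correct and is essentially the paper's: part~1 is exactly the typing half of Lemma~\ref{lemma:descreasing}, which the paper simply cites. Part~2 is the same argument from the defining rule of $\smstep[\Theta][A][n,k]$: your ($\ge$) bullet matches the paper's step $l=1+\max_k l_k\geq 1+\max_k\cost_n^{A,\Theta}(M_k,e_k)$, and your ($\le$) bullet plays the role of its minimality-of-$l$ contradiction.

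The one thing to discard is your ``main obstacle''. The rule already lets each $(M_k,e_k)$ sit at its own index $l_k$ and concludes at $1+\max_k l_k$, so taking $l_k=\cost_n^{A,\Theta}(M_k,e_k)$ gives ($\le$) directly, with no padding. Moreover, the monotonicity you propose is false for the definition as stated: the rule's premise fails on values because $\smstep[\Theta][A][n](\return V,e)=\bot$, so $\smstep[\Theta][A][n,1](\return V,e)$ is $\bot$ rather than $\delta(V,e)$.
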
 
\begin{proof}\leavevmode \vspace{-1mm}
    \begin{enumerate}
        \item The existence of a derivation $\vdashc[\Theta n] M_k:An$ has been already proved in Lemma~\ref{lemma:descreasing}.
        \item By Definition~\ref{def:cost} we have that $\cost_n^{A,\Theta}(M,e)=\inf\limits_k \{ \smstep[\Theta][A][n,k](M,e) \neq \bot \}$, so we proceed by 
        distinguishing two cases:
        \begin{varitemize}
            \item \textit{For every $l\in\mathbb{N}$ it holds that $\smstep[\Theta][A][n,l](M,e)= \bot$}.
          	In this case we have that $\cost_n^{A,\Theta}(M,e)=\infty$, and moreover there must be a $k$ such that
          	$\cost_n^{A,\Theta}(M_k,e_k)=\infty$, because if all those expressions $\cost_n^{A,\Theta}(M_k,e_k)$ were finite, there would be a natural number $m$
          	with $\smstep[\Theta][A][n,l](M,e)\neq\bot$. The equality thus holds.\\
            \item \textit{There is $l\in\mathbb{N}$ such that $\smstep[\Theta][A][n,l](M,e)\neq \bot$}. Suppose that $l$ is the least such $l$.
            We have that $\smstep[\Theta][A][n](M,e)=\sum_{k} a_k \mbox{ } \delta(M_k, e_k)$. 
            Moreover, we have that $\smstep[\Theta][A][n,l](M,e)=\sum_k a_k\mbox{ }\Dist_k$ where $\Dist_k=\smstep[\Theta][A][n,l_k](M_k,e_k)$ 
            and $l=1+\max_k l_k$. Therefore, we have that 
            $$
            \cost_n^{A,\Theta}(M,e)=\inf\limits_k \{ \smstep[\Theta][A][n,k](M,e) \neq \bot \}=l=1+\max_k l_k\geq 1+\max_k \cost_n^{A,\Theta}(M_k,e_k)
            $$
            Suppose now, by way of contradiction, that the last inequality is strict. That would mean that all the $l_k$'s having the highest value
            would be greater than  $\max_k \cost_n^{A,\Theta}(M_k,e_k)$, but this would go against the minimality of $l$, because they could all be decreased safely
            without incurring in $\bot$. We can thus conclude.
        \end{varitemize}
    \end{enumerate}
\end{proof}

Finally, Theorem~\ref{th:polysound} formalizes the key result that all well-typed term in $\lamBLL$ evaluates to normal form within a number of steps bounded by a polynomial in the size of its input. The proof proceeds by leveraging the strictly decreasing nature of the polynomial cost assignment along reduction steps (Lemma~\ref{lemma:descreasing}) and the preservation of typing and cost bounds under reduction (Theorem~\ref{th:SR}). 
 
\PolySoundness*
\begin{proof}
    By Theorem~\ref{th:progress} we have two cases:
    \begin{enumerate}
        \item \textit{Case $M=\return{V}$}:\\
        We have that $\cost_n^{A,\Theta}(M,e)=0$ and so the inequality is trivial.
        \item \textit{Case $\smstep[\Theta][A][n](M,e)=\sum_k a_k\mbox{ }\delta(M_k,e_k)$}:\\
        By Lemma~\ref{lemma:descreasing} we have that for every $k$ there exists a derivation $\rho_k \triangleright  \vdashc[\Theta] M_k:A$ such that $\poly{\pi}>\poly{\rho_k}$.
        Therefore, we have that 
        \begin{equation}
            \label{eq:polysound_1}
            1+\max_k\poly{\rho_k}(n) \leq \poly{\pi}(n).
        \end{equation}
        By Theorem~\ref{th:SR} case 2, we have that
        \begin{equation}
            \label{eq:polysound_2}
            \cost_n^{A,\Theta}(M,e)=1+\max_k \mbox{ }\cost_n^{A,\Theta}(M_k,e_k).
        \end{equation}
        Summing up, we have
        \begin{align*}
            \cost_n^{A,\Theta}(M,e) 
            &\overset{\text{Eq.}~\ref{eq:polysound_2}}{=} 1+\max_k\mbox{ }\cost_n^{A,\Theta}(M_k,e_k)\\
            &\hspace{2mm}\leq 1+\max_k\mbox{ }\poly{\rho_k}(n)\\
            &\overset{\text{Eq.}~\ref{eq:polysound_1}}{\leq}
            \poly{\pi}(n)
        \end{align*}
        and we can conclude.
    \end{enumerate}
\end{proof}
\section{Completeness}
\label{app:completeness} 
In Subsection~\ref{subsect:completeness}, we have described the PTM of interest, our goal is to establish the completeness theorem, namely to prove that, given a first-order interface $I=(p,q)$, every $(p,q)$-function $f$ computed by a PTM $\mathcal{M}$ working in polynomial time can be represented by a $\lamBLL$ term which computes it. To achieve this, we first present the encoding of the PTM within 
$\lamBLL$ and then outline the proof.

\subsection{Encoding of a PTM in $\lamBLL$}
First, we systematically encode the machine's components and its computation within $\lamBLL$ by using a minimal set of basic functions symbols. 

\subsubsection{Encoding of Configurations.}
In order to model the tapes and their content, we first need to encode them in a binary-friendly format. The tape alphabet of a PTM is $\tapeAlph=\{0,1,\blank\}$, whereas $\lamBLL$ operates exclusively on binary strings. To bridge this, we assign a fixed two-bit code to each tape symbol: $0$ is encoded as $00$, $1$ as $01$, and $\blank$ as $10$. In this way, the entire tape can be represented as a binary string composed of two-bit codes. Thus, the type $\typSymb$ is defined as just $\typSymb=\Str[1]\otimes\Str[1]$.

Let $r\in\N_{\geq 1}[i]$ be the runtime of the PTM we want to encode. Each tape in the configuration receives the following type in $\lamBLL$:
$$\typTape=\Str[2r] \tensor \typSymb \tensor \Str[2r]$$
where the first component corresponds to the portion of the tape to the left of the tape head, the second component encodes the symbol currently under the tape head, and the third component corresponds to the portion of the tape to the left of the tape head. Let $h=|Q|$ denote the number of states of the PTM, we encode the current state by a binary string of type $\typState=\Str[\lceil log_2\mbox{ }h \rceil]$, sufficient to represent all $h$ possible states.

Combining these components, a full configuration $C$ consisting of the current state and the contents of three tapes, can be represented in $\lamBLL$ by a value
$\enc{C}$ of the type
$$ \typConf=\typState \tensor \typTape \tensor \typTape \tensor \typTape .$$
 
\subsubsection{Encoding of Transition Functions.}
First, we encode the set of possible directions that the tape head can move. Let $\{L,R,S\}$ be the set of tape head directions. Analogously to the encoding of the tape alphabet, each tape head direction is assigned a fixed binary code. Specifically, moving left $L$ is encoded as $00$, moving right $R$ as $01$, and staying in place $S$ as $10$. 
This way, we get the type $\typDir=\Str[1]\otimes\Str[1]$.
With this encoding, we can represents the transition functions $\delta_0$ and $\delta_1$ of the PTM as two terms $\transitionTerm{0}$ and $\transitionTerm{1}$ in $\lamBLL$, typed as follows:
\begin{align*}
    \vdashv[\varnothing] \transitionTerm{b}:
    \left(
        \begin{aligned}
            &\typState\\
            \tensor & \typSymb\\
            \tensor & \typSymb
        \end{aligned}
    \right)
    \lollipop
     \left(
        \begin{aligned}
            &\typState\\
            \tensor & \typDir\\
            \tensor & \typSymb \tensor \typDir\\
             \tensor& \typSymb
        \end{aligned}
    \right)
\end{align*}
where $b\in\{0,1\}$. The two symbols in input come from the input and work tape, while in output we produce a direction for the input tape, a symbol and direction for the work tape, and a symbol for the output (where the blank symbol stands for the intention of not writing anything, and keep the head where it is).
A transition function $\delta_b$  with $b\in\{0,1\}$ can be seen as a cases inspection over the input tuple which returns a branch-specific action tuple. The term $\transitionTerm{b}$ with $b\in\{0,1\}$ simply implements this case inspection by nested if-then-else.

\subsubsection{Encoding Actions on Tapes.}
Given a tape represented as a tuple, our goal is to encode the operation of a PTM that writes a new symbol on the tape and moves the tape head either to the left or to the right. We want to express this entire process as $\lamBLL$ terms. 

First, we introduce three auxiliary function symbols that operate on binary strings. These symbols are defined with the following types:
\begin{align*}
    \typeof{\texttt{cyclicShiftL}}&: \Str[i] \rightarrow \Str[i]\\
    \typeof{\texttt{cyclicShiftR}}&: \Str[i] \rightarrow \Str[i]\\
    \typeof{\texttt{write}}&:\Str[i]\times\Str[1] \rightarrow \Str[i]\\
    \typeof{\texttt{read}}&:\Str[i] \rightarrow \Str[1]
\end{align*}
The function symbol $\texttt{cyclicShiftL}$ performs a left cyclic shift, moving each bit one position to the left and sending the first bit to the end of the string. Conversely, $\texttt{cyclicShiftR}$ performs a right cyclic shift, moving each bit one position to the right and placing the last bit at the beginning. For example, $\texttt{cyclicShiftL}(1100)=1001$ and $\texttt{cyclicShiftR}(1100)=0110$. The symbol $\texttt{write}$ models the act of writing on a string by replacing the first bit of a binary string with a chosen value. Complementarily, the function symbol $\texttt{read}$ encodes the reading operation, returning the first bit of a binary string. Together, these four primitives provide the essential building blocks for encoding the PTM's tape update mechanism as $\lamBLL$ terms, and are sufficient for our purposes.

The $\lamBLL$ term encoding the writing action on a tape is typed as
$$\vdashv[\varnothing] \writeSymbTerm : \typTape \tensor\typSymb\lollipop \typTape$$
indicating that it takes as input a pair consisting of the current tape configuration and a binary-encoded symbol, and produces a new tape with the symbol written at the head position. 
Its implementation follows the intuitive idea of destructuring the tape into its components, updating the current symbol using the auxiliary $\texttt{write}$ function on the binary encoding, and reconstructing the tape. Formally, it can be defined as follows:
\begin{align*}
    \writeSymbTerm \triangleq \mbox{ }&
    \lambda e.\\
    &\letin{\lis{t,\mathit{newsymb}}=e}{}\\
    &\letin{\lis{\mathit{leftTape},\mathit{symb},\mathit{rightTape}}=t}{}\\
    &\return{\lis{\mathit{leftTape}, \mathit{newsymb}, \mathit{rightTape}}}
\end{align*}
The $\lamBLL$ terms encoding the movement left and right of a tape head are typed as
\begin{align*}
    &\vdashv[\varnothing] \moveLTerm : \typTape \lollipop \typTape\\
    &\vdashv[\varnothing] \moveRTerm : \typTape \lollipop \typTape
\end{align*}
indicating that each term takes a tape as input and produces a new tape with the head moved one position to the left or right, respectively.
Formally, the definitions, shown in Figure~\ref{fig:moveHead}, implement these movements by applying the auxiliary cyclic shift and write/read operations to the left and right tape segments. The terms reorganize the tape's binary representation to reflect the head's new position while preserving the surrounding symbols and updating the relevant bits under the head.
\begin{figure}[h]
    \centering
    \makebox[\textwidth][c]{%
    \begin{minipage}{1.2\textwidth}
    \begin{subfigure}{0.45\textwidth}
        \begin{align*}
            \textit{MoveL}&\textit{eft} \mbox{ }\triangleq\\
            &\lambda t.\\
            &\letin{\lis{\mathit{leftTape},\mathit{symb},\mathit{rightTape}}=t}{}\\
            &\letin{\lis{s_1,s_2}=\mathit{symb}}{}\\
            &\letin{\mathit{tr}_1=\texttt{cyclicShiftR}(\mathit{rightTape})}{}\\
            &\letin{\mathit{tr}_2=\texttt{write}(\mathit{tr}_1,s_2)}{}\\
            &\letin{\mathit{tr}_3=\texttt{cyclicShiftR}(\mathit{tr}_2)}{}\\
            &\letin{\mathit{tr}_4=\texttt{write}(\mathit{tr}_3,s_1)}{}\\
            &\letin{g_2=\texttt{read}(\mathit{leftTape})}{}\\
            &\letin{\mathit{tl}_1=\texttt{write}(\mathit{leftTape},0)}{}\\
            &\letin{\mathit{tl}_2=\texttt{cyclicShiftR}(\mathit{tl}_1)}{}\\
            &\letin{g_1=\texttt{read}(\mathit{leftTape})}{}\\
            &\letin{\mathit{tl}_3=\texttt{write}(\mathit{tl}_2,1)}{}\\
            &\letin{\mathit{tl}_4=\texttt{cyclicShiftR}(\mathit{tl}_3)}{}\\
            &\return{\lis{\mathit{tl}_4,\lis{g_1,g_2},\mathit{tr}_4}}
        \end{align*}
    \caption{Term for moving the tape head to the left.}
    \label{subfig:moveL}
    \end{subfigure}
    \quad 
    \begin{subfigure}{0.45\textwidth}
        \begin{align*}
            \textit{MoveR}&\textit{ight} \mbox{ }\triangleq\\ 
            &\lambda t.\\
            &\letin{\lis{\mathit{leftTape},\mathit{symb},\mathit{rightTape}}=t}{}\\
            &\letin{\lis{s_1,s_2}=\mathit{symb}}{}\\
            &\letin{\mathit{tl}_1 = \texttt{cyclicShiftL}(\mathit{leftTape})}{}\\
            &\letin{\mathit{tl}_2 = \texttt{write}(\mathit{tl}_1,s_1)}{}\\
            &\letin{\mathit{tl}_3 = \texttt{cyclicShiftL}(\mathit{tl}_2)}{}\\
            &\letin{\mathit{tl}_4 = \texttt{write}(\mathit{tl}_3,s_2)}{}\\
            &\letin{g_1=\texttt{read}(\mathit{rightTape})}{}\\
            &\letin{\mathit{tr}_1 = \texttt{write}(\mathit{rightTape}, 1)}{}\\
            &\letin{\mathit{tr}_2 = \texttt{cyclicShiftL}(\mathit{tr}_1)}{}\\
            &\letin{g_2=\texttt{read}(\mathit{tr}_2)}{}\\
            &\letin{\mathit{tr}_3 = \texttt{write}(\mathit{tr}_3, 0)}{}\\
            &\letin{\mathit{tr}_4 = \texttt{cyclicShiftL}(\mathit{tr}_2)}{}\\
            &\return{\lis{\mathit{tl}_4,\lis{g_1,g_2},\mathit{tr}_4}}
        \end{align*}
    \caption{Term for moving the tape head to the right.}
    \label{subfig:moveR}
    \end{subfigure}
   
    \caption{Implementation of $\lamBLL$ terms for tape head movement. Each subfigure shows the step-by-step construction of the new tape configuration when moving left or right.}
    \label{fig:moveHead}
     \end{minipage}%
    }
\end{figure}

\subsubsection{Encoding a Step of Computation of the PTM.}
In order to encode a single step of computation of the PTM as a term in $\lamBLL$, we introduce a function symbol $\random$ typed as
\begin{align*}
    \typeof{\random}&: \Str[i]
\end{align*} 
which returns a randomly selected string. Out of it, one can define easily a term $\flipcoin$ having type
$$\vdashc[\varnothing] \flipcoin : \BB$$
and simulating the outcome of a fair coin toss. Moreover, one can easily define a term
$\equals$ which test whether two single-bit strings are equalt, and having type
$$
\vdashv[\varnothing] \equals : \Str[1]\tensor\Str[1]\lollipop\BB
$$ 

We begin by defining the $\lamBLL$ term $\moveHeadTerm$, which takes a tape and an encoded action to be performed on it, and updates both the tape's content and the head position accordingly, using previously defined terms.
Formally, $\moveHeadTerm$ is typed as:
$$\vdashv[\varnothing] \moveHeadTerm : \typTape \tensor \typDir\lollipop \typTape$$
The term in Figure~\ref{fig:moveHeads} inspects the encoding of the direction: if it represents left, it applies $\moveLTerm$; if right,  $\moveRTerm$; otherwise, the head remains in place. This modular design allows $\moveHeadTerm$ to handle any of the three tapes uniformly, without duplication of logic.

\begin{figure}
    \begin{align*}
        \moveHeadTerm \triangleq \mbox{ }&
        \lambda e.\\
        &\letin{\lis{t,d}=e}{}\\
        &\letin{\lis{d_1,d_2}=d}{}\\
        &\letin{b_1 = \texttt{equal}(d_1,0)}{} \\
        &\texttt{if } b_1 \texttt{ then}
        &&\text{//d is the encoding for the direction L or R}\\
        &\quad \letin{b_2 = \texttt{equal}(d_2,0)}{}\\
        &\quad \texttt{if } b_2 \texttt{ then } \return \moveLTerm \mbox{ } t 
        \qquad &&\text{//d is the encoding for the direction L }\\
        &\quad \texttt{else  } \return \moveRTerm \mbox{ } t  
        \qquad && \text{//d is the encoding for the direction R }\\
        &\texttt{else }  \return t
        &&\text{//d is the encoding for the direction S}
    \end{align*}
    \caption{Implementation of the $\lamBLL$ term that updates a tape based on the encoded head movement (left, right, or stay).}
    \label{fig:moveHeads}
\end{figure}

Using $\moveHeadTerm$ as subterm, we define the term $\evalTerm$, which executes a single transition of the PTM by applying a transition description to the current configuration. Its type is:
\begin{align*}
    \vdashv[\varnothing] \evalTerm :
    \left(\left(
        \begin{aligned}
            &\typState\\
            \tensor & \typDir\\
            \tensor & \typSymb\tensor\typDir\\
            \tensor & \typSymb
        \end{aligned}
    \right)
    \tensor \typConf \right) \lollipop \typConf
\end{align*} 
Given a transition and the current configuration, $\evalTerm$ constructs the next configuration by first writing the specified symbol on the work tape using $\writeSymbTerm$, then moving the heads of the input and work tapes according to the encoded directions using $\moveHeadTerm$, and finally conditionally writing a symbol on the output tape depending on the encoded output action. Figure~\ref{subfig:eval} shows the implementation of $\evalTerm$, which is carefully structured so that each subterm corresponds to a specific operational aspect of the PTM step.

Finally, the term $\stepPTMTerm$ defines the $\lamBLL$ encoding of a single probabilistic transition of the PTM, combining deterministic state evolution with random choice. Its type, 
$$\vdashv[\varnothing]\stepPTMTerm:\typConf\lollipop\typConf$$
indicates that the term takes a machine configuration as input and produces the next configuration resulting from one computation step. 

Operationally, $\stepPTMTerm$ first flips a coin using $\flipcoin$ to determine which transition function $\delta_b$ to apply. Depending on the coin's outcome, it retrieves the corresponding transition via $\transitionTerm{b}$ and delegates to $\evalTerm$ to compute the next configuration.
This construction provides a compositional encoding of the PTM’s probabilistic dynamics within $\lamBLL$, faithfully reflecting the behavior described by the \textit{STEP} rule in Definition~\ref{def:stepRelationPTM}.
\begin{figure}
    \begin{subfigure}{0.48\textwidth}
        \begin{align*}
        \evalTerm &\triangleq\\
        &\lambda e.\\
        &\letin{\lis{t,c}=e}{}\\
        &\letin{\lis{q',\mathit{symbW}',aI, aW, aO}=t}{}\\
        &\letin{\lis{q,i,w,o}=c}{}\\
        &\letin{\mathit{tapeW_1}=\writeSymbTerm \mbox{ } \lis{w,\mathit{symbW}'}}{}\\
        &\letin{\mathit{tapeI}=\moveHeadTerm \mbox{ } \lis{i,aI}}{}\\
        &\letin{\mathit{tapeW_2}=\moveHeadTerm \mbox{ } \lis{\mathit{tapeW_1},aW}}{}\\
        &\letin{\lis{aO_1,aO_2}=aO}{}\\
        &\letin{\mathit{test}=\texttt{equal}(aO_1,0)}{}\\
        &\texttt{if } \mathit{test} \texttt{ then}\\
        &\quad \letin{\mathit{symbOut}=\lis{0,aO_2}}{}\\
        &\quad \letin{\mathit{tapeO_1}=\writeSymbTerm \mbox{ }\lis{o,\mathit{symbOut}}}{}\\
        &\quad \return{\lis{q', \mathit{tapeI},\mathit{tapeW_2},\mathit{tapeO_1}}}\\
        &\texttt{else}\\
        &\quad \return{\lis{q', \mathit{tapeI},\mathit{tapeW_2},o}}
    \end{align*}
    \caption{Implementation of the $\lamBLL$ term that performs a PTM transition step given a transition description and a current configuration.}
    \label{subfig:eval}
    \end{subfigure}
    \mbox{ }
    \begin{subfigure}{0.48\textwidth}
        \begin{align*}
            \stepPTMTerm & \triangleq \\
            &\lambda c.\\
            &\letin{\lis{q,i,w,o}=c}{}\\
            &\letin{\lis{\mathit{leftI},\mathit{symbI},\mathit{rightI}}=i}{}\\
            &\letin{\lis{\mathit{leftW},\mathit{symbW},\mathit{rightW}}=w}{}\\
            &\letin{\lis{\mathit{leftO},\mathit{symbO},\mathit{rightO}}=o}{}\\
            &\letin{b=\flipcoin}{}\\
            &\texttt{if } b \texttt{ then}\\
            &\quad \letin{t=\transitionTerm{1} \mbox{ }\lis{q,\mathit{symbI}, \mathit{symbW}}}{}\\
            &\quad \evalTerm \mbox{ }\lis{t,c}\\
            &\texttt{else }\\
             &\quad \letin{t=\transitionTerm{0} \mbox{ }\lis{q,\mathit{symbI}, \mathit{symbW}}}{}\\
            &\quad \evalTerm \mbox{ }\lis{t,c}\\
        \end{align*}
        \caption{Implementation of the $\lamBLL$ term encoding a single probabilistic PTM step by sampling a random bit and applying the selected transition.}
        \label{subfig:step}
    \end{subfigure}
    \caption{Encoding of a single PTM computation step in $\lamBLL$: Subfigure~\ref{subfig:eval} depicts the deterministic transition execution, while Subfigure~\ref{subfig:step} incorporates probabilistic choice through $\flipcoin$.}
\end{figure}

\subsubsection{Encoding of Initial Configuration.} 
To represent the initial configuration of a PTM in $\lamBLL$, we begin by describing how the machine bootstrapped before any computation takes place. The PTM starts in its designated initial state, denoted by $q_0$, which identifies the control point from which the execution begins. The input tape contains the binary input string $s$, with the head positioned on its first symbol and the remaining part of the string extending to the right, followed by blanks.
Both the work tape and the output tape are initially empty: they contain only blank symbols. The work tape will serve as temporary storage during computation, while the output tape will be gradually filled with symbols as the machine produces its result.

We define the $\lamBLL$ term $\encodeInitialStateTerm$ to represent the PTM's initial state $q_0$ as a binary string of length $\lceil \log_2 h \rceil$. This is correct since $h$ is constant, so the encoding length is fixed and independent of the input. 
Formally, we have:
\[
\vdashv[\varnothing] \encodeInitialStateTerm: \UU \lollipop \typState
\]
\[
\encodeInitialStateTerm \triangleq 
\lambda \unusedvar. \return{\enc{q_0}}
\]
where $\enc{q}$ is the string encoding any state $q$.

To initialize blank tapes in $\lamBLL$, we define the term $\fillBlankStringTerm$ in Figure~\ref{subfig:fillBlankString}, which generates a string of fixed length $2r$ with alternating zeros and ones. This string serves as the building block for tapes. This term makes use of a nullary function symbol $\texttt{zero}$ returning a string of length $\Str[i]$ filled with zero. 

Using $\fillBlankStringTerm$, the term $\fillBlankTapeTerm$ constructs a full tape by initializing both the left and right halves and pairing them with the initial symbol set (see Figure~\ref{subfig:fillBlankTape}). Together, these terms ensure that the work and output tapes start in a well-structured blank state. Their types are: 
\begin{align*}
    &\vdashv[\varnothing] \fillBlankStringTerm:\UU \lollipop \Str[2r]\\
    &\vdashv[\varnothing] \fillBlankTapeTerm: \UU \lollipop \typTape
\end{align*}
\begin{figure}[h]
    \makebox[\textwidth][c]{%
    \begin{minipage}{1.2\textwidth}

    \begin{subfigure}{0.45\textwidth}
        \begin{align*}
            \textit{F}&\textit{illBlankString}\triangleq\\ 
            &\lambda \unusedvar.\\
            &\letin{s=\texttt{zeros}_{2r}}{}\\
            &\texttt{let }\lis{\mathit{o},\unusedvar}= 
            \left(
                \begin{aligned}
                    &\texttt{loop }\\
                    &\quad \lambda e.\\
                    &\quad \letin{\lis{s_1,b}=e}{}\\
                    &\quad \texttt{if } b \texttt{ then}\\
                    &\quad\mbox{ } \letin{s_2=\texttt{write}(s_1,1)}{}\\
                    &\quad\mbox{ } \letin{s_3=\texttt{cyclicShiftR}(s_2)}{}\\
                    &\quad\mbox{ } \return{\lis{s_3,\false}}\\
                    &\quad \texttt{else }\\
                    &\quad\mbox{ } \letin{s_2=\texttt{write}(s_1,0)}{}\\
                    &\quad\mbox{ } \letin{s_3=\texttt{cyclicShiftR}(s_2)}{}\\
                    &\quad\mbox{ } \return{\lis{s_3,\true}}\\
                    &2r \texttt{ times from } (\return \lis{s,\false})
                \end{aligned}
            \right)\texttt{ in}\\
            &\return{\mathit{o}}
        \end{align*}
        \caption{Term generating a blank string of length $2r$.}
        \label{subfig:fillBlankString}
    \end{subfigure}
    \mbox{ }
    \begin{subfigure}{0.4\textwidth}
        \begin{align*}
            \textit{F}&\textit{illBlankTape}\triangleq\\ 
            &\lambda \unusedvar.\\
            &\letin{\mathit{lTape}=\fillBlankStringTerm \ast}{}\\
            &\letin{\mathit{rTape}=\fillBlankStringTerm \ast}{}\\
            &\letin{\mathit{symb}=\lis{1,0}}{}\\
            &\return{\lis{\mathit{lTape},\mathit{symb},\mathit{rTape}}}\\[1.5cm]
        \end{align*}
        \caption{Term constructing a blank tape from \\two blank strings.}
        \label{subfig:fillBlankTape}
    \end{subfigure}
    \caption{Implementation of $\lamBLL$ terms for initializing blank tapes. Subfigure~\ref{subfig:fillBlankString} shows the generation of a blank string, and Subfigure~\ref{subfig:fillBlankTape} shows how these strings are combined to form a complete tape.}
    \label{fig:blankTapes}
    \end{minipage}%
}
\end{figure}
 \begin{figure} 
        \makebox[\textwidth][c]{%
        \begin{minipage}{1.2\textwidth}
        \begin{align*}
            \encodeInputStringTerm \triangleq\mbox{ }& \lambda \mathit{input}.\\
            &\letin{\mathit{enc}=\texttt{zeros}_{2p}}{}\\
            &\texttt{let }\lis{\mathit{out},\unusedvar}=
                \left(
                \begin{aligned}
                    &\texttt{loop }\\
                    &\quad \lambda l.\\
                    &\quad \letin{\lis{e,i}=l}{}\\
                    &\quad \letin{b = \texttt{read}(i)}{}\\ 
                    &\quad \letin{t= \texttt{equal}(b,1)}{}\\
                    &\quad \texttt{if } t \texttt{ then}\\
                    &\qquad \letin{e_1=\texttt{write}(e,0)}{}\\
                    &\qquad \letin{e_2=\texttt{cyclicShiftR}(e_1)}{}\\
                    &\qquad \letin{e_3=\texttt{write}(e_2,1)}{}\\
                    &\qquad \letin{e_4=\texttt{cyclicShiftR}(e_3)}{}\\
                    &\qquad \letin{i_1 = \texttt{cyclicShiftR}(i)}{}\\ 
                    &\qquad \return{\lis{e_4,i_1}}\\ 
                    &\quad \texttt{else }\\
                    &\qquad \letin{e_1=\texttt{write}(e,0)}{}\\
                    &\qquad \letin{e_2=\texttt{cyclicShiftR}(e_1)}{}\\
                    &\qquad \letin{e_3=\texttt{write}(e_2,0)}{}\\
                    &\qquad \letin{e_4=\texttt{cyclicShiftR}(e_3)}{}\\
                    &\qquad \letin{i_1 = \texttt{cyclicShiftR}(i)}{}\\ 
                    &\qquad \return{\lis{e_4,i_1}}\\ 
                    &p \texttt{ times from } (\return \lis{\mathit{enc},\mathit{input}})
                \end{aligned}
                \right)\texttt{ in}\\    
                &\return{\mathit{out}}
        \end{align*}
        \caption{Implementation of the $\lamBLL$ term encoding a binary string into its two-bit representation.}
        \label{fig:encodeString}
         \end{minipage}%
}
    \end{figure}
We can the proceed and define a term $\encodeInputStringTerm$ in $\lamBLL$ that encodes a binary input string according to the tape's binary encoding.  
Formally, this term has type
\[
\vdashv[\varnothing] \encodeInputStringTerm : \Str[p] \lollipop \Str[2p],
\]
indicating that it takes a binary string of length $p$ and returns a binary string of length $2p$, where each bit is mapped to its corresponding two-bit code used to represent tape symbols.
Figure~\ref{fig:encodeString} implements $\encodeInputStringTerm$ by iterating over the input string and encoding each bit as its corresponding two-bit code in the encoded string, using cyclic shifts to correctly advance the read position on the input string and the write position on the encoded string.

To initialize the input tape with the binary string, we define the $\lamBLL$ term $\initInputTapeTerm$ of type
\[
\vdashv[\varnothing] \textit{InitInputTape} : \Str[p] \lollipop \typTape
\]
indicating that it takes a binary input string of length $p$ and returns a tape. In this type, $p$ is the polynomial bounding the length of the PTM's input we are encoding. 
The term is in Figure~\ref{fig:initInputTape} and proceeds by initializing the input tape by first encoding the input string into the tape's binary representation using $\encodeInputStringTerm$ and creating a blank tape with $\fillBlankTapeTerm$, then sequentially writing each encoded symbol onto the right half of the tape. The tape head is placed on the first input symbol, while the remaining tape positions stay blank. Formally, the term iterates over the input symbols, writing each onto the tape and cyclically shifting the head to the right. The resulting configuration consists of the left tape, the current symbol under the head (the first input bit), and the right tape containing the rest of the input.

\begin{figure}[h]
        \begin{align*}
            \textit{Init}&\textit{InputTape}\triangleq\\
            &\lambda \mathit{input}.\\
            &\letin{\mathit{enc}=\encodeInputStringTerm \mbox{ }\mathit{input}}{}\\
            &\letin{t=\fillBlankTapeTerm \ast}{}\\
            &\letin{\lis{\mathit{leftTape},\mathit{symb},\mathit{rightTape}}=t}{}\\
            &\texttt{let }\lis{\mathit{trI},i}=
            \left(
            \begin{aligned}
                &\texttt{loop }\\
                &\quad \lambda e.\\
                &\quad \letin{\lis{\mathit{tape},i}=e}{}\\
                &\quad \letin{b=\texttt{read}(i)}{}\\
                &\quad \letin{i_1=\texttt{cyclicShiftR}(i)}{}\\
                &\quad \letin{t_1=\texttt{write}(\mathit{tape},b)}{}\\
                &\quad \letin{t_2=\texttt{cyclicShiftR}(t_1)}{}\\
                &\quad \return{\lis{t_2,i_1}}\\
                &(2p)-2 \texttt{ times from } (\return \lis{\mathit{rightTape},\mathit{enc}})
            \end{aligned}
            \right)\texttt{ in}\\            
            &\letin{s_2=\texttt{read}(i)}{}\\
            &\letin{i_1=\texttt{cyclicShiftR}(i)}{}\\
            &\letin{s_1=\texttt{read}(i_1)}{}\\
            &\return{\lis{\mathit{leftTape},\lis{s_1, s_2},\mathit{trI}}}
        \end{align*}
        \caption{Implementation of $\lamBLL$ term that initializes the input tape.}
        \label{fig:initInputTape} 
    \end{figure}

\begin{sloppypar}
Finally, given an input string, the term $\initInputTapeTerm$ initializes the input tape by positioning the head at the first symbol and filling the rest with blanks. Together, $\encodeInitialStateTerm$, $\initInputTapeTerm$, and $\fillBlankTapeTerm$ are combined into the term $Init$, which returns the complete initial configuration of the PTM, ready for step-wise simulation in $\lamBLL$.
Formally, the term $Init$ is of type
\[
\vdashv[\varnothing] \initTerm: \Str[p] \lollipop \typConf
\]
and it is implemented as follows
\end{sloppypar}
\begin{equation}\label{eq:initTerm}
    \begin{aligned}
        \textit{Init} \triangleq  \mbox{ } &
        \lambda i.\\
        &\letin{\mathit{inputTape}=\initInputTapeTerm\mbox{ }i}{}\\ 
        &\letin{\mathit{workTape}=\fillBlankTapeTerm\ast}{}\\
        &\letin{\mathit{outputTape}=\fillBlankTapeTerm\ast}{}\\
        &\letin{\mathit{state}=\encodeInitialStateTerm \ast}{}\\ 
        &\return{\lis{\mathit{state},\mathit{inputTape},\mathit{workTape},\mathit{outputTape}}}
    \end{aligned}
\end{equation}

\subsubsection{Encoding of Result Extraction.} We define the $\lamBLL$ term $\finalTerm$ to extract the output of a PTM computation. Its type expresses that it takes a full configuration and returns the corresponding output string:
\[
\vdashv[\varnothing] \finalTerm:  \typConf \lollipop \Str[q]
\]
where $q$ is the polynomial bounding the length of the PTM output.

\begin{figure}[h]
    \begin{align*}
        \decodeOutputStringTerm \triangleq\mbox{ }& \lambda \mathit{enc}.\\
        &\letin{\mathit{output}=\texttt{zeros}_{2q}}{}\\
        &\texttt{let }\lis{\unusedvar,\mathit{out}}=
                \left(
                \begin{aligned}
                    &\texttt{loop }\\
                    &\quad \lambda l.\\
                    &\quad \letin{\lis{o,e}=l}{}\\
                    &\quad \letin{b = \texttt{read}(e)}{}\\ 
                    &\quad \letin{t= \texttt{equal}(b,1)}{}\\
                    &\quad \texttt{if } t \texttt{ then}\\
                    &\qquad \letin{o_1=\texttt{write}(o,1)}{}\\
                    &\qquad \letin{o_2=\texttt{cyclicShiftR}(o_1)}{}\\ 
                    &\qquad \letin{e_1 = \texttt{cyclicShiftR}(e)}{}\\ 
                    &\qquad \letin{e_2 = \texttt{cyclicShiftR}(e_1)}{}\\ 
                    &\qquad \return{\lis{e_2,o_2}}\\ 
                    &\quad \texttt{else }\\
                    &\qquad \letin{o_1=\texttt{write}(o,0)}{}\\
                    &\qquad \letin{o_2=\texttt{cyclicShiftR}(o_1)}{}\\ 
                    &\qquad \letin{e_1 = \texttt{cyclicShiftR}(e)}{}\\ 
                    &\qquad \letin{e_2 = \texttt{cyclicShiftR}(e_1)}{}\\
                    &\qquad \return{\lis{e_2,o_2}}\\ 
                    &q \texttt{ times from } (\return \lis{\mathit{enc},\mathit{output}})
                \end{aligned}
                \right)\texttt{ in}\\    
            &\return{\mathit{out}}
    \end{align*}
    \caption{Implementation of the $\lamBLL$ term that decodes a binary string in the tape's two-bit encoding back into the original single-bit representation.}
    \label{fig:decodeOutput}
\end{figure}

First, we define a term $\decodeOutputStringTerm$ in $\lamBLL$ that converts a binary string from the tape's two-bit encoding back to its original form.
Formally, this term has type
\[
\vdashv[\varnothing] \decodeOutputStringTerm : \Str[2q] \lollipop \Str[q],
\]
indicating that it takes a binary string of length $2q$ and returns a binary string of length $q$, where each original bit is reconstructed from its corresponding two-bit code in the tape encoding.
Figure~\ref{fig:decodeOutput} implements $\decodeOutputStringTerm$ by iterating over the output tape and reconstructing the original single-bit string, using cyclic shifts to correctly advance the read position on the encoded string and the write position on the decoded output. 

The term $\finalTerm$ operates by taking a configuration and decomposing it into its components: the current state, the input tape, the work tape, and the output tape. From the output tape, it separates the left and right portions and initializes a blank string of length $2q$ using $\mathtt{zeros}_{2q}$. It then iteratively transfers the symbols from the output tape onto this blank string. At each step, it reads the current symbol from the tape, writes it into the output string, and cyclically shifts both the tape head and the output string head. After $2q$ iterations, the loop constructs the full binary-encoded output string from the tape's symbols, which is then passed to $\decodeOutputStringTerm$ to recover and return the output string. Formally, it is defined as follows

\begin{equation}\label{eq:finalTerm}
    \begin{aligned}
        \finalTerm \triangleq \mbox{ } 
        &\lambda c.\\
        &\letin{\lis{q,i,w,o}=c}{}\\
        &\letin{\lis{\mathit{leftTape},\mathit{symb},\mathit{rightTape}}=o}{}\\
        &\letin{s=\mathit{zeros}_{2q}}{}\\
        &\texttt{let }\lis{\mathit{out},\unusedvar}=
        \left(
            \begin{aligned}
                &\texttt{loop }\\
                &\quad \lambda e.\\
                &\quad \letin{\lis{s_1,t_1}=e}{}\\
                &\quad \letin{b=\texttt{read}(t_1)}{}\\
                &\quad \letin{s_2=\texttt{write}(s_1,b)}{}\\
                &\quad \letin{t_2=\texttt{cyclicShiftR}(t_1)}{}\\
                &\quad \letin{s_3=\texttt{cyclicShiftR}(s_2)}{}\\
                &\quad \return{\lis{s_3, t_2}}\\
                &2q \texttt{ times from } (\return \lis{s,\mathit{leftTape}})
            \end{aligned}
        \right) \texttt{ in}\\
        &\letin{\mathit{dec}=\decodeOutputStringTerm\mbox{ }out}{}\\
        &\return{\mathit{dec}}
    \end{aligned}
\end{equation}

\subsubsection{Putting Everything Together}

Finally, we complete the PTM encoding by defining a $\lamBLL$ term that represents the entire computation of any PTM $\mathcal{M}$ computing a $(p,q)$-function. This embedding is typed as
\[
\vdashv[\varnothing] E_\mathcal{M}: \Str[p] \lollipop \Str[q].
\]
The implementation of $E_\mathcal{M}$ proceeds as follows: given an input, it first constructs the initial configuration of the PTM using $\initTerm$. It then iteratively applies the PTM's step function $\stepPTMTerm$ for $r$ iterations, where $r \in \mathbb{N}_{\geq 1}[i]$ is a polynomial bounding the runtime of the PTM, producing the final configuration. Finally, the output is extracted from the final configuration using $\finalTerm$, which is then returned. Formally, the definition is the following:

\begin{equation}\label{eq:encodingPTM}
    \begin{aligned}
        E_\mathcal{M}\triangleq \mbox{ }&
        \lambda i.\\
        &\letin{\mathit{initialConfig}=\initTerm \mbox{ } i}{}\\
        &\letin{\mathit{finalConfig}=\left(
            \loopp{\stepPTMTerm}{r}{(\return{\mathit{initialConfig}})}
        \right)}{}\\
        &\letin{\mathit{out}=\finalTerm \mbox{ }\mathit{finalConfig}}{}\\
        &\return{\mathit{out}}
    \end{aligned}
\end{equation}

\newcommand{\ism}{\mathbf{ism}}
\newcommand{\PPT}{\mathit{PPT}}
\subsection{Proof of Completeness}

We now prove that the encoding above faithfully simulates the PTM operational semantics.  For a configuration $C$, we write $\enc{C}$ for its $\lamBLL$ encoding.  If $\mathcal{D}\in\mathcal{D}(\mathit{Conf})$, we write $\enc{\mathcal{D}}$ for the push-forward of $\mathcal{D}$ along the encoding, i.e.
\[
\enc{\mathcal{D}}(\enc{C})=\mathcal{D}(C).
\]
Since the terms used in the encoding do not access references, the store is always $\varnothing$ and will be left unchanged by evaluation.

\begin{lemma}[One-step simulation]\label{lemma:OneStepCompleteness}
	Let $C$ be a non-final configuration of a PTM $\mathcal{M}$ and let
	\[
	C\xRightarrow{1}\mathcal{D}.
	\]
	Then, for every security parameter $n$ for which $C$ is a configuration of $\mathcal{M}$ on inputs of size $p(n)$,
	\[
	\ism^{\varnothing;\typConf}_{n}
	(\stepPTMTerm\mbox{ }\enc{C},\varnothing)
	= \enc{\mathcal{D}}.
	\]
\end{lemma}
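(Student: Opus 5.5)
The plan is to unfold the evaluation of $\stepPTMTerm\mbox{ }\enc{C}$ step by step, using the compositional structure of $\ism$. $\ism$ of a $\mathtt{let}$ is obtained by binding the distribution of the first computation with the evaluation of the body, and deterministic subterms contribute Dirac distributions. I would first establish two auxiliary facts, both by direct computation on the small-step rules of Figure~\ref{fig:smallstep} (together with Theorem~\ref{th:polysound} to guarantee that every $\ism$ involved is defined).

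The first fact is a \emph{let-composition} property: if $\ism(N,\varnothing)=\sum_k a_k\delta(V_k,\varnothing)$, then $\ism(\letin{x=N}{M},\varnothing)=\sum_k a_k\,\ism(M[V_k/x],\varnothing)$. The analogous statements hold for application and for $\letin{\lis{x,y}=\lis{V,W}}{M}$.

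The second fact consists of \emph{correctness lemmas for the deterministic gadgets}: for every tape $T$, symbol $a$ and direction $d$,
\[
\ism(\writeSymbTerm\,\lis{\enc{T},\enc{a}},\varnothing)=\delta(\enc{T[a]},\varnothing), \qquad \ism(\moveHeadTerm\,\lis{\enc{T},\enc{d}},\varnothing)=\delta(\enc{T\cdot d},\varnothing),
\]
and $\ism(\transitionTerm{b}\,\lis{\enc{q},\enc{\alpha},\enc{\beta}},\varnothing)=\delta(\enc{\delta_b(q,\alpha,\beta)},\varnothing)$. For $\transitionTerm{b}$ this is a finite case analysis over the nested conditionals. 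For $\moveLTerm$ and $\moveRTerm$ one reasons on strings, using the semantics of $\texttt{cyclicShiftL}$, $\texttt{cyclicShiftR}$, $\texttt{read}$ and $\texttt{write}$. One also needs the convention that the encoded segments of length $2r(n)$ keep the cell adjacent to the head in the first two bit positions, with blanks ($10$) padding the far end. The hypothesis that $C$ is a configuration of a run of time at most $r$ on inputs of size $p(n)$ guarantees that the head never reaches the end of a segment, so the shifted-in bits are blanks and the encoding is faithful.

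From these I would prove $\ism(\evalTerm\,\lis{\enc{t},\enc{C}},\varnothing)=\delta(\enc{C'},\varnothing)$, where $C'$ is the configuration obtained by executing the action tuple $t$ on $C$. This follows by chaining the let-composition property through the body of $\evalTerm$ and splitting on the output-action test. Finally, for $\stepPTMTerm$: the destructuring lets are Dirac steps, and $\flipcoin$ yields $\tfrac12\delta(\true,\varnothing)+\tfrac12\delta(\false,\varnothing)$, assuming $\flipcoin$ is defined by reading one bit of $\random$. The conditional then selects $\transitionTerm{1}$ or $\transitionTerm{0}$. By let-composition and the $\evalTerm$ lemma, the result is $\tfrac12\delta(\enc{\delta_1(C)})+\tfrac12\delta(\enc{\delta_0(C)})$. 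By Definition~\ref{def:stepRelationPTM}, this is exactly $\enc{\mathcal{D}}$ when $C\xRightarrow{1}\mathcal{D}$, since the successors of a non-final configuration are reached after one step and are then final-or-stopped with $\xRightarrow{0}$ returning Dirac masses. The pushforward $\enc{\cdot}$ commutes with convex combinations, which closes the argument.

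I expect the main obstacle to be the correctness of $\moveLTerm$ and $\moveRTerm$. Here one must fix a precise invariant relating the bit strings $\Str[2r]$ to the tape halves, including the orientation of each half and where the two-bit codes sit after cyclic shifts. One must then check that the sequence of reads, writes and shifts in Figure~\ref{fig:moveHead} restores that invariant. The figure appears to contain small typos, e.g.\ $\texttt{write}(\mathit{tr}_3,0)$ with $\mathit{tr}_3$ undefined, and the order of reads versus writes. I would first state the invariant abstractly, prove it for a corrected version of the gadgets, and verify the bit-level equalities by a short symbolic computation on strings of the form $x_1x_2w$.
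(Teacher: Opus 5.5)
The outline of your proposal matches the paper's proof. You unfold $\stepPTMTerm\,\enc{C}$, get the fair split from $\flipcoin$, let $\transitionTerm{j}$ and $\evalTerm$ reduce branch $j$ to $\return\enc{C_j}$, and recombine through the distribution monad to obtain $\tfrac12\delta_{\enc{C_0}}+\tfrac12\delta_{\enc{C_1}}=\enc{\mathcal{D}}$. The paper only asserts that the gadgets behave as intended, so your let-composition property and gadget-correctness lemmas supply detail it leaves implicit. You are also right that Figure~\ref{fig:moveHead} cannot be used verbatim: besides the undefined $\mathit{tr}_3$, the left-segment updates in $\moveLTerm$ and $\moveRTerm$ read, write and shift in the wrong order. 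Proving the statement for repaired gadgets is therefore legitimate.

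The step that fails as you state it is the claim that being reached within time $r$ keeps every tape inside segments of $r(n)$ cells. This holds for the work and output tapes: their non-blank cells are among the at most $t+1$ cells visited so far, hence within distance $t$ of the head. It does not hold for the input tape, which carries $p(n)$ non-blank cells from the start.
\begin{itemize}
\item Already the initial configuration needs $p(n)\le r(n)+1$, and a machine running in time $r$ need not satisfy this.
\item On the paper's two-way tapes the input head may drift $t$ cells to the left of the input, so the right segment must hold up to $p(n)+t-1$ cells.
\item For $t\ge r(n)+1-p(n)$, which lies below the time bound as soon as $p(n)\ge 2$, a left move of the input head makes the gadget discard a non-blank input cell.
\end{itemize}
At that point the result is not $\enc{C_j}$; indeed $C_j$ has no encoding in $\typConf$ at $n$. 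The paper's proof is silent on this too, but your argument invokes the claim explicitly. You can repair it in either of two ways: make encodability of $C$ and of both its successors at $n$ an explicit hypothesis (the only sensible reading of the lemma's proviso on $n$), or give all segments length $2(p+r)$. With either change the rest of your plan goes through.
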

\begin{proof}
	Write $C=\langle q,i,w,o\rangle$, and let $a$ and $b$ be the symbols currently read on the input and work tapes. By Definition~\ref{def:stepRelationPTM}, one PTM step samples a uniform bit $j\in\{0,1\}$ and applies $\delta_j$ to $(q,a,b)$. Hence, if $C_j$ denotes the configuration obtained by executing the action prescribed by $\delta_j(q,a,b)$, then
	\[
	\mathcal{D}=\tfrac12\delta_{C_0}+\tfrac12\delta_{C_1}.
	\]
	
	Consider now $\stepPTMTerm\mbox{ }\enc{C}$. By Figure~\ref{subfig:step}, after the administrative reductions that unpack $\enc{C}$, the term evaluates $\flipcoin$. By the interpretation of $\flipcoin$ and the definition of $\ism$, the two branches are therefore selected with probability $1/2$ each. In branch $j$, the term $\transitionTerm{j}$ performs the finite case analysis implementing $\delta_j$, and consequently returns the encoding of the action $\delta_j(q,a,b)$. The term $\evalTerm$ then realizes that action: $\writeSymbTerm$ updates the symbol under the work-tape head, while $\moveHeadTerm$ (and hence $\moveLTerm$ or $\moveRTerm$ when required) updates the input and work tapes exactly as prescribed by the corresponding directions; the output action is handled analogously. Thus the evaluation of branch $j$ terminates in $\return{\enc{C_j}}$.
	
	The iterated small-step semantics $\ism$ collects the probabilities of these terminating branches. Therefore
	\[
	\ism^{\varnothing;\typConf}_{n}
	(\stepPTMTerm\mbox{ }\enc{C},\varnothing)
	=\tfrac12\delta_{\enc{C_0}}+\tfrac12\delta_{\enc{C_1}}
	=\enc{\mathcal{D}},
	\]
	as required.
\end{proof}

\begin{lemma}[Multi-step simulation]\label{lemma:StepLemmaCompleteness}
	Let $C$ be a configuration of a PTM $\mathcal{M}$ and suppose that
	\[
	C\xRightarrow{m}\mathcal{D}
	\]
	for some $m\geq 1$. Then
	\[
	\ism^{\varnothing;\typConf}_{n}
	\left(
	\loopp{\stepPTMTerm}{m}{(\return{\enc{C}})},
	\varnothing
	\right)
	=\enc{\mathcal{D}}.
	\]
\end{lemma}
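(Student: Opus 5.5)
We argue by induction on $m\geq 1$, combining the one-step simulation (Lemma~\ref{lemma:OneStepCompleteness}) with the way $\ism$ composes through $\mathtt{let}$. Throughout, the store stays $\varnothing$, since none of the terms involved touches references. The key auxiliary fact is a \emph{let-compositionality} property of $\ism$. Suppose $\ism(N,\varnothing)=\sum_k a_k\delta(V_k,\varnothing)$ and, for each $k$, $\ism(M[V_k/x],\varnothing)=\mathcal{E}_k$. Then
\[
\ism(\letin{x=N}{M},\varnothing)=\sum_k a_k\mathcal{E}_k .
\]
We prove this by induction on the step bound $\cost(N,\varnothing)$, which is finite by Theorem~\ref{th:polysound}. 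We unfold the $\mathtt{let}$ clause of Figure~\ref{fig:smallstep}: while $N$ is not a return, each step of the $\mathtt{let}$ is a step of $N$ lifted through $\bind_{\D_\bot}$; once $N=\return V_k$, one more step yields $M[V_k/x]$. We then use the inference rule defining $\smstep[\Theta][A][n,k]$, which sums the branch distributions weighted by $a_k$. Branches of different lengths cause no problem, because the iterated semantics allows padding up to $1+\max_k l_k$.

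For the base case $m=1$, the loop rule gives in one deterministic step
\[
\loopp{\stepPTMTerm}{1}{(\return\enc{C})}\to\letin{x=\return\enc{C}}{N},
\]
where $\stepPTMTerm=\lambda x.N$. A further step yields $N[\enc{C}/x]$. This is also reached from $\stepPTMTerm\,\enc{C}$ in two steps: one $\mathtt{return}$/application step, since $\stepPTMTerm$ is a value used as $\return\stepPTMTerm$, and one $\beta$-step. Because deterministic $\delta$-steps do not change $\ism$, the loop and $\stepPTMTerm\,\enc{C}$ have the same $\ism$. Two cases then arise.
\begin{itemize}
\item If $C$ is non-final, Lemma~\ref{lemma:OneStepCompleteness} gives $\enc{\mathcal{D}}$ directly.
\item If $C$ is final, $\mathcal{D}=\delta_C$ by rule \textsc{Final}. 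Here I need that $\stepPTMTerm$ fixes encodings of final configurations. I would secure this by assuming, as is standard, that $\delta_0,\delta_1$ on $q_f$ rewrite the read symbols, keep the state, and stay in place.
\end{itemize}

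For the inductive step $m=k+1$, the loop rule rewrites the term deterministically to
\[
\letin{x=\loopp{\stepPTMTerm}{k}{(\return\enc{C})}}{N}.
\]
Suppose first that $C$ is non-final. Rule \textsc{Step} gives $C\xRightarrow{1}\tfrac12\delta_{C_0}+\tfrac12\delta_{C_1}$ and $C_j\xRightarrow{k}\mathcal{D}_j$ with $\mathcal{D}=\tfrac12\mathcal{D}_0+\tfrac12\mathcal{D}_1$. The loop, however, performs the \emph{last} iteration outermost. So it is more convenient to first prove the reassociated decomposition: $C\xRightarrow{k+1}\mathcal{D}$ implies $C\xRightarrow{k}\mathcal{D}'$ with $\mathcal{D}=\sum_{C'}\mathcal{D}'(C')\cdot\mathcal{D}_{C'}$, where $C'\xRightarrow{1}\mathcal{D}_{C'}$. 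This is a routine induction on $k$ over Definition~\ref{def:stepRelationPTM}, and it also covers final configurations via \textsc{Final}. With it in hand, the induction hypothesis gives $\ism$ of the inner loop as $\enc{\mathcal{D}'}$. Let-compositionality then reduces the goal to $\ism(N[\enc{C'}/x],\varnothing)=\enc{\mathcal{D}_{C'}}$, which is exactly the base case. Pushing forward along $\enc{\cdot}$ commutes with convex sums, so the result is $\enc{\mathcal{D}}$.

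The main obstacle I expect is the treatment of final configurations: nothing stated so far guarantees that $\stepPTMTerm$ leaves $\enc{C}$ unchanged when $C$ is final. I would first try to make this hold by construction, through the convention on $\delta_b(q_f,\cdot)$ above. Failing that, I would add a test on the state inside $\stepPTMTerm$ that returns $c$ unchanged when $q=\enc{q_f}$. The second delicate point is let-compositionality: the padding behaviour of $\smstep[\Theta][A][n,k]$ over branches of unequal length must be handled carefully, and I would lean on the argument used in Theorem~\ref{th:SR}, part 2.
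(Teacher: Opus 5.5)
Your plan is the paper's own argument: induction on $m$, peel off the last iteration (which the loop rule places outermost), apply the induction hypothesis to the inner loop and the one-step lemma to each intermediate configuration, then compose through the monad. You make it more careful in two useful ways. Your let-compositionality lemma is exactly what the paper's ``$\mathbf{ism}$ combines probabilistic branches by the distribution monad'' leaves implicit. Your remark on final configurations is correct and closes a real hole: the paper's base case and inductive step both apply Lemma~\ref{lemma:OneStepCompleteness}, which is stated for non-final $C$ only, to configurations that may be final.

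The step that does not go through as written is the reassociation ``$C\xRightarrow{k+1}\mathcal{D}$ implies $C\xRightarrow{k}\mathcal{D}'$ with $\mathcal{D}=\sum_{C'}\mathcal{D}'(C')\,\mathcal{D}_{C'}$'', which you call a routine induction over Definition~\ref{def:stepRelationPTM}.
\begin{itemize}
\item In that definition every leaf of a derivation is an instance of \textsc{Final}. So $C\xRightarrow{k}\mathcal{E}$ holds only if \emph{every} branch from $C$ has reached a final configuration within $k$ steps.
\item Take $C$ and both of its successors non-final, with all grandchildren final. Then $C\xRightarrow{2}\mathcal{D}$ holds, but no $\mathcal{D}'$ satisfies $C\xRightarrow{1}\mathcal{D}'$, so the induction hypothesis cannot be applied to the inner loop.
\item For the same reason, your ``$C\xRightarrow{1}\tfrac12\delta_{C_0}+\tfrac12\delta_{C_1}$'' and ``$C'\xRightarrow{1}\mathcal{D}_{C'}$'' are derivable only when the configuration is final or both its successors are.
\end{itemize}
The paper makes the same slip: it feeds ``the distribution reached after the first $m$ steps'' to the induction hypothesis. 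Either way, the argument has to be repaired by separating the two notions.
\begin{enumerate}
\item Let $K$ be the one-step kernel with final configurations absorbing (this is your $q_f$ convention), and let $K^k$ be its $k$-th iterate.
\item Prove $\mathbf{ism}(\loopp{\stepPTMTerm}{k}{(\return\enc{C})},\varnothing)=\enc{K^k(C)}$ for all $C$ by your induction. It needs two facts:
\begin{itemize}
\item $K^{k+1}(C)=\sum_{C'}K^k(C)(C')\,K(C')$, which is just associativity of bind;
\item $\mathbf{ism}(\stepPTMTerm\,\enc{C'},\varnothing)=\enc{K(C')}$ for \emph{every} $C'$, which is what the proof of the one-step lemma plus your convention actually gives.
\end{itemize}
\item Show, by induction on derivations, that $C\xRightarrow{m}\mathcal{D}$ implies $\mathcal{D}=K^m(C)$. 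The \textsc{Step} case uses $K^{n+1}(C)=\tfrac12K^n(\delta_0(C))+\tfrac12K^n(\delta_1(C))$, which holds for every $C$ under your convention. This also covers \textsc{Step} firing at a final $C$, since the rule has no non-finality side condition.
\end{enumerate}
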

\begin{proof}
	We proceed by induction on $m$.
	
	For $m=1$, unfolding the operational rule for $\loopp{\stepPTMTerm}{1}{(\return{\enc{C}})}$ first reduces the loop to the application of $\stepPTMTerm$ to $\enc{C}$. Lemma~\ref{lemma:OneStepCompleteness} then gives
	\[
	\ism^{\varnothing;\typConf}_{n}
	\left(
	\loopp{\stepPTMTerm}{1}{(\return{\enc{C}})},
	\varnothing
	\right)
	=\enc{\mathcal{D}}.
	\]
	
	Suppose now that the statement holds for $m$, and assume
	$C\xRightarrow{m+1}\mathcal{D}$. Let $\mathcal{D}'$ be the distribution of configurations reached after the first $m$ PTM steps. The $(m+1)$-step distribution is obtained by applying the one-step PTM transition to every configuration in $\mathcal{D}'$ and combining the resulting distributions with the corresponding weights. By the induction hypothesis,
	\[
	\ism^{\varnothing;\typConf}_{n}
	\left(
	\loopp{\stepPTMTerm}{m}{(\return{\enc{C}})},
	\varnothing
	\right)
	=\enc{\mathcal{D}'}.
	\]
	Unfolding once the operational rule for the $(m+1)$-fold loop evaluates one further application of $\stepPTMTerm$ on each value in this distribution. By Lemma~\ref{lemma:OneStepCompleteness}, for every configuration $C'$ in the support of $\mathcal{D}'$, that application yields precisely the encoding of the one-step PTM distribution from $C'$. Since $\ism$ combines probabilistic branches by the distribution monad, the resulting distribution is exactly $\enc{\mathcal{D}}$. This proves the inductive case.
\end{proof}

\Completeness*
\begin{proof}
	We prove the two inclusions separately.
	
	For soundness, let $I=(p,q)$ and let $M$ be a closed $\lamBLL$ value with
	\[
	\vdashv[\varnothing] M:\Str[p]\lollipop\Str[q].
	\]
	By Corollary~\ref{cor:polytimesoundness}, the operational semantics of $M$ can be simulated by a PTM in polynomial time. Hence every $I$-function computed by a $\lamBLL$ term belongs to $\PPT(I)$.
	
	For the converse inclusion, let $f\in\PPT(I)$ and let $\mathcal{M}$ be a PTM computing $f$ in polynomial time. Let $r\in\mathbb{N}_{\geq1}[i]$ be a polynomial time bound for $\mathcal{M}$. Consider the term $E_{\mathcal{M}}$ defined in Equation~\ref{eq:encodingPTM}. By the typings established for $\initTerm$, $\stepPTMTerm$, and $\finalTerm$, together with the typing rule for iteration, we obtain
	\[
	\vdashv[\varnothing]E_{\mathcal{M}}:\Str[p]\lollipop\Str[q].
	\]
	
	Fix $n\geq1$ and an input $s\in\{0,1\}^{p(n)}$, and let $C_{(1^n,s)}$ be the initial configuration of $\mathcal{M}$. By direct inspection of Equation~\ref{eq:initTerm}, evaluating $\initTerm\mbox{ }s$ yields $\enc{C_{(1^n,s)}}$: the state is $q_0$, the input tape contains the two-bit encoding of $s$ with its head on the first symbol, and the work and output tapes are blank. Since $\mathcal{M}$ works in time $r$, there is a distribution $\mathcal{D}$ such that
	\[
	C_{(1^n,s)}\xRightarrow{r(n)}\mathcal{D}.
	\]
	Lemma~\ref{lemma:StepLemmaCompleteness} therefore gives
	\[
	\ism^{\varnothing;\typConf}_{n}
	\left(
	\loopp{\stepPTMTerm}{r}{(\return{\enc{C_{(1^n,s)}}})},
	\varnothing
	\right)
	=\enc{\mathcal{D}}.
	\]
	Finally, by the definition of $\finalTerm$ in Equation~\ref{eq:finalTerm}, each encoded final configuration $\enc{C}$ is mapped to the string written on the output tape of $C$. Consequently, applying $\finalTerm$ to $\enc{\mathcal{D}}$ yields the push-forward of $\mathcal{D}$ to output strings. By the assumption that $\mathcal{M}$ computes $f$, this distribution assigns probability $f_n(s)(t)$ to every $t\in\{0,1\}^{q(n)}$. Hence
	\[
	\ism^{\varnothing;\Str[q]}_{n}
	((E_{\mathcal{M}})_n\,s,\varnothing)
	= (f_n(s),\varnothing),
	\]
	so $E_{\mathcal{M}}$ computes $f$.
	
	Thus $\PPT(I)$ is contained in the class of $I$-functions computed by $\lamBLL$ terms, while the reverse inclusion follows from soundness. Therefore the two classes coincide.
\end{proof}

\end{document}